\documentclass[11pt]{article}

\usepackage[margin=1in]{geometry}
\usepackage{amsmath,amssymb,amsfonts,amsthm,mathtools}
\usepackage{graphicx}
\usepackage{booktabs}
\usepackage[caption=false]{subfig}
\usepackage[numbers,sort&compress]{natbib}
\bibpunct[, ]{[}{]}{,}{n}{,}{,}

\usepackage[hidelinks]{hyperref}
\usepackage[nameinlink,capitalize]{cleveref}

\theoremstyle{plain}
\newtheorem{theorem}{Theorem}[section]
\newtheorem{lemma}[theorem]{Lemma}
\newtheorem{corollary}[theorem]{Corollary}
\newtheorem{proposition}[theorem]{Proposition}

\theoremstyle{definition}
\newtheorem{assumption}[theorem]{Assumption}
\newtheorem{definition}[theorem]{Definition}

\theoremstyle{remark}
\newtheorem{remark}[theorem]{Remark}

\crefname{assumption}{Assumption}{Assumptions}
\Crefname{assumption}{Assumption}{Assumptions}
\crefname{definition}{Definition}{Definitions}
\Crefname{definition}{Definition}{Definitions}

\DeclareMathOperator*{\argmax}{arg\,max}
\DeclareMathOperator{\KL}{KL}
\DeclareMathOperator{\dist}{dist}
\newcommand{\R}{\mathbb{R}}
\newcommand{\E}{\mathbb{E}}
\newcommand{\Prob}{\mathbb{P}}
\newcommand{\bbF}{\mathbb{F}}
\newcommand{\cF}{\mathcal{F}}
\newcommand{\cQ}{\mathcal{Q}}
\newcommand{\cP}{\mathcal{P}}
\newcommand{\cA}{\mathcal{A}}
\newcommand{\one}{\mathbf{1}}
\newcommand{\norm}[1]{\left\lVert #1\right\rVert}
\newcommand{\abs}[1]{\left\lvert #1\right\rvert}

\title{Entropy-Regularized Certainty-Equivalent Bellman Policies for Risk-Sensitive Market Making}
\author{Tenghan Zhong\\
Department of Mathematics, University of Southern California, Los Angeles, CA, USA\\
\texttt{tenghanz@usc.edu}}
\date{}

\begin{document}
\maketitle\begin{abstract}
We study a finite-inventory risk-sensitive market making problem in which a
dealer controls bid and ask quotes, faces Brownian midprice risk, and receives
liquidity-taking orders through point processes with quote-dependent
intensities. The objective is the certainty equivalent induced by exponential
utility with terminal and running inventory penalties. We introduce an exact
discrete entropy-regularized Bellman operator that applies log-sum-exp
regularization to deterministic-action certainty-equivalent scores, rather than
to a risk-neutral one-step reward. This distinction is essential because the
exponential certainty equivalent does not commute with quote randomization.

For time step \(h\) and entropy parameter \(\lambda\), we prove uniform
convergence to the unregularized continuous-time risk-sensitive value at rate
\[
    O\bigl(h+\lambda(1+|\log\lambda|)\bigr).
\]
We also prove certainty-equivalent performance bounds for the induced Gibbs
policies under a fresh-sampling relaxed implementation, in which quote marks are
sampled at potential fill events rather than frozen over a time step. Under a
quadratic growth condition on the Hamiltonian in the relevant quote coordinates,
these policies concentrate around the unregularized optimal quote set. Finally,
we show that a lower-cost Hamiltonian-Gibbs proxy satisfies a
certainty-equivalent performance bound of the same order as the exact Bellman
Gibbs policy. Numerical experiments in an Avellaneda--Stoikov specification
support the predicted scaling for discretization error, entropy bias, policy
gap, quote concentration, and exact-versus-proxy consistency.
\end{abstract}

\noindent\textbf{Keywords.}
market making, risk-sensitive control, entropy regularization, Bellman equation,
Gibbs policy, relaxed control, finite inventory

\medskip
\noindent\textbf{MSC codes.}
93E20, 91G80, 60J76, 49L20, 65K15

\section{Introduction}
\label{sec:introduction}

Market makers provide liquidity by continuously posting bid and ask quotes. Their
profit comes from spread capture, but this profit is exposed to execution
uncertainty, adverse inventory accumulation, and midprice risk. A quote posted
too close to the midprice increases execution frequency but also increases the
risk of accumulating an undesirable position. A quote posted too far away reduces
inventory pressure but also sacrifices fills. This tradeoff is the central
mechanism in stochastic-control models of market making
\cite{HoStoll1981,AvellanedaStoikov2008,GueantLehalleFernandezTapia2013,CarteaJaimungalPenalva2015}.

The canonical continuous-time formulation is the Avellaneda--Stoikov model
\cite{AvellanedaStoikov2008}. In that model, the midprice is driven by a
Brownian motion, buy and sell market orders arrive according to quote-dependent
Poisson intensities, and the dealer maximizes an exponential-utility criterion.
Subsequent work, especially Gu\'eant, Lehalle, and Fernandez-Tapia
\cite{GueantLehalleFernandezTapia2013}, showed how finite-inventory versions of
this problem can be reduced to systems of ordinary differential equations and
how the resulting optimal quotes can be characterized under inventory
constraints. These models are analytically tractable and financially
interpretable, which is why they remain a useful benchmark for both
stochastic-control and reinforcement-learning approaches to market making.

This paper revisits this finite-inventory risk-sensitive market-making problem
from the perspective of entropy-regularized Bellman policies. Entropy
regularization has become a standard device in reinforcement learning
\cite{Ziebart2010,GeistScherrerPietquin2019,HaarnojaEtAl2018} because it
replaces hard maximization by a smooth log-sum-exp operator and produces Gibbs
policies rather than deterministic selectors. In a risk-neutral Markov decision
problem, this operation is relatively transparent: one adds an entropy bonus to
the reward or, equivalently, penalizes relative entropy with respect to a
reference action law. In a risk-sensitive problem, however, randomization is more
delicate. The objective is not a linear expectation of rewards, but an outer
exponential certainty equivalent. Therefore, one must distinguish between
regularizing deterministic-action certainty-equivalent scores and first
randomizing a quote before applying the outer certainty equivalent. These two
operations do not commute.

This paper develops a model-based CE-score Bellman approximation theory for
randomized quoting policies in a finite-inventory point-process market-making
model. Its object is the exact one-step certainty-equivalent Bellman operator
and the induced Gibbs quoting policy, rather than a model-free \(q\)-learning
algorithm for exploratory diffusions. The main contribution of the paper is an
exact discrete entropy-regularized Bellman construction that respects this
distinction. For a fixed deterministic quote \(\delta\) over one
time step, we define the one-step certainty-equivalent score
\(C_h^\delta\varphi(q)\). We then apply log-sum-exp regularization to these
deterministic-action scores. In the following display, \(h>0\) is the time step,
\(\lambda>0\) is the entropy parameter, \(A\) is the quote set, \(\nu\) is the
reference quote law, \(\varphi\) is a continuation value, and \(q\) is the
current inventory state:
\[
    (T_h^\lambda\varphi)_q
    =
    h\lambda
    \log\int_A
    \exp\left(\frac{C_h^\delta\varphi(q)}{h\lambda}\right)\nu(d\delta).
\]
The normalization by \(h\lambda\) is essential. It makes the discrete operator
consistent with the entropy-regularized Hamiltonian for the
deterministic-action Hamiltonian \(H_q(y,\delta)\) and a value vector \(y\):
\[
    H_q^\lambda(y)
    =
    \lambda\log\int_A
    \exp\left(\frac{H_q(y,\delta)}{\lambda}\right)\nu(d\delta)
\]
in the continuous-time limit. The operator \(T_h^\lambda\) is therefore not an
ad hoc softmax approximation; it is an exact one-step certainty-equivalent
Bellman operator followed by entropy regularization at the score level.

Our first result is a quantitative hard-limit theorem. Let \(v^{h,\lambda}\) be
the discrete value generated by \(T_h^\lambda\), and let \(v^0\) be the hard
continuous-time risk-sensitive value. With grid times \(t_n=nh\), terminal index
\(N\), and finite inventory set \(\cQ\), we prove in \Cref{thm:main} that
\[
    \max_{0\le n\le N}\max_{q\in\cQ}
    |v_{n,q}^{h,\lambda}-v_q^0(t_n)|
    \le
    C\bigl[h+\lambda(1+|\log\lambda|)\bigr].
\]
The \(O(h)\) term is the time-discretization error. The
\(\lambda(1+|\log\lambda|)\) term is the entropy bias. The logarithmic factor is
not a numerical artifact; it comes from lower bounding the mass of a small
neighborhood of an optimizer in the two-dimensional quote rectangle. The bound
is uniform over time and inventory states.

The second contribution is a policy-performance theorem. The Gibbs density
associated with \(T_h^\lambda\) is a randomized quote policy. Because the
risk-sensitive objective is nonlinear, its continuous-time interpretation must
be specified carefully. We use a fresh-sampling relaxed-control implementation:
at each potential ask or bid fill, the quote mark is sampled from the current
state-dependent policy, and the marked point-process compensator is averaged
over this policy. Writing \(V^*\) for the hard value, \(J(\cdot;\pi)\) for the
certainty equivalent under a policy \(\pi\), and \(\pi^{\rm ex}\) for the exact
Bellman Gibbs policy, we prove the performance gap
\[
    0\le V^*(t,x,s,q)-J(t,x,s,q;\pi^{\rm ex})
    \le
    C\bigl[h+\lambda(1+|\log\lambda|)\bigr].
\]
This result is not merely a value-convergence statement. It shows that the
randomized policy induced by the discrete entropy-regularized Bellman operator
attains an \(O(h+\lambda(1+|\log\lambda|))\) certainty-equivalent performance
gap under the corresponding continuous-time relaxed market-making
implementation.

The third contribution connects the exact Bellman policy to a lower-cost
Hamiltonian-Gibbs proxy. Computing \(C_h^\delta\varphi(q)\) exactly for many
quotes is more expensive than evaluating the Hamiltonian \(H_q(\varphi,\delta)\).
For this reason, with \(\widehat v^{h,\lambda}\) denoting the Euler
discretization of the soft Hamiltonian ODE, the computationally practical policy
used in our large-scale experiments is the proxy
\[
    \pi_{n,q}^{\rm Ham}(d\delta)
    \propto
    \exp\left(
        \frac{H_q(\widehat v_{n+1}^{h,\lambda},\delta)}{\lambda}
    \right)\nu(d\delta),
\]
We prove that this proxy satisfies the same fresh-sampling performance bound as
the exact Bellman Gibbs policy. Thus the computationally practical policy has
its own certainty-equivalent performance guarantee, rather than only an ex post
comparison with the exact Bellman Gibbs policy.

The fourth contribution is a quote-concentration result. Under an
active-coordinate quadratic growth condition, the Gibbs policies concentrate
around the hard optimal quote set. More precisely, the expected squared
active-coordinate distance between a randomized quote and the hard optimizer is
bounded by the same scale
\[
    C\bigl[h+\lambda(1+|\log\lambda|)\bigr].
\]
For exponential arrival intensities, we give a verifiable curvature certificate
that implies this quadratic growth condition. This result translates value and
policy convergence into a direct statement about quoted bid and ask spreads.

Finally, the numerical experiments are designed to test the theory rather than
only to display profitability. We separately test the \(O(h)\) discretization
component, the entropy-bias scale, the fresh-sampling policy-performance gap,
the active-coordinate quote-concentration bound, and the consistency between the
exact Bellman Gibbs policy and the Hamiltonian-Gibbs proxy. The observed rates
match the predicted scales and support the use of the Hamiltonian-Gibbs policy
as a scalable implementation of the exact Bellman construction.

The paper is organized as follows. \Cref{sec:related-work} reviews related work
on stochastic-control market making, risk-sensitive control,
entropy-regularized reinforcement learning, relaxed-control limits, and
reinforcement learning for market making. \Cref{sec:model-value} formulates the
finite-inventory risk-sensitive market-making model, defines strict admissible
systems, and introduces the hard and soft Hamiltonian systems.
\Cref{sec:ce-score-scheme} introduces the exact discrete CE-score entropy
Bellman operator and its variational representation. \Cref{sec:main-convergence}
states the main convergence theorem and gives the proof strategy; the detailed
auxiliary estimates, verification, entropy-bias argument, finite-state
Feynman--Kac formula, and one-step consistency proof are collected in
\Cref{app:auxiliary,app:soft-hard,app:fk-consistency}. \Cref{sec:fresh-sampling}
states the fresh-sampling performance bounds for the exact Bellman Gibbs policy
and the Hamiltonian-Gibbs proxy, with proofs in \Cref{app:policy-proofs}.
\Cref{sec:quote-concentration} establishes quote concentration and gives a
curvature certificate for exponential arrival intensities, with proof details in
\Cref{app:quote-proofs}. \Cref{sec:numerics} presents numerical experiments that
validate the value, policy, quote-concentration, and exact-vs-proxy consistency
bounds. \Cref{sec:conclusion} concludes and discusses extensions.

\section{Related Work}
\label{sec:related-work}

\paragraph{Stochastic-control models of market making.}
The mathematical analysis of dealer pricing and inventory control begins with
Ho and Stoll \cite{HoStoll1981}, who studied the dealer's optimal bid and ask
prices under inventory and return uncertainty. Avellaneda and Stoikov
\cite{AvellanedaStoikov2008} introduced a tractable high-frequency
limit-order-book model in which the dealer controls bid and ask spreads, the
midprice follows a Brownian motion, and market order arrivals are modeled by
Poisson processes whose intensities depend on the quoted distances from the
midprice. Gu\'eant, Lehalle, and Fernandez-Tapia
\cite{GueantLehalleFernandezTapia2013} developed a finite-inventory version of
this framework and showed that the HJB system can be transformed into a system
of linear ODEs under exponential utility. Cartea, Jaimungal, and Penalva
\cite{CarteaJaimungalPenalva2015} provide a broader stochastic-control
treatment of algorithmic and high-frequency trading, including market making,
optimal execution, and inventory-aware quoting.

Several extensions of the classical model incorporate richer state dynamics,
signals, market impact, and latency. Fodra and Labadie
\cite{FodraLabadie2012} consider market making with inventory constraints and
directional bets under more general midprice dynamics. Cartea, Jaimungal, and
Ricci \cite{CarteaJaimungalRicci2014} introduce mutually exciting order-flow
features into high-frequency trading models. Cartea and Wang
\cite{CarteaWang2020} study market making with alpha signals, showing how
short-term information can be used to manage adverse selection and inventory
risk. Cartea and S\'anchez-Betancourt \cite{CarteaSanchezBetancourt2021}
analyze latency and the shadow price of improving fill ratios. More recent
work, such as Ch\'avez-Casillas, Figueroa-L\'opez, Yu, and Zhang
\cite{ChavezCasillasEtAl2024}, studies adaptive optimal market-making
strategies with inventory liquidation costs and data-driven demand functions.

The present paper remains close to the analytically transparent
finite-inventory Avellaneda--Stoikov setting, but it adds a different layer: a
risk-sensitive entropy-regularized Bellman policy with quantitative
soft-to-hard, discrete-to-continuous, and policy-performance guarantees. The
main novelty is therefore not a new microstructure state variable, but the
construction and analysis of an entropy-regularized Bellman policy that is
compatible with the nonlinear certainty-equivalent objective.

\paragraph{Risk-sensitive stochastic control.}
The exponential utility criterion used in market-making models is a standard
risk-sensitive control objective. Classical risk-sensitive control studies show
that exponential criteria lead to nonlinear dynamic programming equations and
are closely related to robustness under model uncertainty
\cite{Whittle1990,FlemingMcEneaney1995,BieleckiPliska1999}. In the present
setting, this nonlinearity is responsible for a key technical issue:
randomization before the outer exponential certainty equivalent is not the same
as entropy regularization of deterministic-action certainty-equivalent scores.
Our discrete operator is designed around this distinction. The proof therefore
uses risk-sensitive one-step certainty equivalents and exponential
semimartingale arguments, rather than only risk-neutral reward averaging.

\paragraph{Entropy-regularized reinforcement learning and soft Bellman
operators.}
Entropy regularization has a long history in maximum entropy and causal entropy
formulations of decision making \cite{Ziebart2010}, in soft and regularized
Bellman operators \cite{FoxPakmanTishby2016,HaarnojaEtAl2017,HaarnojaEtAl2018},
and in the general theory of regularized Markov decision processes
\cite{GeistScherrerPietquin2019}. These works show that entropy regularization
smooths the hard maximization step and produces stochastic policies with
Boltzmann or Gibbs form. In discrete-time risk-neutral settings, this leads to a
clean equivalence between entropy-augmented rewards, convex duality, and soft
Bellman equations.

Continuous-time entropy-regularized reinforcement learning requires additional
care because the time step tends to zero while the exploration distribution
remains nondegenerate. Wang, Zariphopoulou, and Zhou
\cite{WangZariphopoulouZhou2020} formulate reinforcement learning in continuous
time and space through exploratory stochastic control and entropy-regularized
relaxed controls. Tang, Zhang, and Zhou \cite{TangZhangZhou2022} study
exploratory HJB equations and their convergence. Jia and Zhou
\cite{JiaZhou2023} develop continuous-time \(q\)-learning under the
entropy-regularized exploratory diffusion formulation. Jia
\cite{Jia2026RiskSensitive} extends this direction to continuous-time
risk-sensitive reinforcement learning with an exponential-form objective. That
work develops a \(q\)-learning and martingale characterization for
entropy-regularized exploratory diffusions, where risk sensitivity introduces a
quadratic-variation correction to the value process. Xie
\cite{Xie2025RiskSensitiveQLearning} is a close conceptual neighbor: it studies
continuous-time risk-sensitive \(q\)-learning for controlled diffusions with
nonlinear cumulative-reward functionals and uses optimized certainty
equivalents to recover Markovian structure in an augmented environment.

Our paper is related to this continuous-time entropy-regularized control
literature, but the model, operator, and theorem are different. We do not
develop a model-free \(q\)-learning algorithm, an OCE-augmented diffusion
formulation, or a martingale characterization for exploratory diffusions.
Instead, we study a model-based finite-inventory market-making problem in which
the inventory is driven by controlled point-process fills. Moreover, the entropy
regularization is applied to one-step certainty-equivalent scores generated by
deterministic frozen quotes. This is why the exact discrete Bellman operator is
\[
    h\lambda\log\int_A
    \exp\{C_h^\delta\varphi(q)/(h\lambda)\}\nu(d\delta),
\]
rather than a direct discretization of an exploratory diffusion reward. The
proof also needs a finite-state Feynman--Kac representation for the
risk-sensitive jump-inventory dynamics.

\paragraph{Mixed policies, relaxed controls, and discretization limits.}
The connection between randomized discrete-time policies and continuous-time
relaxed controls has recently received renewed attention. Carmona and
Lauri\`ere \cite{CarmonaLauriere2025} rigorously connect discrete-time mixed
policies with continuous-time relaxed controls as the time mesh tends to zero.
Jia, Ouyang, and Zhang \cite{JiaOuyangZhang2025} analyze discretely sampled
stochastic policies in continuous-time reinforcement learning and quantify
convergence to dynamics with coefficients aggregated under the stochastic
policy. Pham, Zhang, and Zhu \cite{PhamZhangZhu2026} study discretization errors
from regularized discrete-time reinforcement learning to continuous-time
stochastic control.

The fresh-sampling policy evaluation in this paper is aligned with this
literature but specialized to market making with marked point-process fills.
The compensator of the marked ask and bid fill processes is averaged over the
current quote distribution. This specification is essential because it matches
the Hamiltonian average appearing in the relaxed policy-evaluation ODE. We
therefore do not claim that a policy obtained by sampling one quote at the
beginning of a time interval and freezing it under an outer risk-sensitive
certainty equivalent has the same performance guarantee. The result is stated
for the fresh-sampling relaxed implementation precisely to avoid this
identification error.

\paragraph{Reinforcement learning for market making.}
A growing computational literature applies reinforcement learning to market
making. Spooner, Fearnley, Savani, and Koukorinis
\cite{SpoonerEtAl2018} use temporal-difference learning in a high-fidelity
limit-order-book simulator. Lim and Gorse \cite{LimGorse2018} study
reinforcement learning for high-frequency market making with state and reward
features tailored to inventory control. Spooner and Savani
\cite{SpoonerSavani2020} use adversarial reinforcement learning to improve
robustness to model uncertainty. Closer to the continuous-time theory side,
Cao, \v{S}i\v{s}ka, Szpruch, and Treetanthiploet
\cite{CaoSiskaSzpruchTreetanthiploet2024} prove logarithmic regret bounds for
learning the liquidity sensitivity parameter in an ergodic
Avellaneda--Stoikov model, while Zheng and Ding \cite{ZhengDing2024} analyze
sampling-frequency error and complexity tradeoffs for reinforcement learning in
high-frequency market making. Recent studies consider more realistic latency,
nonstationarity, Hawkes or semi-Markov dynamics, and deep reinforcement-learning
architectures \cite{JiangEtAl2025,LalorSwishchuk2025,ZimmerCosta2025}. Fang and
Israel \cite{FangIsrael2025} propose a Wasserstein robust market-making
framework that uses stochastic policies with entropy regularization. Their
entropy term is used in a distributionally robust market-making formulation. By
contrast, our entropy term regularizes one-step risk-sensitive
certainty-equivalent scores and leads to soft-to-hard Bellman approximation,
fresh-sampling performance bounds, and quote-concentration estimates.

These works demonstrate the practical appeal of randomized and learning-based
market-making policies. However, many of them are primarily simulator-driven,
algorithmic, or robustness-oriented and do not provide a finite-inventory
risk-sensitive CE-score Bellman operator with a continuous-time convergence and
performance theorem. The present paper fills that gap in a controlled
Avellaneda--Stoikov benchmark where sharp value, policy, and quote-level
estimates can be proved. It gives explicit rates for value convergence, policy
performance, and quote concentration, and it proves that the Hamiltonian-Gibbs
policy used in large-scale computation satisfies the same
\(O(h+\lambda(1+|\log\lambda|))\) certainty-equivalent performance bound as the
exact risk-sensitive CE-score Bellman policy.

\section{Model and Risk-Sensitive Value}
\label{sec:model-value}

We work in a finite-inventory Avellaneda--Stoikov framework, in the spirit of
\cite{AvellanedaStoikov2008,GueantLehalleFernandezTapia2013}.

Fix
\[
    T<\infty,\qquad Q\in\mathbb N,\qquad
    \cQ:=\{-Q,-Q+1,\ldots,Q\}.
\]
The midprice is driven by a Brownian motion \(W\) and satisfies
\[
    dS_u=\sigma\,dW_u,\qquad \sigma>0,
\]
and the quote set is the rectangle
\[
    A=[\underline\delta,\overline\delta]^2\subset\R^2,
    \qquad \underline\delta<\overline\delta.
\]

\begin{assumption}[Standing assumptions]\label{ass:model}
The intensity functions
\[
    \Lambda^a,\Lambda^b:[\underline\delta,\overline\delta]\to[0,\infty)
\]
belong to \(C^1([\underline\delta,\overline\delta])\) and satisfy
\[
    0\le \Lambda^a,\Lambda^b\le \overline\Lambda<\infty.
\]
The parameters satisfy
\[
    \gamma>0,\qquad \Phi\ge0,\qquad \eta\ge0.
\]
The reference probability measure \(\nu\) on \(A\) has a Borel density \(m\)
with respect to Lebesgue measure such that
\[
    0<m_-\le m(\delta)\le m_+<\infty
\]
for Lebesgue-a.e. \(\delta\in A\). We fix such a Borel version of \(m\).
\end{assumption}

Throughout, \(\one_E\) denotes the indicator of an event \(E\).

\begin{definition}[Strict admissible systems]\label{def:strict-system}
Given an initial state \((t,x,s,q)\in[0,T]\times\R\times\R\times\cQ\), a
strict admissible system on \([t,T]\) is a tuple
\[
    \mathfrak S=(\Omega,\cF,\bbF,\Prob,W,N^a,N^b,\delta,X,S,q_\cdot)
\]
such that \(\bbF\) satisfies the usual conditions, \(W\) is a Brownian motion,
\(N^a,N^b\) are simple counting processes with no common jumps, and
\(\delta_u=(\delta^a_u,\delta^b_u)\) is an \(A\)-valued predictable process. Moreover,
\[
    S_u=s+\sigma(W_u-W_t),
\]
\[
    q_u=q+\int_t^u dN_r^b-\int_t^u dN_r^a,
\]
\[
    X_u=x+\int_t^u(S_r+\delta_r^a)dN_r^a
          -\int_t^u(S_r-\delta_r^b)dN_r^b.
\]
The \(\bbF\)-predictable compensators of \(N^a\) and \(N^b\) are
\[
    \int_t^u \Lambda^a(\delta_r^a)\one_{\{q_{r-}>-Q\}}\,dr,
    \qquad
    \int_t^u \Lambda^b(\delta_r^b)\one_{\{q_{r-}<Q\}}\,dr.
\]
The set of all such strict admissible systems is denoted by \(\mathfrak A_t\). When no
confusion is possible, we write \(\delta\in\mathfrak A_t\), but the admissible
object is the whole system. This avoids defining the filtration through a
counting process that is itself constructed from the control.
\end{definition}

For every strict system, integration by parts gives the self-financing identity
\begin{equation}\label{eq:self-financing}
    X_u+q_uS_u
    =x+qs+
    \int_t^u q_{r-}\,dS_r
    +\int_t^u\delta_r^a\,dN_r^a
    +\int_t^u\delta_r^b\,dN_r^b .
\end{equation}
The boundary indicators in the compensators imply \(q_u\in\cQ\) for all
\(u\in[t,T]\).

For \(\mathfrak S\in\mathfrak A_t\), let
\(\E^{\mathfrak S}_{t,x,s,q}\) denote expectation under the strict admissible
system \(\mathfrak S\) started from \((t,x,s,q)\), and define
\[
    J(t,x,s,q;\mathfrak S)
    :=
    -\frac1\gamma\log
    \E^{\mathfrak S}_{t,x,s,q}
    \left[
        \exp\left(
        -\gamma\left[
            X_T+q_TS_T-\Phi q_T^2
            -\int_t^T\eta q_u^2\,du
        \right]
        \right)
    \right].
\]

The hard value is
\begin{equation}\label{eq:hard-value}
    V^*(t,x,s,q):=\sup_{\mathfrak S\in\mathfrak A_t}J(t,x,s,q;\mathfrak S).
\end{equation}

For \(y=(y_q)_{q\in\cQ}\in\R^{2Q+1}\), define
\[
    \Delta_q^a(y,\delta)
    :=
    \begin{cases}
        \delta^a+y_{q-1}-y_q, & q>-Q,\\
        0, & q=-Q,
    \end{cases}
\]
\[
    \Delta_q^b(y,\delta)
    :=
    \begin{cases}
        \delta^b+y_{q+1}-y_q, & q<Q,\\
        0, & q=Q,
    \end{cases}
\]
and
\[
    \Lambda_q^a(\delta^a):=\Lambda^a(\delta^a)\one_{\{q>-Q\}},
    \qquad
    \Lambda_q^b(\delta^b):=\Lambda^b(\delta^b)\one_{\{q<Q\}}.
\]
The deterministic-action Hamiltonian is
\begin{equation}\label{eq:H}
\begin{aligned}
    H_q(y,\delta)
    :={}&
    -\eta q^2-\frac{\gamma\sigma^2}{2}q^2 \\
    &+
    \frac{\Lambda_q^a(\delta^a)}{\gamma}
    \left(1-e^{-\gamma\Delta_q^a(y,\delta)}\right)
    +
    \frac{\Lambda_q^b(\delta^b)}{\gamma}
    \left(1-e^{-\gamma\Delta_q^b(y,\delta)}\right).
\end{aligned}
\end{equation}
For \(\lambda>0\), set
\begin{equation}\label{eq:H0-Hlambda}
    H_q^0(y):=\sup_{\delta\in A}H_q(y,\delta),
    \qquad
    H_q^\lambda(y):=
    \lambda\log\int_A\exp\left(\frac{H_q(y,\delta)}{\lambda}\right)\nu(d\delta).
\end{equation}
The hard and soft reduced values solve
\begin{equation}\label{eq:hard-ode}
    -\dot v_q^0(t)=H_q^0(v^0(t)),
    \qquad v_q^0(T)=-\Phi q^2,
\end{equation}
\begin{equation}\label{eq:soft-ode}
    -\dot v_q^\lambda(t)=H_q^\lambda(v^\lambda(t)),
    \qquad v_q^\lambda(T)=-\Phi q^2.
\end{equation}
Since \(A\) is compact and \(H_q\) is locally Lipschitz in \(y\) uniformly over
\(\delta\in A\), both ODEs have unique local solutions. The a priori bounds in
\Cref{lem:bounds} extend them to \([0,T]\).

\section{Entropy-Regularized CE-Score Bellman Scheme}
\label{sec:ce-score-scheme}

For \(\varphi=(\varphi_q)_{q\in\cQ}\), set
\[
    F_\varphi(x,s,q):=x+qs+\varphi_q.
\]
For \(h>0\) and a deterministic quote \(\delta\in A\) frozen over one step
\([0,h]\), let \(\E_{0,0,q}^{\delta}\) denote expectation under the canonical
fixed-quote system started from \(X_0=S_0=0\) and inventory \(q\). In that
system, the Brownian motion is independent of the inventory chain, and the
chain has state-dependent transition rates
\[
    r\to r-1 \text{ at rate } \Lambda_r^a(\delta^a),
    \qquad
    r\to r+1 \text{ at rate } \Lambda_r^b(\delta^b),
    \qquad r\in\cQ .
\]
Define
\begin{equation}\label{eq:Ch}
    C_h^\delta\varphi(q)
    :=
    -\frac1\gamma
    \log
    \E_{0,0,q}^\delta
    \left[
        \exp\left(
        -\gamma\left[
            F_\varphi(X_h,S_h,q_h)
            -\int_0^h\eta q_u^2\,du
        \right]
        \right)
    \right].
\end{equation}
For \(h>0\) and \(\lambda>0\), the discrete entropy-regularized Bellman operator
is
\begin{equation}\label{eq:Th}
    (T_h^\lambda\varphi)_q
    :=
    h\lambda
    \log\int_A
    \exp\left(\frac{C_h^\delta\varphi(q)}{h\lambda}\right)\nu(d\delta).
\end{equation}
Equivalently, with \(\cP(A)\) denoting the set of probability measures on \(A\)
and \(\KL(\pi\|\nu)\) denoting relative entropy, we use the convention
\[
    \KL(\pi\|\nu)
    =
    \begin{cases}
    \displaystyle
    \int_A \frac{d\pi}{d\nu}
    \log\left(\frac{d\pi}{d\nu}\right)d\nu,
    & \pi\ll\nu,\\[0.8em]
    +\infty, & \text{otherwise}.
    \end{cases}
\]
\begin{equation}\label{eq:Th-var}
    (T_h^\lambda\varphi)_q
    =
    \sup_{\pi\in\cP(A)}
    \left\{
        \int_A C_h^\delta\varphi(q)\,\pi(d\delta)
        -h\lambda\KL(\pi\|\nu)
    \right\}.
\end{equation}
For \(t_n=nh\) and \(Nh=T\), the exact discrete CE-score value is
\begin{equation}\label{eq:discrete-value}
    v_{N,q}^{h,\lambda}=-\Phi q^2,
    \qquad
    v_n^{h,\lambda}=T_h^\lambda v_{n+1}^{h,\lambda},
    \qquad n=N-1,\ldots,0.
\end{equation}

\begin{remark}[CE-score randomization]\label{rem:CE-score}
The operator \(T_h^\lambda\) regularizes deterministic-action certainty
equivalent scores. It is not the same as first randomizing a quote and then
putting the outer risk-sensitive certainty equivalent around the mixture. This
is why continuous-time performance is stated under the fresh-sampling
relaxed-control model in \Cref{sec:fresh-sampling}.
\end{remark}

\section{Main Convergence Result}
\label{sec:main-convergence}

\begin{theorem}[Discrete soft Bellman limit]\label{thm:main}
Under Assumption~\ref{ass:model}, there exist constants \(C>0\), \(h_0>0\), and
\(\lambda_0>0\), independent of \(h\) and \(\lambda\), such that for every
\(h=T/N\le h_0\) and \(0<\lambda\le\lambda_0\),
\begin{equation}\label{eq:main-rate}
    \max_{0\le n\le N}\max_{q\in\cQ}
    \abs{v_{n,q}^{h,\lambda}-v_q^0(t_n)}
    \le
    C\left[h+\lambda\bigl(1+\abs{\log\lambda}\bigr)\right].
\end{equation}
Consequently,
\begin{equation}\label{eq:value-rate}
    \abs{x+qs+v_{n,q}^{h,\lambda}-V^*(t_n,x,s,q)}
    \le
    C\left[h+\lambda\bigl(1+\abs{\log\lambda}\bigr)\right].
\end{equation}
\end{theorem}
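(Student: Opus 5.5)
The plan is to split the error through the soft reduced value \(v^\lambda\) solving \eqref{eq:soft-ode}:
\[
    \abs{v_{n,q}^{h,\lambda}-v_q^0(t_n)}
    \le
    \abs{v_{n,q}^{h,\lambda}-v_q^\lambda(t_n)}
    +\abs{v_q^\lambda(t_n)-v_q^0(t_n)}.
\]
The first term is a time-discretization error, to be shown \(O(h)\) uniformly in \(\lambda\). The second is an entropy bias, to be shown \(O(\lambda(1+\abs{\log\lambda}))\). The value statement \eqref{eq:value-rate} will then follow from a verification result identifying \(V^*(t,x,s,q)=x+qs+v_q^0(t)\). That verification uses the exponential ansatz \(\exp(-\gamma(x+qs+v_q(t)))\) in the standard GLFT way: Itô's formula for the jump--diffusion, the supermartingale property for arbitrary strict systems, and a measurable maximizer of \(H_q(v^0,\cdot)\) on the compact set \(A\) to attain the supremum.

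\emph{Entropy bias.} First, using the a priori bounds (\Cref{lem:bounds}), I will confine \(v^0\) and \(v^\lambda\) to a compact set \(K\). On \(K\), \(H_q\) is Lipschitz in \(y\) uniformly in \(\delta\), and Lipschitz in \(\delta\) uniformly in \(y\), since \(\Lambda^{a,b}\in C^1\). The upper bound \(H_q^\lambda\le H_q^0\) is immediate. For the lower bound, let \(\delta^*\) be a maximizer and \(B_r\) the \(r\)-neighbourhood of \(\delta^*\) in \(A\). Because \(A\) is a rectangle and \(m\ge m_-\), we have \(\nu(B_r\cap A)\ge c\,r^2\). Hence
\[
    H_q^\lambda(y)\ge H_q^0(y)-Lr+\lambda\log(c r^2).
\]
Choosing \(r=\lambda\) gives \(\abs{H^\lambda-H^0}\le C\lambda(1+\abs{\log\lambda})\) on \(K\). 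A Gronwall argument for the two ODEs, with \(H^0\) Lipschitz, then yields the bias bound.

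\emph{Discretization.} This is a consistency-plus-stability argument. Stability follows because \(T_h^\lambda\) is monotone and translation-equivariant: constants pass through \(C_h^\delta\) and through log-sum-exp, so \(T_h^\lambda\) is nonexpansive in sup norm. The same should hold for \(C_h^\delta\) itself by monotonicity of the certainty equivalent. This gives the telescoping estimate
\[
    \norm{v_n^{h,\lambda}-v^\lambda(t_n)}_\infty
    \le\sum_k \norm{T_h^\lambda v^\lambda(t_{k+1})-v^\lambda(t_k)}_\infty .
\]
It therefore suffices to prove local consistency \(O(h^2)\) uniformly in \(\lambda\le\lambda_0\) and for \(\varphi\) in a bounded set. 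The needed one-step expansion is
\[
    C_h^\delta\varphi(q)=\varphi_q+hH_q(\varphi,\delta)+R_h(\delta,q),\qquad \abs{R_h}\le Ch^2,
\]
uniformly in \(\delta\in A\). To obtain it, I will use a finite-state Feynman--Kac representation. With the Brownian part independent, the \(S\)-term produces exactly \(\gamma\sigma^2 q_u^2/2\). The expectation is then \(e^{h\mathcal L^\delta}g\) for a fixed \((2Q+1)\)-dimensional generator \(\mathcal L^\delta\), bounded uniformly in \(\delta\), so a second-order Taylor expansion of the matrix exponential and of \(\log\) gives the expansion above. Substituting it into \eqref{eq:Th} and normalizing by \(h\lambda\) gives
\[
    T_h^\lambda\varphi_q=\varphi_q+hH_q^\lambda(\varphi)+O(h^2).
\]
The remainder does not depend on \(\lambda\), because log-sum-exp shifts by at most \(\sup_\delta\abs{R_h}\) when the scores are perturbed uniformly. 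Finally, \(v^\lambda(t_k)=v^\lambda(t_{k+1})+hH^\lambda(v^\lambda(t_{k+1}))+O(h^2)\) requires \(\ddot v^\lambda\) to be bounded uniformly in \(\lambda\). This follows because \(H^\lambda\) is uniformly Lipschitz on \(K\), so \(\dot v^\lambda\) is Lipschitz in time.

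\emph{Main obstacle.} The hardest step should be the uniform-in-\(\delta\), uniform-in-\(\lambda\) \(O(h^2)\) consistency of the exact CE score. There are two issues. One is handling the joint law of jumps, cash \(X_h\) (which involves the frozen \(\delta\) and \(S\) at jump times), and the running penalty. The other is that cross terms like \(q_{r-}dS_r\) correlate with jump times. I would first use \eqref{eq:self-financing} to write \(F_\varphi\) as a stochastic integral. I would then condition on the jump path, so that the Gaussian part integrates explicitly to \(\exp(\gamma^2\sigma^2\int q_u^2du/2)\), leaving a Feynman--Kac formula for the inventory chain with potential and jump weights \(e^{-\gamma\delta^{a,b}}\). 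I also need the discrete iterates to stay in a bounded set so that the constants apply. This follows by induction from nonexpansiveness together with the consistency bound, for \(h\le h_0\).
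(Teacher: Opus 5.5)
Your proposal is correct and follows the paper's proof essentially step for step. You split through $v^\lambda$ and get the entropy bias from the $c\,r^2$ mass bound with $r=\lambda$ followed by backward Gronwall. You get the $O(h)$ term from the finite-state Feynman--Kac expansion $T_h^\lambda\varphi=\varphi+hH^\lambda(\varphi)+O(h^2)$, which is uniform in $\lambda$ because log-sum-exp is one-Lipschitz, and propagate it using monotone, translation-invariant nonexpansiveness. Finally, \eqref{eq:value-rate} comes from the exponential verification $V^*=x+qs+v^0$.

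Three small remarks:
\begin{itemize}
\item With the ansatz $\exp(-\gamma(\cdot))$, the process is a \emph{sub}martingale under arbitrary controls; the supermartingale is the utility $-\exp(\cdot)$.
\item Passing from local to true (sub)martingale requires the exponential-moment class~D bound, which is \Cref{lem:exp-moments} in the paper.
\item Boundedness of the discrete iterates is not actually needed. Nonexpansiveness is global, and consistency is only applied at $v^\lambda(t_{k+1})$.
\end{itemize}
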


\paragraph{Proof strategy.}
The proof combines four estimates. First, a priori bounds place \(v^0\),
\(v^\lambda\), and \(v^{h,\lambda}\) in a common compact region. Second, hard
verification identifies \(V^*(t,x,s,q)=x+qs+v_q^0(t)\). Third, the log-sum-exp
Hamiltonian has entropy bias \(O(\lambda(1+|\log\lambda|))\), yielding the
soft-to-hard error by backward Gronwall. Fourth, the finite-state
Feynman--Kac formula gives
\(T_h^\lambda\varphi=\varphi+hH^\lambda(\varphi)+O(h^2)\); nonexpansiveness
then propagates this local error to \(O(h)\). Full details are in
\Cref{app:auxiliary,app:soft-hard,app:fk-consistency}.

\section{Fresh-Sampling Policy Performance}
\label{sec:fresh-sampling}

\subsection{Fresh-sampling relaxed systems}

A fresh-sampling relaxed Markov policy is a measurable family
\[
    \pi_{u,q}\in\cP(A),
    \qquad (u,q)\in[t,T]\times\cQ.
\]
This is the marked point-process counterpart of the mixed-policy and
relaxed-control viewpoint used in recent continuous-time discretization limits
\cite{CarmonaLauriere2025,JiaOuyangZhang2025,PhamZhangZhu2026}.
The associated relaxed system is defined by two marked point measures
\(M^a(du,d\delta)\) and \(M^b(du,d\delta)\) on \([t,T]\times A\), with no common
jumps, whose predictable compensators are
\begin{equation}\label{eq:fresh-compensators}
\begin{aligned}
    \mu^a(du,d\delta)
    &=\one_{\{q_{u-}>-Q\}}\Lambda^a(\delta^a)\pi_{u,q_{u-}}(d\delta)\,du,\\
    \mu^b(du,d\delta)
    &=\one_{\{q_{u-}<Q\}}\Lambda^b(\delta^b)\pi_{u,q_{u-}}(d\delta)\,du.
\end{aligned}
\end{equation}
The inventory and cash dynamics are
\[
    q_u=q+M^b([t,u]\times A)-M^a([t,u]\times A),
\]
\[
    X_u=x+
    \int_t^u\int_A(S_r+\delta^a)M^a(dr,d\delta)
    -
    \int_t^u\int_A(S_r-\delta^b)M^b(dr,d\delta).
\]
A predictable-thinning realization of these compensators is given in
\Cref{lem:fresh-realization}.

The total accepted intensity is bounded by \(2\overline\Lambda\), so
\Cref{lem:exp-moments,lem:well-posedness} remain valid, with the total marked
count in place of \(N^a+N^b\).

For such a relaxed policy \(\pi\), define \(J(t,x,s,q;\pi)\) by the same
risk-sensitive certainty equivalent, evaluated under the marked fresh-sampling
system. The corresponding policy-evaluation identity is proved in
\Cref{lem:relaxed-verification}.

\begin{remark}[Fresh sampling versus static sampling]\label{rem:fresh-static}
The results in this section use the compensators \eqref{eq:fresh-compensators}.
They do not cover a scheme that samples one quote at the beginning of a time
interval and freezes it while keeping an outer risk-sensitive certainty
equivalent around the mixture.
\end{remark}

\subsection{Exact Bellman Gibbs policy}

The exact Bellman Gibbs policy is
\begin{equation}\label{eq:gibbs-exact}
    \pi_{n,q}^{\rm ex}(d\delta)
    :=
    \frac{
        \exp\left(
            C_h^\delta\bigl(v_{n+1}^{h,\lambda}\bigr)(q)/(h\lambda)
        \right)
    }{
        \int_A
        \exp\left(
            C_h^\zeta\bigl(v_{n+1}^{h,\lambda}\bigr)(q)/(h\lambda)
        \right)
        \nu(d\zeta)
    }
    \nu(d\delta),
\end{equation}
and is implemented as \(\pi_{t,q}^{\rm ex}:=\pi_{n,q}^{\rm ex}\) on
\([t_n,t_{n+1})\). We write \(\pi^{\rm ex}\) for this fresh-sampling policy.
The Borel measurability needed for the fresh-sampling implementation follows
from \Cref{lem:gibbs-measurable}.

\begin{corollary}[Fresh-sampling performance of the exact Bellman policy]\label{cor:performance-exact}
Under Assumption~\ref{ass:model}, there exist constants \(C>0\), \(h_0>0\), and
\(\lambda_0>0\) such that, for every \(h=T/N\le h_0\),
\(0<\lambda\le\lambda_0\), \(0\le n\le N-1\), \(t\in[t_n,t_{n+1})\), and
\((x,s,q)\in\R\times\R\times\cQ\), the fresh-sampling implementation satisfies
\[
    0\le V^*(t,x,s,q)-J(t,x,s,q;\pi^{\rm ex})
    \le
    C\left[h+\lambda(1+\abs{\log\lambda})\right].
\]
\end{corollary}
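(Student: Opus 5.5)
The plan is to reduce the performance gap to a comparison of reduced ODE values and then reuse \Cref{thm:main}. The lower bound \(0\le V^*-J\) should follow from hard verification. Any fresh-sampling relaxed policy is dominated by the hard value, because \(H_q(y,\pi):=\int_A H_q(y,\delta)\pi(d\delta)\le H^0_q(y)\). Concretely, I would use the relaxed verification lemma (\Cref{lem:relaxed-verification}) together with the exponential-semimartingale/comparison argument for the finite-state linear system to obtain \(J(t,x,s,q;\pi)=x+qs+w^\pi_q(t)\). Here \(w^\pi\) solves the relaxed policy-evaluation ODE
\[
-\dot w^\pi_q(t)=\int_A H_q(w^\pi(t),\delta)\,\pi_{t,q}(d\delta),\qquad w^\pi_q(T)=-\Phi q^2 .
\]
Since \(V^*=x+qs+v^0\) and \(H\le H^0\), a backward comparison for the ODE gives \(w^\pi\le v^0\). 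The relevant vector field is quasi-monotone, because \(y\mapsto -e^{-\gamma(\cdot+y_{q\pm1}-y_q)}\) is increasing in the off-diagonal coordinates.

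For the upper bound, I would write \(w:=w^{\pi^{\rm ex}}\) and compare \(w\) on \([t_n,t_{n+1}]\) with the discrete value \(v^{h,\lambda}\). The key one-step estimate to prove is
\[
\int_A C_h^\delta\varphi(q)\,\pi^{\rm ex}_{n,q}(d\delta)-h\lambda\KL(\pi^{\rm ex}_{n,q}\|\nu)=(T_h^\lambda\varphi)_q ,
\]
which holds by the Gibbs variational identity \eqref{eq:Th-var}. Dropping the nonnegative KL term yields \(\int C_h^\delta\varphi\,d\pi^{\rm ex}\ge T_h^\lambda\varphi\). Next, the Feynman--Kac expansion \(C_h^\delta\varphi(q)=\varphi_q+hH_q(\varphi,\delta)+O(h^2)\), uniform in \(\delta\) and over the compact region given by \Cref{lem:bounds}, shows that the relaxed one-step evolution satisfies
\[
\Psi^{\pi}_h\varphi=\varphi+h\int H(\varphi,\delta)\,d\pi+O(h^2).
\]
Here \(\Psi^{\pi}_h\) denotes the exact relaxed one-step evolution, i.e., the flow of the ODE above over \([t_n,t_{n+1}]\) with \(\pi\) frozen in time. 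Combining these gives \(\Psi^{\pi^{\rm ex}}_h\varphi\ge T_h^\lambda\varphi-Ch^2\).

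Monotonicity and nonexpansiveness of the relaxed flow in sup-norm then give a backward induction. Setting \(e_n:=\max_q (v^{h,\lambda}_{n,q}-w_q(t_n))^+\), I would show \(e_n\le e_{n+1}+Ch^2\), hence \(e_n\le Ch\). Combining this with \Cref{thm:main}, \(v^0(t_n)-w(t_n)\le C[h+\lambda(1+|\log\lambda|)]\). For \(t\in(t_n,t_{n+1})\), both \(v^0\) and \(w\) are Lipschitz in time with uniformly bounded derivatives, so the gap changes by at most \(O(h)\).

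The main obstacle is the uniform-in-\(\delta\) \(O(h^2)\) consistency between the frozen-quote certainty equivalent \(C_h^\delta\) and the Hamiltonian. Uniformity is needed because \(\pi^{\rm ex}\) may place mass anywhere in \(A\), and the inventory chain can make multiple jumps within a step. I would take this from the one-step consistency estimate in the appendix, which is already uniform over \(A\) by compactness and the bound \(\overline\Lambda\). A second delicate point is that \(\pi^{\rm ex}\) is only piecewise constant and measurable in time, so \(w\) is a Carathéodory solution. The comparison and Gronwall steps must therefore be done in integral form, using measurability from \Cref{lem:gibbs-measurable}.
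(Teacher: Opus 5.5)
Your proof is correct, but the upper bound follows a different route from the paper's.

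\textbf{The paper's route.} The paper never compares the policy value with the discrete value \(v^{h,\lambda}\). It first proves a continuous-time local regret bound (\Cref{lem:local-regret-exact}),
\[
    H_q^0(v^0(t))-\int_A H_q(v^0(t),\delta)\,\pi^{\rm ex}_{n,q}(d\delta)\le C\bigl[h+\lambda(1+\abs{\log\lambda})\bigr].
\]
This bound comes from the consistency expansion, the Gibbs inequality for the perturbed Hamiltonian, the entropy-bias lemma, and \Cref{thm:main} with \Cref{lem:time-lip}, which moves from \(v^{h,\lambda}_{n+1}\) to \(v^0(t)\). The paper then runs backward Gronwall on \(\norm{v^0(t)-u^{\rm ex}(t)}_\infty\) using only Lipschitz bounds. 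Its nonnegativity comes from the verification submartingale rather than a Kamke comparison.

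\textbf{Your route.} You work at the level of one-step operators. The Gibbs measure is exactly optimal in \eqref{eq:Th-var}, so dropping the KL term gives \(\int C_h^\delta\varphi\,d\pi^{\rm ex}\ge T_h^\lambda\varphi\). The monotone relaxed flow then propagates the \(O(h^2)\) local error.

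Two small points should be made explicit:
\begin{itemize}
\item Sup-norm nonexpansiveness of \(\Psi_h^\pi\) needs translation equivariance as well as cooperativity. Translation equivariance holds because \(H_q(y,\delta)\) depends on \(y\) only through \(y_{q\pm1}-y_q\).
\item The expansion \(\Psi_h^\pi\varphi=\varphi+h\int H(\varphi,\delta)\,d\pi+O(h^2)\) comes from the ODE itself (a bounded Lipschitz vector field on the region of \Cref{lem:bounds}), not from Feynman--Kac. Feynman--Kac is needed only for \(\int C_h^\delta\varphi\,d\pi^{\rm ex}\).
\end{itemize}

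\textbf{What each route buys.} Yours gives a sharper intermediate statement,
\[
    J(t_n,x,s,q;\pi^{\rm ex})\ge x+qs+v^{h,\lambda}_{n,q}-Ch,
\]
with \(C\) independent of \(\lambda\). So the exact Gibbs policy realizes its own discrete soft value up to \(O(h)\), and the entropy bias enters only through \Cref{thm:main}. The paper's route yields a pointwise regret bound against the hard Hamiltonian at \(v^0(t)\), and your comparison does not produce this. That bound is reused verbatim for quote concentration in \Cref{cor:quote-concentration}. With \Cref{lem:local-regret-ham} substituted, the same Gronwall template also proves \Cref{prop:ham-performance}.
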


\paragraph{Proof idea.}
The result follows from the relaxed verification identity, the exact Gibbs
local regret bound, and a backward Gronwall comparison with the hard ODE; see
\Cref{app:policy-proofs}.

\subsection{Hamiltonian-Gibbs proxy policy}

The soft-HJB Euler approximation is
\begin{equation}\label{eq:soft-euler}
    \widehat v_{N,q}^{h,\lambda}=-\Phi q^2,
    \qquad
    \widehat v_n^{h,\lambda}
    =\widehat v_{n+1}^{h,\lambda}
     +hH^\lambda(\widehat v_{n+1}^{h,\lambda}).
\end{equation}
The Hamiltonian-Gibbs proxy is
\begin{equation}\label{eq:gibbs-ham}
    \pi_{n,q}^{\rm Ham}(d\delta)
    :=
    \frac{
        \exp\left(H_q(\widehat v_{n+1}^{h,\lambda},\delta)/\lambda\right)
    }{
        \int_A\exp\left(H_q(\widehat v_{n+1}^{h,\lambda},\zeta)/\lambda\right)
        \nu(d\zeta)
    }
    \nu(d\delta),
\end{equation}
and is implemented as \(\pi_{t,q}^{\rm Ham}:=\pi_{n,q}^{\rm Ham}\) on
\([t_n,t_{n+1})\). We write \(\pi^{\rm Ham}\) for this fresh-sampling policy.
The Borel measurability of this kernel is also covered by
\Cref{lem:gibbs-measurable}.

\begin{proposition}[Fresh-sampling performance of the Hamiltonian-Gibbs proxy]\label{prop:ham-performance}
Under Assumption~\ref{ass:model}, there exist constants \(C>0\), \(h_0>0\), and
\(\lambda_0>0\) such that, for every \(h=T/N\le h_0\),
\(0<\lambda\le\lambda_0\), \(0\le n\le N-1\), \(t\in[t_n,t_{n+1})\), and
\((x,s,q)\in\R\times\R\times\cQ\), the fresh-sampling implementation of
\eqref{eq:gibbs-ham} satisfies
\[
    0\le V^*(t,x,s,q)-J(t,x,s,q;\pi^{\rm Ham})
    \le
    C\left[h+\lambda(1+\abs{\log\lambda})\right].
\]
Thus the large-scale numerical policy based on the Hamiltonian-Gibbs density is
covered directly by the theory and need not be justified only through an
exact-Bellman proxy comparison.
\end{proposition}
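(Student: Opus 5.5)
The plan is to mirror the proof of \Cref{cor:performance-exact}, with the exact Gibbs local regret bound replaced by a local regret bound for the Hamiltonian-Gibbs density.

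\textbf{Lower bound.} The inequality \(0\le V^*-J(\cdot;\pi^{\rm Ham})\) should follow from hard verification \(V^*(t,x,s,q)=x+qs+v^0_q(t)\) together with the relaxed verification identity of \Cref{lem:relaxed-verification}. That identity gives
\[
J(t,x,s,q;\pi^{\rm Ham})=x+qs+w_q(t),
\]
where \(w\) solves the linear-in-\(\pi\) relaxed policy-evaluation ODE
\[
-\dot w_q(t)=\int_A H_q(w(t),\delta)\,\pi^{\rm Ham}_{t,q}(d\delta),\qquad w_q(T)=-\Phi q^2 .
\]
Since \(\int_A H_q(w,\delta)\,\pi(d\delta)\le H^0_q(w)\) for every \(\pi\), and \(H^0\) is locally Lipschitz and quasi-monotone in the off-diagonal coordinates (the map \(y_{q\pm1}\mapsto H_q\) is increasing), a backward comparison gives \(w\le v^0\). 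Alternatively, \(J\le V^*\) holds directly if relaxed systems are dominated by the hard supremum.

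\textbf{Upper bound.} On \([t_n,t_{n+1})\) I write \(\hat y:=\widehat v^{h,\lambda}_{n+1}\) and \(\pi:=\pi^{\rm Ham}_{n,q}\). The Gibbs variational identity, the Ham analogue of \eqref{eq:Th-var}, gives
\[
\int_A H_q(\hat y,\delta)\,\pi(d\delta)=H^\lambda_q(\hat y)-\lambda\KL(\pi\|\nu)\ge H^\lambda_q(\hat y)-\lambda\log\frac{1}{\pi\text{-free}}.
\]
This is not usable directly, so I take a different route. I will bound the ``policy regret'' \(H^0_q(\hat y)-\int_A H_q(\hat y,\delta)\,\pi(d\delta)\) by \(C\lambda(1+|\log\lambda|)\). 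The Gibbs law tilts toward high \(H\) values. Splitting \(A\) into \(\{H_q(\hat y,\cdot)\ge H^0_q(\hat y)-\epsilon\}\) and its complement, the complement has \(\pi\)-mass at most
\[
e^{-\epsilon/(2\lambda)}\big/\nu\bigl(\{H\ge H^0-\epsilon/2\}\bigr).
\]
By the uniform \(\delta\)-Lipschitz bound on \(H_q\) (from \(\Lambda\in C^1\) and the a priori compact region of \Cref{lem:bounds}), the good set contains a ball of radius \(c\epsilon\) inside the rectangle, so \(\nu\)-mass is at least \(c m_-\epsilon^2\). Choosing \(\epsilon=K\lambda(1+|\log\lambda|)\) makes the bad mass \(O(\lambda)\), hence the regret is \(O(\lambda(1+|\log\lambda|))\). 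This is the same two-dimensional neighbourhood-mass argument as in the entropy-bias lemma.

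Next I transfer this to the evaluation point \(w(t)\). The Lipschitz continuity of \(H_q(\cdot,\delta)\) in \(y\), uniform in \(\delta\), on the compact region gives
\[
\int_A H_q(w(t),\delta)\,\pi(d\delta)\ge H^0_q(v^0(t))-L\bigl(|w(t)-v^0(t)|+|\hat y-v^0(t)|\bigr)-C\lambda(1+|\log\lambda|).
\]
The term \(|\hat y-v^0(t)|\le |\widehat v^{h,\lambda}_{n+1}-v^\lambda(t_{n+1})|+|v^\lambda(t_{n+1})-v^0(t_{n+1})|+|v^0(t_{n+1})-v^0(t)|\) is \(O(h+\lambda(1+|\log\lambda|))\), using three ingredients: standard Euler error for the soft ODE, whose \(H^\lambda\) is Lipschitz in \(y\) uniformly in \(\lambda\) as a log-sum-exp of uniformly Lipschitz functions, so Euler is \(O(h)\) with constants independent of \(\lambda\) provided stability keeps \(\widehat v\) in the compact region; the soft-to-hard estimate from \Cref{app:soft-hard}; and boundedness of \(\dot v^0\). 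Setting \(e(t):=\max_q(v^0_q(t)-w_q(t))\ge0\), I get
\[
-\dot e\le 2L\,e+C\bigl[h+\lambda(1+|\log\lambda|)\bigr],\qquad e(T)=0,
\]
and backward Gronwall yields the claim.

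\textbf{Main obstacle.} The delicate step is ensuring that every Lipschitz constant is uniform in \(\lambda\). In particular, I need a priori bounds for the Euler iterates \(\widehat v^{h,\lambda}\) and for \(w\) that keep them in the region of \Cref{lem:bounds}, since the \(e^{-\gamma\Delta}\) terms are only locally Lipschitz. I would obtain this by noting that \(H^\lambda\) lies between \(H^0-C\lambda(1+|\log\lambda|)\) and \(H^0\), reusing the comparison used for \(v^\lambda\), and taking \(h_0\) small for discrete stability. A secondary point is measurability of \(\pi^{\rm Ham}\), which \Cref{lem:gibbs-measurable} already provides.
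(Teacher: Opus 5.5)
Your proof is correct. It follows the paper's skeleton: relaxed verification \(J=x+qs+w\), a local regret bound for \(\pi^{\rm Ham}_{n,q}\) at \(\hat y=\widehat v^{h,\lambda}_{n+1}\), transfer to \(v^0(t)\) via the Euler error, the soft-to-hard estimate and time-Lipschitz continuity, then backward Gronwall. Two sub-steps are argued differently.

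\textbf{Local regret.} The paper uses the Gibbs variational identity. At the Gibbs measure it reads \(\int_A H_q(\hat y,\delta)\,\pi(d\delta)=H^\lambda_q(\hat y)+\lambda\KL(\pi\|\nu)\ge H^\lambda_q(\hat y)\), and the paper then quotes \Cref{lem:entropy-bias}. You wrote this identity with \(-\lambda\KL\); that is a sign error, and it is why the route looked unusable to you. Your substitute is a tail-mass bound: \(\pi^{\rm Ham}_{n,q}\) assigns mass \(1-O(\lambda)\) to \(\epsilon\)-optimal quotes when \(\epsilon=K\lambda(1+|\log\lambda|)\) with \(K\) fixed and large. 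This is valid and yields an explicit tail estimate as a by-product. However, it re-derives the neighbourhood-mass bound already contained in \Cref{lem:entropy-bias}, whereas the paper's route is a two-line reduction to existing lemmas.

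\textbf{Nonnegativity.} The paper reruns the verification submartingale for \(v^0\) under the relaxed system. You compare ODEs instead, using that \(H^0_q(y)\) depends on \(y\) only through the increments \(y_{q\pm1}-y_q\) and is nondecreasing in them. At an index maximizing \(w_q-v^0_q\) this gives \(\int_A H_q(w,\delta)\,\pi(d\delta)\le H^0_q(w)\le H^0_q(v^0)\), so that maximum cannot become positive going backward from \(T\). This argument is purely deterministic and avoids a second class-\(D\) argument. It also lets you identify \(\max_q(v^0_q-w_q)\) with \(\|v^0-w\|_\infty\) in your Gronwall step.

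Your fallback claim, that \(J\le V^*\) because relaxed systems are dominated by the hard supremum, is not available as stated: \(V^*\) is a supremum over strict systems only. Your comparison argument makes it unnecessary.
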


\paragraph{Proof idea.}
Use the same ODE comparison as for \(\pi^{\rm ex}\), replacing the exact
Bellman local regret by the Hamiltonian-Gibbs local regret; see
\Cref{app:policy-proofs}.

\section{Quote Concentration}
\label{sec:quote-concentration}

For \(q\in\cQ\), define the active projection
\[
    P_q(\xi^a,\xi^b):=(\one_{\{q>-Q\}}\xi^a,\one_{\{q<Q\}}\xi^b).
\]
Let
\[
    \cA_q^*(t):=\argmax_{\delta\in A}H_q(v^0(t),\delta),
\]
and
\[
    \dist_q(\delta,\cA_q^*(t))
    :=
    \inf_{\delta^*\in\cA_q^*(t)}\abs{P_q(\delta-\delta^*)}.
\]

\begin{assumption}[Active-coordinate quadratic growth]\label{ass:qg}
There exists \(\mu>0\) such that, for all \(t\in[0,T]\), \(q\in\cQ\), and
\(\delta\in A\),
\[
    H_q^0(v^0(t))-H_q(v^0(t),\delta)
    \ge
    \frac\mu2\dist_q^2(\delta,\cA_q^*(t)).
\]
\end{assumption}

\begin{lemma}[Exponential-intensity curvature certificate]\label{lem:exp-curvature}
Assume
\[
    \Lambda^a(\delta)=\alpha_a e^{-k_a\delta},
    \qquad
    \Lambda^b(\delta)=\alpha_b e^{-k_b\delta},
\]
with \(\alpha_a,\alpha_b,k_a,k_b>0\). Define
\[
    D_a:=\overline\delta+
    \sup_{t\in[0,T],\ q>-Q}\{v_{q-1}^0(t)-v_q^0(t)\},
\]
\[
    D_b:=\overline\delta+
    \sup_{t\in[0,T],\ q<Q}\{v_{q+1}^0(t)-v_q^0(t)\},
\]
and
\[
    \Theta_a:=\frac2\gamma\log\left(1+\frac\gamma{k_a}\right),
    \qquad
    \Theta_b:=\frac2\gamma\log\left(1+\frac\gamma{k_b}\right).
\]
If
\[
    D_a<\Theta_a,
    \qquad
    D_b<\Theta_b,
\]
then Assumption~\ref{ass:qg} holds. More precisely, set
\[
    \mu_a:=
    \frac{\alpha_a}{\gamma}e^{-k_a\overline\delta}
    \left[(k_a+\gamma)^2e^{-\gamma D_a}-k_a^2\right]>0,
\]
\[
    \mu_b:=
    \frac{\alpha_b}{\gamma}e^{-k_b\overline\delta}
    \left[(k_b+\gamma)^2e^{-\gamma D_b}-k_b^2\right]>0.
\]
Then Assumption~\ref{ass:qg} holds with \(\mu=\min\{\mu_a,\mu_b\}\).
\end{lemma}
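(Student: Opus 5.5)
The plan is to use the additive structure of \(H_q\) in the two quote coordinates and reduce Assumption~\ref{ass:qg} to a one-dimensional uniform strong-concavity estimate on \([\underline\delta,\overline\delta]\). Fix \(t\in[0,T]\) and \(q\in\cQ\), and write \(y=v^0(t)\). By \eqref{eq:H},
\[
    H_q(y,\delta)=-\eta q^2-\tfrac{\gamma\sigma^2}{2}q^2+g^a_q(\delta^a)+g^b_q(\delta^b),
\]
where \(g^a_q(\delta^a)=\frac{\Lambda^a_q(\delta^a)}{\gamma}(1-e^{-\gamma(\delta^a+c_a)})\) with \(c_a=y_{q-1}-y_q\), and \(g^b_q\) is defined analogously with \(c_b=y_{q+1}-y_q\).

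\emph{Inactive coordinates.} If \(q=-Q\), then \(g^a_q\equiv0\). In that case every \(\delta^a\) maximizes the ask part, and \(P_q\) discards the ask coordinate. The same holds for the bid side when \(q=Q\). It follows that
\[
    \cA_q^*(t)=\argmax g^a_q\times\argmax g^b_q,
\]
where an inactive factor is the whole interval, and the inactive coordinate contributes nothing to either side of the desired inequality.

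\emph{Curvature of an active coordinate.} Take the ask side with \(q>-Q\) and drop the subscript. With exponential intensity,
\[
    g(\delta)=\tfrac{\alpha_a}{\gamma}\bigl(e^{-k_a\delta}-e^{-\gamma c_a}e^{-(k_a+\gamma)\delta}\bigr).
\]
Differentiating twice gives
\[
    g''(\delta)=\tfrac{\alpha_a}{\gamma}e^{-k_a\delta}\bigl[k_a^2-(k_a+\gamma)^2e^{-\gamma(\delta+c_a)}\bigr].
\]
For \(\delta\le\overline\delta\) we have \(\delta+c_a\le D_a\) by the definition of \(D_a\). This supremum is finite because \(v^0\) is continuous on \([0,T]\), or alternatively by \Cref{lem:bounds}. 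Hence \(e^{-\gamma(\delta+c_a)}\ge e^{-\gamma D_a}\), so the bracket is at most \(-[(k_a+\gamma)^2e^{-\gamma D_a}-k_a^2]\).

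The hypothesis \(D_a<\Theta_a\) is exactly equivalent to \((k_a+\gamma)^2e^{-\gamma D_a}>k_a^2\), so this quantity is strictly positive. Combining it with \(e^{-k_a\delta}\ge e^{-k_a\overline\delta}\) yields \(g''\le-\mu_a<0\) uniformly in \(\delta\in[\underline\delta,\overline\delta]\), \(t\), and \(q\). The bid side gives \(-\mu_b\) in the same way.

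\emph{From curvature to quadratic growth.} A \(\mu_a\)-strongly concave \(C^2\) function on a compact interval has a unique maximizer \(\delta^*\). This maximizer may lie on the boundary of the interval, and I expect this to be the only point needing care. The tool is the first-order variational inequality \(g'(\delta^*)(\delta-\delta^*)\le0\) for all \(\delta\in[\underline\delta,\overline\delta]\). Taylor's formula with integral remainder then gives
\[
    g(\delta^*)-g(\delta)=-g'(\delta^*)(\delta-\delta^*)-\int_{\delta^*}^{\delta}(\delta-r)g''(r)\,dr\ge\tfrac{\mu_a}{2}(\delta-\delta^*)^2 .
\]

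\emph{Conclusion.} Summing the active ask and bid estimates, and using \(\mu=\min\{\mu_a,\mu_b\}\) together with \(H^0_q(y)=H_q(y,\delta^*)\), gives
\[
    H^0_q(v^0(t))-H_q(v^0(t),\delta)\ge\tfrac{\mu}{2}\abs{P_q(\delta-\delta^*)}^2 .
\]
The right-hand side is at least \(\frac{\mu}{2}\dist_q^2(\delta,\cA_q^*(t))\), which is Assumption~\ref{ass:qg}.
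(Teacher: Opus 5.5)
Your proposal is correct and follows essentially the same route as the paper: you split $H_q$ into one-dimensional ask and bid contributions, bound the second derivative by $-\mu_a$ (resp.\ $-\mu_b$) using $\delta+c_a\le D_a$ and $e^{-k_a\delta}\ge e^{-k_a\overline\delta}$, and then combine strong concavity with the first-order variational inequality at a possibly boundary maximizer to get quadratic growth, summing over active coordinates. Your Taylor-with-integral-remainder step and your explicit handling of inactive coordinates are slightly more detailed versions of the paper's strong-concavity inequality and its remark that $P_q$ discards the inactive coordinates.
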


\paragraph{Proof idea.}
The derivative calculation showing active-coordinate strong concavity is given
in \Cref{app:quote-proofs}.

\begin{corollary}[Quote concentration for exact and Hamiltonian-Gibbs policies]\label{cor:quote-concentration}
Suppose that Assumption~\ref{ass:qg} holds. There exist constants \(C>0\),
\(h_0>0\), and
\(\lambda_0>0\) such that, for every \(h=T/N\le h_0\),
\(0<\lambda\le\lambda_0\), \(0\le n\le N-1\), and \(t\in[t_n,t_{n+1})\), the
following holds. Let \(\pi\) be either the exact Bellman Gibbs policy
\(\pi^{\rm ex}\) or the Hamiltonian-Gibbs proxy \(\pi^{\rm Ham}\), and write
\(\pi_{n,q}\) for the corresponding discrete kernel. Then, for every
\(q\in\cQ\),
\[
    \int_A\dist_q^2(\delta,\cA_q^*(t))\pi_{n,q}(d\delta)
    \le
    C\left[h+\lambda(1+\abs{\log\lambda})\right].
\]
If the active optimizer is unique, choose
\(\delta_q^*(t)\in\cA_q^*(t)\) so that \(P_q\delta_q^*(t)\) is the unique active
optimizer. The map \(t\mapsto P_q\delta_q^*(t)\) is Borel measurable because
\(t\mapsto v^0(t)\) is continuous and
\((t,\delta)\mapsto H_q(v^0(t),\delta)\) is continuous on the compact set
\(A\). On inactive coordinates we fix the lower endpoint
\(\underline\delta\), which gives a Borel selector. Write
\[
    \bar\delta_{n,q}^{\pi}:=\int_A\delta\,\pi_{n,q}(d\delta),
    \qquad
    \bar\delta_{h,\lambda}^{\pi}(t,q):=\bar\delta_{n,q}^{\pi}
    \quad\text{for }t\in[t_n,t_{n+1}).
\]
Then
\[
    \abs{P_q(\bar\delta_{n,q}^{\pi}-\delta_q^*(t))}^2
    \le
    C\left[h+\lambda(1+\abs{\log\lambda})\right],
\]
and
\[
    \sum_{q\in\cQ}\int_0^T
    \abs{P_q(\bar\delta_{h,\lambda}^{\pi}(t,q)-\delta_q^*(t))}^2dt
    \le
    C\left[h+\lambda(1+\abs{\log\lambda})\right].
\]
\end{corollary}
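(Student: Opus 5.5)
The plan is to reduce everything to a \emph{local Hamiltonian regret} bound. For either kernel, I would show
\[
    \int_A\bigl[H_q^0(v^0(t))-H_q(v^0(t),\delta)\bigr]\pi_{n,q}(d\delta)
    \le C\bigl[h+\lambda(1+\abs{\log\lambda})\bigr]
\]
uniformly in \(n\), \(t\in[t_n,t_{n+1})\) and \(q\). Given this, Assumption~\ref{ass:qg} integrated against \(\pi_{n,q}\) yields the first claim with constant \(2C/\mu\).

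For the mean quote, the uniqueness of the active optimizer gives \(\dist_q(\delta,\cA_q^*(t))=\abs{P_q(\delta-\delta_q^*(t))}\). Jensen's inequality applied to the convex map \(\xi\mapsto\abs{P_q(\xi-\delta_q^*(t))}^2\) then gives \(\abs{P_q(\bar\delta^\pi_{n,q}-\delta^*_q(t))}^2\le\int_A\dist_q^2\,d\pi_{n,q}\). Integrating this pointwise bound over \(t\in[0,T]\), which is legitimate by the stated Borel measurability of the selector, and summing over the finitely many \(q\in\cQ\), gives the last display with constant \(CT(2Q+1)\).

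To prove the local regret bound, I split \(H_q^0(v^0(t))-H_q(v^0(t),\delta)\) into three pieces:
\[
    \bigl[H_q^0(v^0(t))-H_q^0(w)\bigr]
    +\bigl[H_q^0(w)-H_q(w,\delta)\bigr]
    +\bigl[H_q(w,\delta)-H_q(v^0(t),\delta)\bigr].
\]
Here \(w=\widehat v_{n+1}^{h,\lambda}\) in the Ham case and \(w=v^{h,\lambda}_{n+1}\) in the exact case.

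\emph{Outer terms.} By \Cref{lem:bounds}, all value vectors lie in a common compact set on which \(H_q(\cdot,\delta)\) is Lipschitz uniformly in \(\delta\in A\). So the two outer terms are bounded by \(C\max_r\abs{w_r-v_r^0(t)}\). This equals \(\abs{w_r-v^0_r(t_{n+1})}+\abs{v^0_r(t_{n+1})-v^0_r(t)}\), and it is \(O(h+\lambda(1+\abs{\log\lambda}))\) by three facts: \Cref{thm:main}, the Lipschitz continuity of \(v^0\) in time (its derivative is bounded), and, for the Euler iterate, the corresponding Euler-scheme estimate. For the last, I would invoke the same Gronwall argument as in the proof of \Cref{thm:main}, since \(\widehat v\) is a standard Euler scheme for \eqref{eq:soft-ode} and the soft-to-hard bias is already controlled there.

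\emph{Middle term, Ham case.} This is the Gibbs regret \(\int (H^0_q(w)-H_q(w,\delta))\pi^{\rm Ham}_{n,q}(d\delta)\). I would use the variational identity
\[
    \int H_q(w,\cdot)\,d\pi-\lambda\KL(\pi\|\nu)=H^\lambda_q(w)
\]
for the Gibbs measure. Since \(\KL\ge0\), this gives a regret of at most \(H^0_q(w)-H^\lambda_q(w)\), which is the entropy bias \(O(\lambda(1+\abs{\log\lambda}))\). That bias comes from the lower bound on the \(\nu\)-mass of a \(\lambda\)-neighborhood of a maximizer in the rectangle, using \(m\ge m_-\) and the Lipschitz continuity of \(H_q(w,\cdot)\) in \(\delta\) supplied by the \(C^1\) intensities.

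\emph{Middle term, exact case.} The same argument applied to the scores \(C_h^\delta w(q)/h\) with temperature \(\lambda\) bounds
\[
    \int\Bigl[\sup_{\zeta\in A} C_h^\zeta w(q)-C_h^\delta w(q)\Bigr]\pi^{\rm ex}_{n,q}(d\delta)
\]
by \(h\cdot O(\lambda(1+\abs{\log\lambda}))\). The entropy-bias argument needs Lipschitz continuity of \(\delta\mapsto C_h^\delta w(q)/h\), which follows from the expansion below with an \(O(h)\) remainder uniform in \(\delta\). The main obstacle is transferring this to the Hamiltonian. It requires the uniform one-step expansion
\[
    C_h^\delta\varphi(q)=\varphi_q+hH_q(\varphi,\delta)+O(h^2),
\]
with the remainder uniform over \(\delta\in A\) and over \(\varphi\) in the compact region. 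I expect this to come from the finite-state Feynman--Kac formula of \Cref{app:fk-consistency}: that formula is already used for \(T_h^\lambda\), and its deterministic-quote version is the ingredient in that proof. If it is only stated for \(T_h^\lambda\), I would redo it for fixed \(\delta\) by expanding the matrix exponential of the rate matrix to second order. Dividing by \(h\) then gives a middle term bounded by \(O(h)+O(\lambda(1+\abs{\log\lambda}))\), which completes the local regret bound and hence the corollary.
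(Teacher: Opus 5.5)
Your proposal is correct and follows the paper's route. It uses a local Hamiltonian regret bound for each kernel, as in \Cref{lem:local-regret-exact,lem:local-regret-ham}, obtained from the Gibbs variational inequality, the entropy bias of \Cref{lem:entropy-bias}, and a Lipschitz transfer to $v^0(t)$; it then applies \Cref{ass:qg}, Jensen's inequality and integration in $t$. The one small difference is the exact case: the paper writes the Gibbs exponent as $H_q(\varphi,\delta)+\varepsilon_h(\delta)$ with $\sup_\delta|\varepsilon_h(\delta)|\le Ch$, absorbs $\varepsilon_h$ by the one-Lipschitz property of log-sum-exp, and so applies the entropy-bias lemma to $H_q(\varphi,\cdot)$ directly. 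This avoids your claim that $\delta\mapsto C_h^\delta w(q)/h$ is Lipschitz, which does not follow from a uniform remainder alone; Lipschitz up to $O(h)$ suffices, or it holds genuinely via the $C^1$ dependence of $K^\delta$ on $\delta$.
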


\paragraph{Proof idea.}
Combine local regret with active-coordinate quadratic growth, and then apply
Jensen's inequality for the mean-quote bound; see \Cref{app:quote-proofs}.

\section{Numerical Experiments}
\label{sec:numerics}

This section validates the quantitative bounds in a controlled finite-inventory
Avellaneda--Stoikov market-making model
\cite{AvellanedaStoikov2008,GueantLehalleFernandezTapia2013}. The experiments
are designed to test four claims: value convergence, fresh-sampling policy
performance, active-coordinate quote concentration, and the numerical
consistency between the exact Bellman policy and its Hamiltonian-Gibbs
computational proxy.

\subsection{Experimental setup}
\label{subsec:numerical-setup}

We use exponential arrival intensities
\[
    \Lambda^a(\delta)=\alpha_a e^{-k_a\delta},
    \qquad
    \Lambda^b(\delta)=\alpha_b e^{-k_b\delta}.
\]
Unless otherwise stated, the parameters are
\[
    T=1,\quad Q=5,\quad \sigma=0.20,\quad
    \gamma=0.10,\quad
    \Phi=0.02,\quad
    \eta=0.005,
\]
\[
    A=[0.01,0.70]^2,
    \qquad
    \alpha_a=\alpha_b=1.50,
    \qquad
    k_a=k_b=1.50.
\]
The reference action measure is the uniform probability measure on \(A\).
The hard continuous-time value \(v^0\) is computed by a fine-grid RK4 solver
and is used as the numerical ground truth. Unless otherwise stated, the
reference ODE step size is \(h_{\rm ref}=10^{-3}\), and action integrals
are evaluated by Gauss--Legendre quadrature with 61 nodes. Exact Bellman
validation uses 17 Gauss--Legendre nodes per coordinate for the action
integral in \(T_h^\lambda\).

The large-scale experiments use the soft-HJB Euler scheme
\[
    \widehat v_{N,q}^{h,\lambda}=-\Phi q^2,
    \qquad
    \widehat v_n^{h,\lambda}
    =
    \widehat v_{n+1}^{h,\lambda}
    +hH^\lambda(\widehat v_{n+1}^{h,\lambda}),
\]
and its Hamiltonian-Gibbs policy
\[
    \pi_{n,q}^{\rm Ham}(d\delta)
    \propto
    \exp\left(
        \frac{H_q(\widehat v_{n+1}^{h,\lambda},\delta)}{\lambda}
    \right)\nu(d\delta).
\]
The exact Bellman Gibbs policy is
\[
    \pi_{n,q}^{\rm ex}(d\delta)
    \propto
    \exp\left(
        \frac{C_h^\delta\bigl(v_{n+1}^{h,\lambda}\bigr)(q)}{h\lambda}
    \right)\nu(d\delta),
\]
where \(v^{h,\lambda}\) denotes the exact CE-score Bellman recursion. Thus
\(\widehat v^{h,\lambda}\) is used for the large-scale Hamiltonian-Gibbs
figures, while \(v^{h,\lambda}\) is used only in the exact Bellman validation.
The theoretical connection is the one-step consistency expansion
\[
    T_h^\lambda\varphi
    =
    \varphi+hH^\lambda(\varphi)+O(h^2).
\]
We verify this exact-vs-proxy bridge numerically in \Cref{subsec:exact-validation}.

All log-log slopes reported below are least-squares slopes on the displayed
finite grids. When slope-sensitivity intervals are produced by the code, they
are obtained by resampling deterministic grid points and should not be
interpreted as statistical confidence intervals. Throughout the numerical
section, unless otherwise stated,
\[
    CE(\pi):=J(0,0,0,0;\pi).
\]
When scalar policy gaps are reported, we write \(V^*:=V^*(0,0,0,0)\).

\begin{table}[t]
\centering
\caption{Baseline parameter values used in the numerical experiments.}
\label{tab:numerical-params}
\begin{tabular}{ll}
\toprule
Parameter & Value \\
\midrule
Time horizon & \(T=1\) \\
Inventory bound & \(Q=5\) \\
Initial inventory & \(q_0=0\) \\
Midprice volatility & \(\sigma=0.20\) \\
Risk aversion & \(\gamma=0.10\) \\
Terminal inventory penalty & \(\Phi=0.02\) \\
Running inventory penalty & \(\eta=0.005\) \\
Quote domain & \(A=[0.01,0.70]^2\) \\
Arrival intensities & \(\Lambda^{a,b}(\delta)=1.50e^{-1.50\delta}\) \\
Reference action law & Uniform on \(A\) \\
Reference ODE step & \(h_{\rm ref}=10^{-3}\) \\
Action quadrature & 61 Gauss--Legendre nodes \\
Exact Bellman action quadrature & 17 Gauss--Legendre nodes \\
\bottomrule
\end{tabular}
\end{table}

\subsection{Value convergence}
\label{subsec:value-convergence}

In this section, \(a\lesssim b\) means that \(a\) is bounded above by a
structural constant times \(b\).
For the computational proxy, \Cref{lem:soft-euler-error} gives
\[
    \|\widehat v^{h,\lambda}-v^0\|_\infty
    \lesssim
    h+\lambda(1+|\log\lambda|).
\]
We test the discretization and entropy-bias components separately. First, for
fixed \(\lambda=0.005\), we measure
\[
    E_h:=\|\widehat v^{h,\lambda}-v^\lambda\|_\infty.
\]
Second, using the continuous soft equation, we measure
\[
    E_\lambda:=\|v^\lambda-v^0\|_\infty.
\]

The discretization error is first order. Over
\[
    h\in\{0.020,0.010,0.005,0.0025,0.00125\},
\]
the fitted slope is
\[
    \widehat \beta_h=1.0017.
\]
The errors are
\[
\begin{array}{c|ccccc}
h & 0.020 & 0.010 & 0.005 & 0.0025 & 0.00125 \\
\hline
E_h & 8.28\cdot10^{-4} & 4.13\cdot10^{-4}
& 2.06\cdot10^{-4} & 1.03\cdot10^{-4}
& 5.15\cdot10^{-5}.
\end{array}
\]
This confirms the \(O(h)\) component.

For the entropy bias, the theorem gives an upper bound rather than an exact
asymptotic equivalence. We therefore report the normalized ratio
\[
    \frac{E_\lambda}{\lambda(1+|\log\lambda|)}.
\]
This ratio remains bounded across the tested range, with
\[
    \max_\lambda
    \frac{\|v^\lambda-v^0\|_\infty}
    {\lambda(1+|\log\lambda|)}
    =
    0.8432.
\]
The finite-grid log-log slope against \(\lambda(1+|\log\lambda|)\) is
\[
    \widehat\beta_\lambda=0.8414,
    \qquad
    \widehat\beta_{\lambda,\mathrm{tail}}=0.8683.
\]
We do not interpret this slope as an exact asymptotic exponent. The bounded
normalized ratio is the relevant observation for the theoretical upper bound.

\begin{figure}[t]
\centering
\begin{minipage}{0.48\textwidth}
    \centering
    \includegraphics[width=\textwidth]{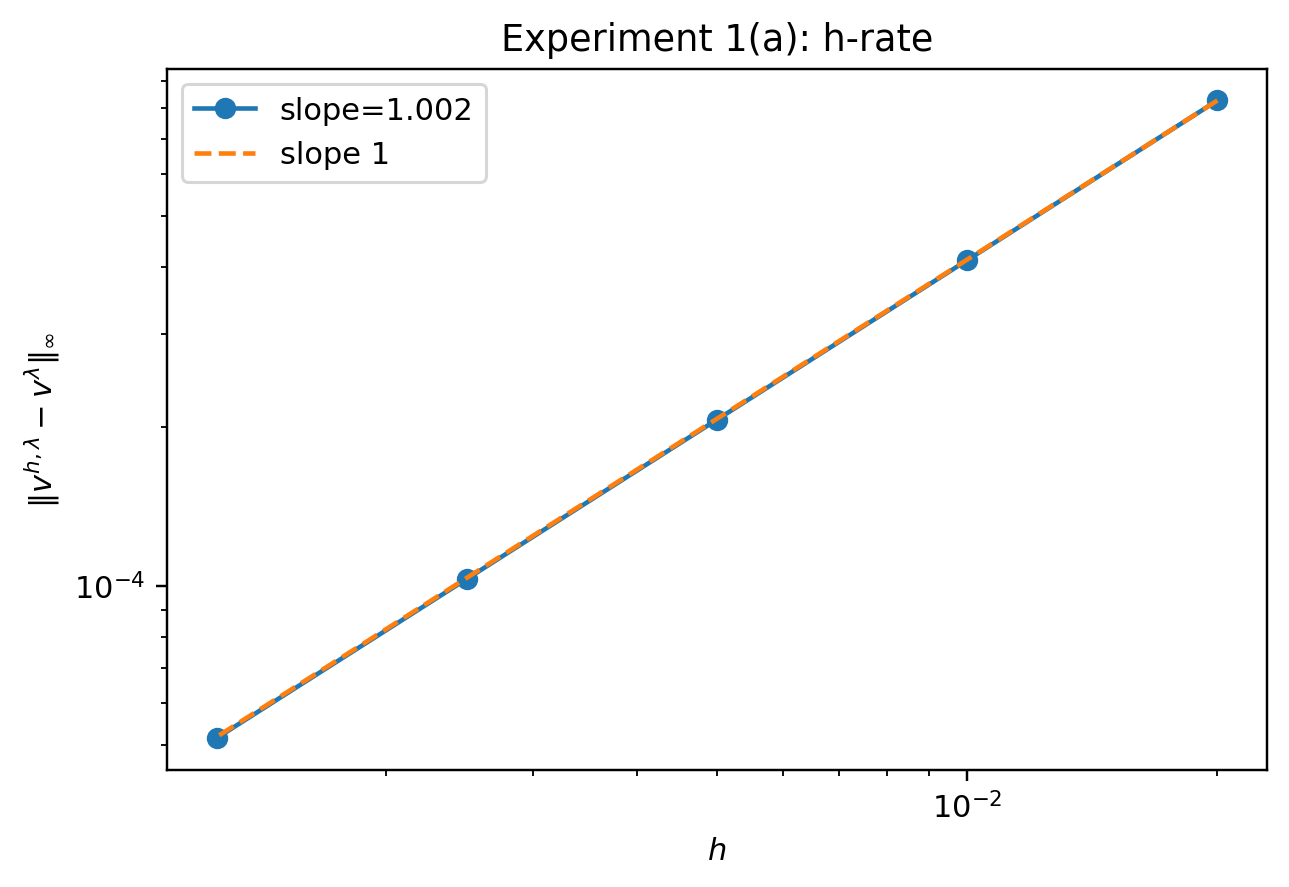}
\end{minipage}
\begin{minipage}{0.48\textwidth}
    \centering
    \includegraphics[width=\textwidth]{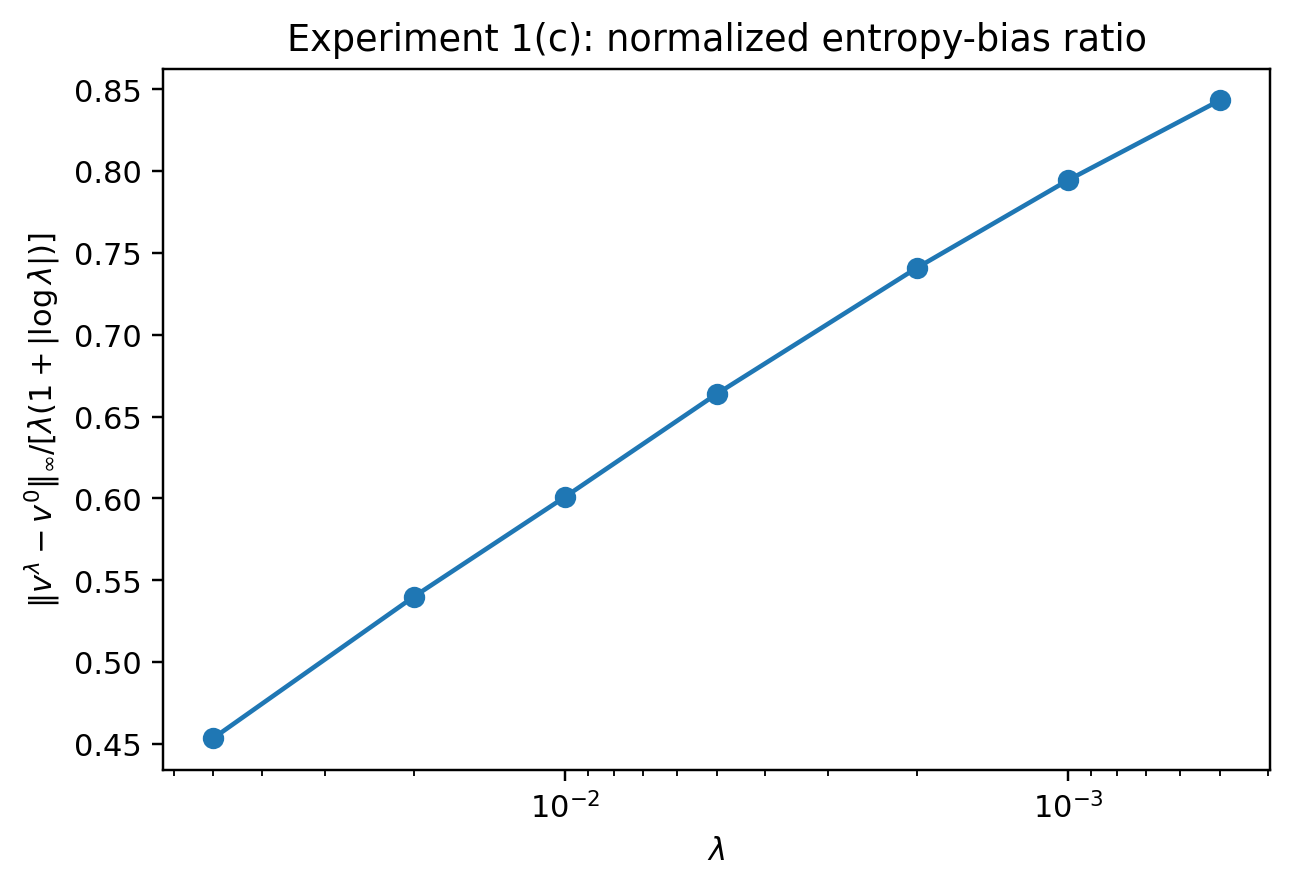}
\end{minipage}
\caption{Value convergence. Left: first-order discretization error
\(\|\widehat v^{h,\lambda}-v^\lambda\|_\infty\). Right: normalized entropy-bias ratio
\(\|v^\lambda-v^0\|_\infty/[\lambda(1+|\log\lambda|)]\), which remains bounded
on the tested grid.}
\label{fig:value-convergence}
\end{figure}

\subsection{Fresh-sampling policy performance}
\label{subsec:policy-performance}

We next test the policy-performance bound. The theoretical guarantee for the Hamiltonian-Gibbs proxy is
\[
    0\le
    V^*(t,x,s,q)-J(t,x,s,q;\pi^{\rm Ham})
    \lesssim
    h+\lambda(1+|\log\lambda|),
\]
when the randomized policy is implemented as a fresh-sampling relaxed Markov
control. Numerically, we evaluate the Hamiltonian-Gibbs proxy by solving the
corresponding fresh-sampling policy-evaluation ODE. The exact Bellman proxy gap
is validated separately in \Cref{subsec:exact-validation}.

Along the coupled path
\[
    (h,\lambda)\in
    \{(0.020,0.050),(0.010,0.020),(0.005,0.010),
    (0.0025,0.005),(0.00125,0.002)\},
\]
the certainty-equivalent performance gap decreases with fitted slope
\[
    \widehat\beta_{\mathrm{CE}}=1.2093.
\]
The observed gaps are reported in \Cref{tab:policy-gap}.

\begin{table}[t]
\centering
\caption{Fresh-sampling policy performance gap along the coupled path.}
\label{tab:policy-gap}
\begin{tabular}{ccccc}
\toprule
\(h\) & \(\lambda\) & \(h+\lambda(1+|\log\lambda|)\)
& \(V^*-CE(\pi^{\rm Ham})\) & fitted slope \\
\midrule
0.02000 & 0.050 & 0.219787 & 0.038608 & 1.2093 \\
0.01000 & 0.020 & 0.108240 & 0.015652 & 1.2093 \\
0.00500 & 0.010 & 0.061052 & 0.007809 & 1.2093 \\
0.00250 & 0.005 & 0.033992 & 0.003891 & 1.2093 \\
0.00125 & 0.002 & 0.015679 & 0.001577 & 1.2093 \\
\bottomrule
\end{tabular}
\end{table}

\begin{figure}[t]
\centering
\includegraphics[width=0.58\textwidth]{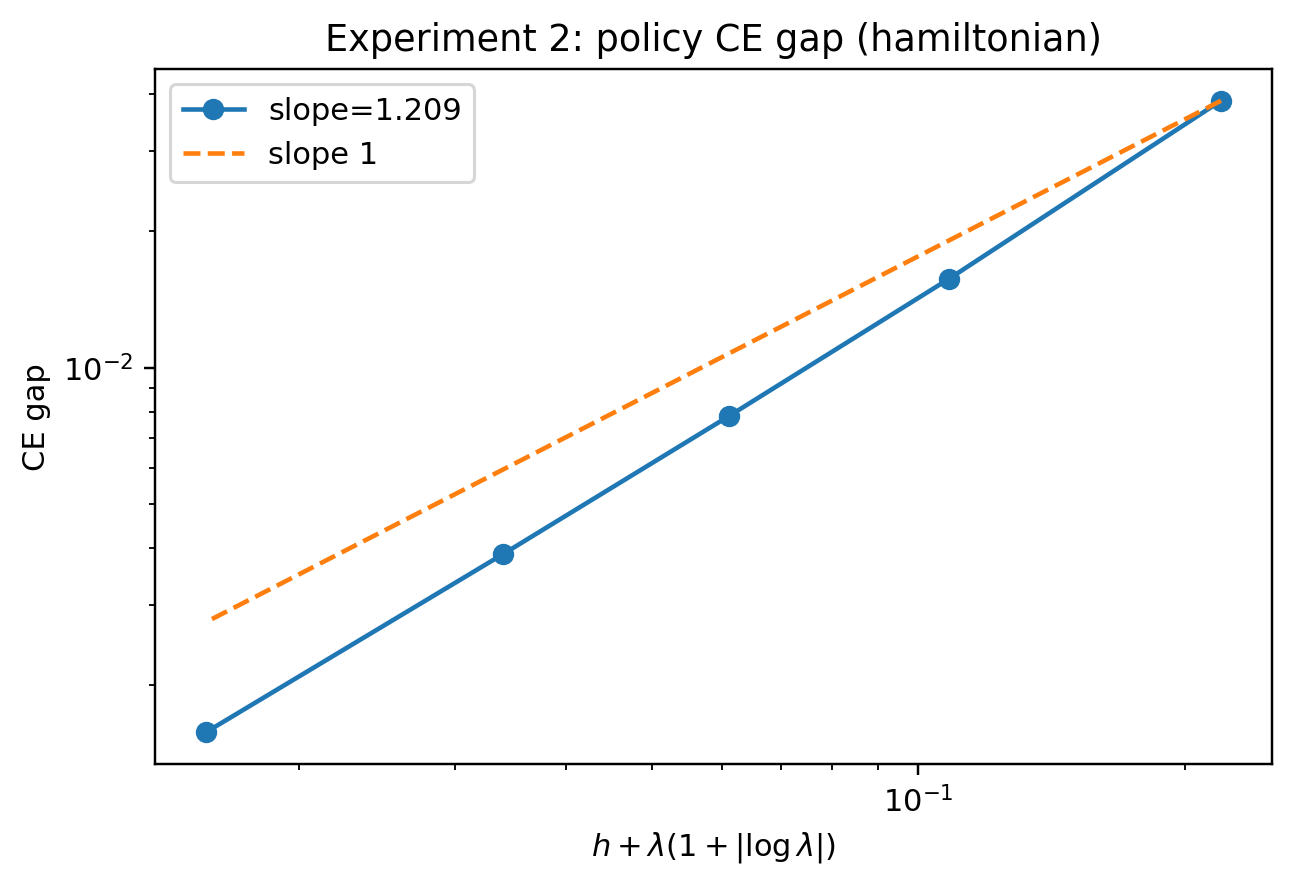}
\caption{Fresh-sampling policy performance gap. The certainty-equivalent gap
\(V^*-CE(\pi^{\rm Ham})\) decays approximately linearly in the theoretical
scale \(h+\lambda(1+|\log\lambda|)\).}
\label{fig:policy-gap}
\end{figure}

We also simulate the induced strategies under a fresh-sampling marked-Poisson
implementation with 5000 paths and common random numbers. The certainty
equivalent is the main metric because it is the risk-sensitive objective
controlled by the theorem. Mean reward, terminal PnL, and inventory diagnostics
are reported as secondary financial diagnostics in
\Cref{tab:financial-diagnostics}. In
the reported parameter regime, the Gibbs proxy has a certainty equivalent close
to the hard optimal policy and
substantially outperforms the constant-spread and inventory-linear baselines.

\subsection{Active-coordinate quote convergence}
\label{subsec:quote-convergence}

Let \(\bar\delta_{\rm Ham}^{h,\lambda}(t,q)\) denote the mean quote of the
Hamiltonian-Gibbs policy at time \(t\) and inventory \(q\), and let
\(\delta_q^*(t)\) denote the hard optimal quote used in the numerical
comparison. The quote convergence theorem controls the squared
active-coordinate error
\[
    \sum_{q\in\cQ}\int_0^T
    |P_q(\bar\delta_{\rm Ham}^{h,\lambda}(t,q)-\delta_q^*(t))|^2\,dt,
\]
not the ordinary \(L^2\)-norm itself. Therefore a first-order slope is expected
for the squared error, while the ordinary \(L^2\)-norm would have the square-root
rate.

The active-coordinate squared quote error decreases with fitted slope
\[
    \widehat\beta_{\mathrm{quote}^2}=1.4627.
\]
The integrated Hamiltonian regret
\[
    \sum_{q\in\cQ}\int_0^T
    \left[
        H_q^0(v^0(t))-
        \int_A H_q(v^0(t),\delta)\pi_{t,q}^{\rm Ham}(d\delta)
    \right]dt
\]
decreases with fitted slope
\[
    \widehat\beta_{\mathrm{regret}}=1.0853.
\]
The corresponding data are reported in \Cref{tab:quote-results}.

The sufficient curvature condition for exponential intensities is also
verified numerically. We obtain
\[
    D_a=D_b=1.0654,
    \qquad
    \frac{2}{\gamma}\log\left(1+\frac{\gamma}{k_a}\right)
    =
    \frac{2}{\gamma}\log\left(1+\frac{\gamma}{k_b}\right)
    =
    1.2908.
\]
Thus \(D_a,D_b\) are strictly below the theoretical thresholds. The resulting
sufficient curvature constants are
\[
    \mu_a=\mu_b=0.2692>0.
\]
Hence the active-coordinate quadratic growth condition holds for the baseline
parameter set.

\begin{figure}[t]
\centering
\begin{minipage}{0.48\textwidth}
    \centering
    \includegraphics[width=\textwidth]{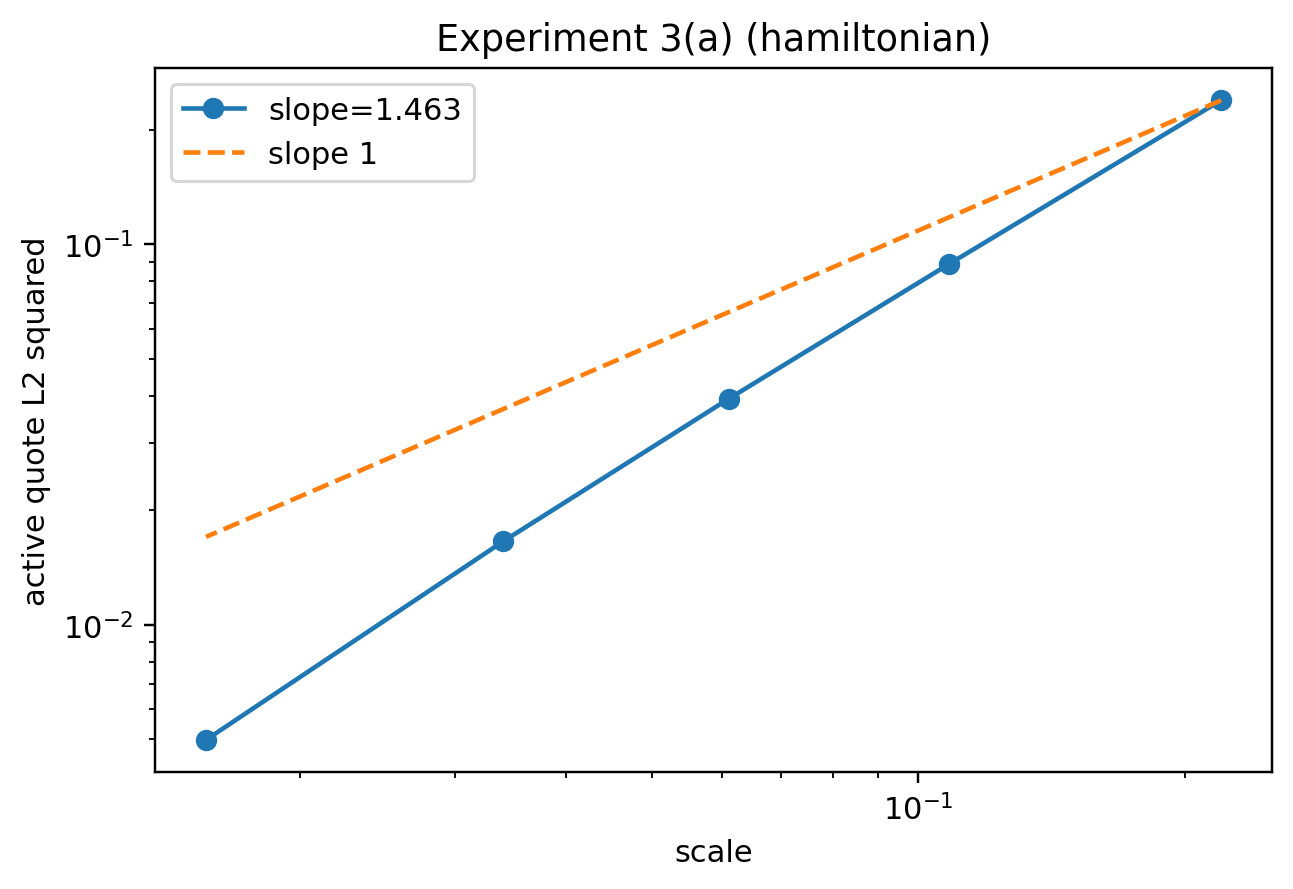}
\end{minipage}
\begin{minipage}{0.48\textwidth}
    \centering
    \includegraphics[width=\textwidth]{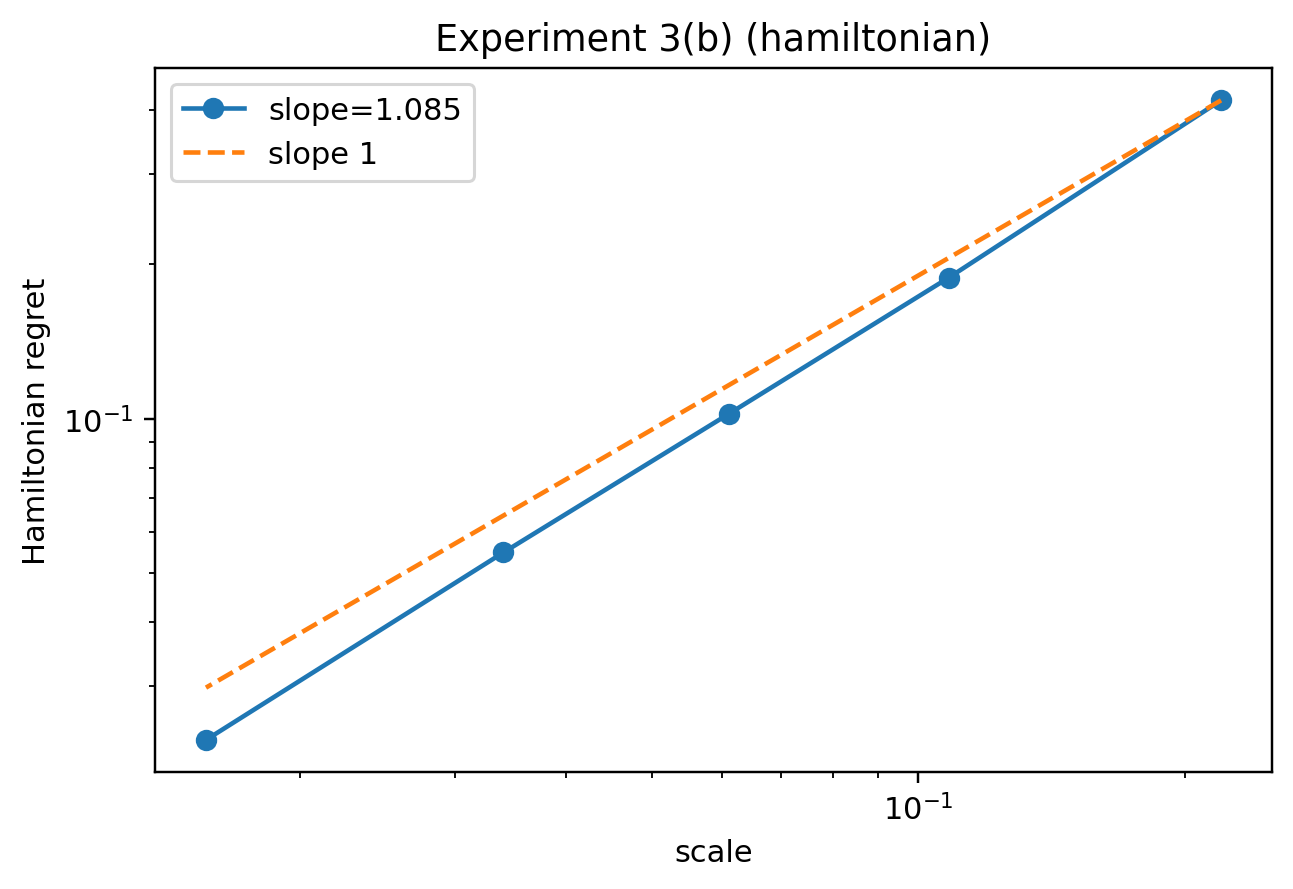}
\end{minipage}
\caption{Quote convergence. Left: squared active-coordinate quote error.
Right: integrated Hamiltonian regret. Both decay with the theoretical scale
\(h+\lambda(1+|\log\lambda|)\).}
\label{fig:quote-convergence}
\end{figure}

\subsection{Exact Bellman validation}
\label{subsec:exact-validation}

The exact Bellman operator is more expensive than the soft-HJB proxy because it
requires evaluating \(C_h^\delta\) for a grid of actions. We therefore use it
for validation rather than for all large-scale figures.

First, we directly test the one-step exact Bellman consistency:
\[
    \left\|
    T_h^\lambda\varphi-
    \bigl[\varphi+hH^\lambda(\varphi)\bigr]
    \right\|_\infty
    =
    O(h^2).
\]
For
\[
    h\in\{0.050,0.025,0.0125,0.00625\},
    \qquad
    \lambda=0.02,
\]
the fitted slope is
\[
    \widehat\beta_{\rm cons}=1.9873.
\]
Moreover, the normalized ratio \(\mathrm{error}/h^2\) stays nearly constant,
between \(0.0887\) and \(0.0911\). The raw values are reported in
\Cref{tab:exact-consistency}.

Second, we compare the exact Bellman Gibbs policy with the Hamiltonian-Gibbs
proxy on the same small grid:
\[
    h\in\{0.050,0.025,0.0125,0.00625\},
    \qquad
    \lambda=0.02.
\]
Here \(v^{h,\lambda}\) is the exact CE-score Bellman value, and
\(\widehat v^{h,\lambda}\) is the soft-HJB Euler proxy.
The corresponding mean-quote functions are denoted by
\(\bar\delta_{\rm ex}^{h,\lambda}(t,q)\) and
\(\bar\delta_{\rm Ham}^{h,\lambda}(t,q)\).
We also write
\[
    CE_{\rm ex}:=CE(\pi^{\rm ex}),
    \qquad
    CE_{\rm Ham}:=CE(\pi^{\rm Ham}).
\]
The maximum exact-vs-proxy discrepancies are
\[
    \max_h \|v^{h,\lambda}-\widehat v^{h,\lambda}\|_\infty
    =
    7.13\cdot10^{-4},
\]
\[
    \max_h
    \sum_{q\in\cQ}\int_0^T
    \left|
        P_q\left(
            \bar\delta_{\rm ex}^{h,\lambda}(t,q)
            -\bar\delta_{\rm Ham}^{h,\lambda}(t,q)
        \right)
    \right|^2\,dt
    =
    1.71\cdot10^{-6},
\]
and
\[
    \max_h
    |CE_{\rm ex}-CE_{\rm Ham}|
    =
    1.07\cdot10^{-5}.
\]
The fitted decay slopes are
\[
    \widehat\beta_{\rm value}=1.0154,
    \qquad
    \widehat\beta_{\rm quote}=2.1205,
    \qquad
    \widehat\beta_{\rm CE}=0.9836.
\]
Thus the exact-vs-proxy gap vanishes as \(h\to0\), supporting the use of the
Hamiltonian-Gibbs proxy in the large-scale experiments. The raw values are
reported in \Cref{tab:exact-vs-proxy}.

\begin{figure}[t]
\centering
\begin{minipage}{0.48\textwidth}
    \centering
    \includegraphics[width=\textwidth]{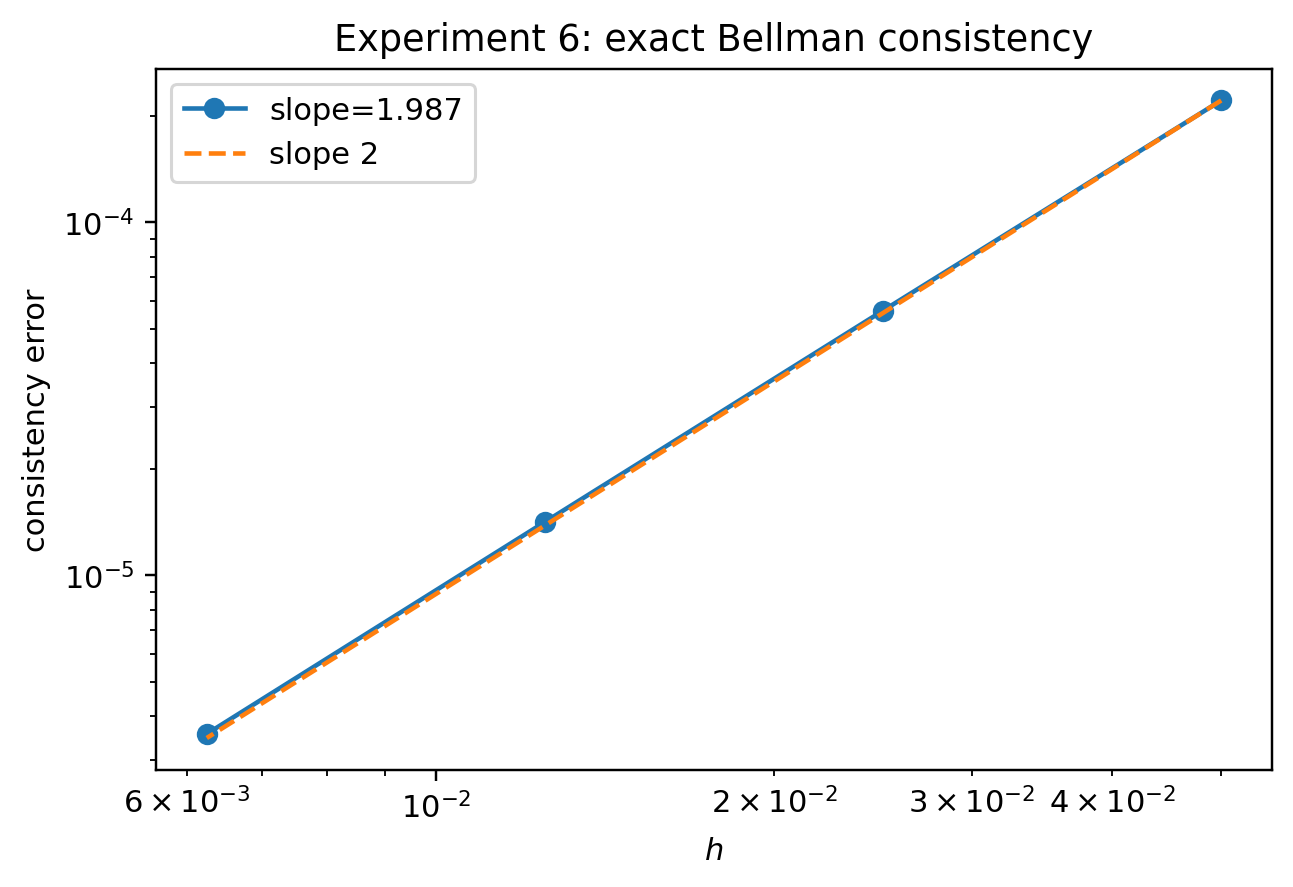}
\end{minipage}
\begin{minipage}{0.48\textwidth}
    \centering
    \includegraphics[width=\textwidth]{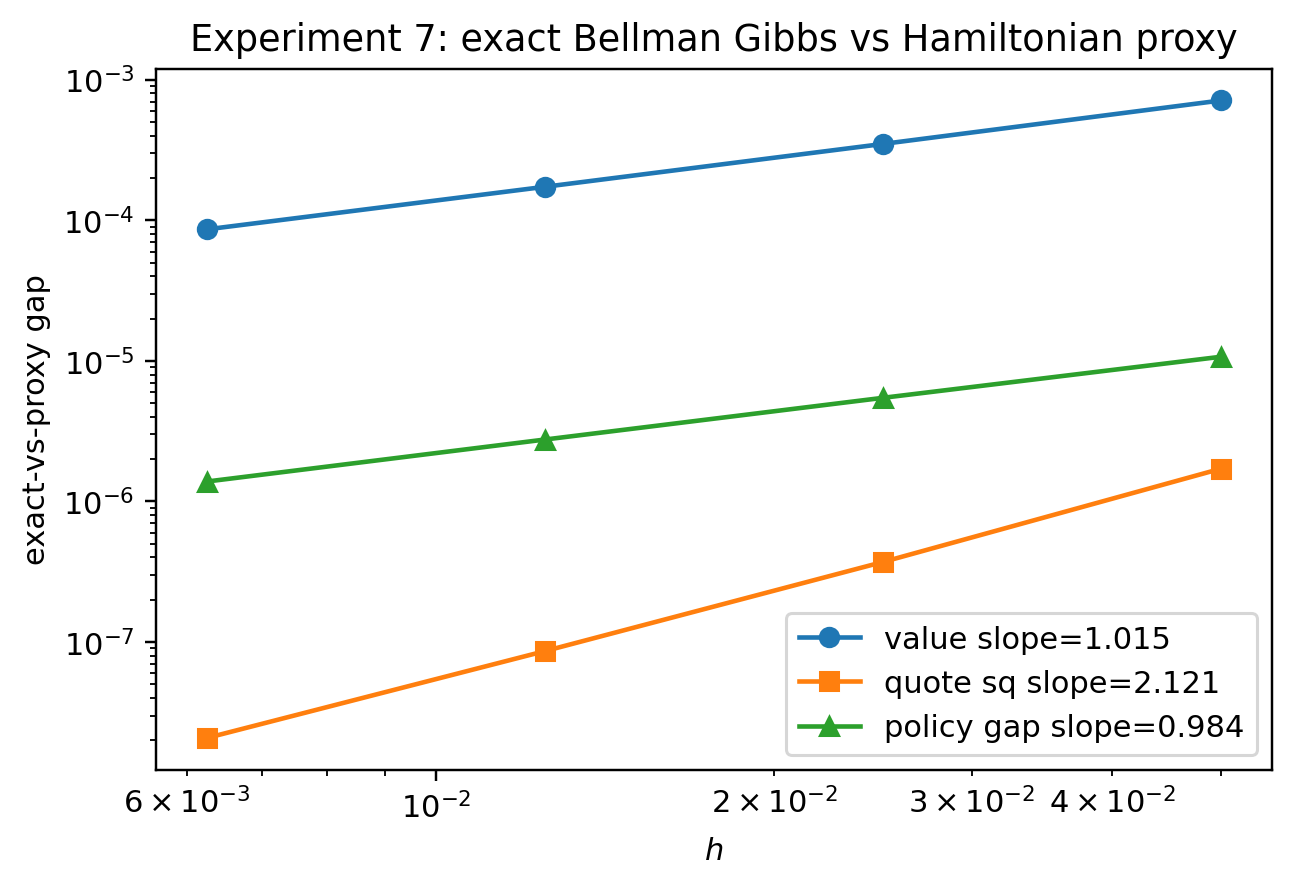}
\end{minipage}
\caption{Exact Bellman validation. Left: one-step consistency
\(\|T_h^\lambda\varphi-[\varphi+hH^\lambda(\varphi)]\|_\infty=O(h^2)\).
Right: exact Bellman Gibbs policy versus Hamiltonian-Gibbs proxy.}
\label{fig:exact-validation}
\end{figure}

Additional quadrature and risk-aversion robustness checks are reported in
\Cref{app:numerical-details}.

\subsection{Summary of numerical findings}
\label{subsec:numerical-summary}

The numerical evidence is consistent with the theoretical rates: first-order
discretization, bounded normalized entropy bias, decreasing policy and quote
gaps, and \(O(h^2)\) exact Bellman one-step consistency; see
\Cref{tab:numerical-summary}.

\begin{table}[t]
\centering
\caption{Summary of empirical rates and validation checks.}
\label{tab:numerical-summary}
\begin{tabular}{ll}
\toprule
Quantity & Empirical result \\
\midrule
Discretization error \(\|\widehat v^{h,\lambda}-v^\lambda\|_\infty\) & slope \(1.0017\) \\
Entropy bias \(\|v^\lambda-v^0\|_\infty\) & bounded normalized ratio; max \(0.8432\) \\
Policy CE gap & slope \(1.2093\) \\
Squared active quote error & slope \(1.4627\) \\
Integrated Hamiltonian regret & slope \(1.0853\) \\
Quadrature error & maximum \(8.20\cdot10^{-13}\) \\
Exact Bellman consistency & slope \(1.9873\) \\
Exact-vs-proxy value gap & slope \(1.0154\) \\
Exact-vs-proxy CE gap & slope \(0.9836\) \\
Exact-vs-proxy quote gap & slope \(2.1205\) \\
\bottomrule
\end{tabular}
\end{table}

Overall, the experiments show that the exact discrete Bellman theory, the
soft-HJB computational proxy, and the fresh-sampling continuous-time policy
implementation are mutually consistent in the finite-inventory risk-sensitive
market-making model.

\section{Conclusion}
\label{sec:conclusion}

This paper establishes an entropy-regularized CE-score Bellman framework for a
finite-inventory risk-sensitive market-making problem. The central object is
the exact CE-score Bellman operator
\[
    (T_h^\lambda\varphi)_q
    =
    h\lambda
    \log\int_A
    \exp\left(\frac{C_h^\delta\varphi(q)}{h\lambda}\right)\nu(d\delta),
\]
which regularizes deterministic-action certainty-equivalent scores rather than
risk-neutral rewards. This distinction is essential because the exponential
certainty equivalent is nonlinear and does not commute with quote
randomization.

The main theorem proves that the exact discrete soft value converges uniformly
to the hard continuous-time risk-sensitive value at rate
\[
    \max_{0\le n\le N}\max_{q\in\cQ}
    |v_{n,q}^{h,\lambda}-v_q^0(t_n)|
    \le
    C\bigl[h+\lambda(1+|\log\lambda|)\bigr].
\]
The \(O(h)\) term comes from the one-step consistency of the CE-score Bellman
operator, while the \(\lambda(1+|\log\lambda|)\) term comes from the entropy
bias of the log-sum-exp Hamiltonian over the two-dimensional quote set. The
same scale also controls the hard-value approximation error after the
risk-sensitive verification step.

The policy results show that the induced Gibbs policies are not only value
approximations. Under a fresh-sampling relaxed-control implementation, both the
exact Bellman Gibbs policy and the Hamiltonian-Gibbs proxy achieve
certainty-equivalent performance gaps of order
\[
    h+\lambda(1+|\log\lambda|).
\]
This implementation condition is part of the theorem, not a technical detail:
sampling a quote once and freezing it under an outer risk-sensitive certainty
equivalent is a different operation. The fresh-sampling formulation aligns the
marked point-process compensators with the averaged Hamiltonian in the relaxed
policy-evaluation equation.

The quote-concentration result gives a direct interpretation of the policy
bound at the level of bid and ask spreads. Under active-coordinate quadratic
growth, the Gibbs policies concentrate around the hard optimal quote set in
mean squared active-coordinate distance. For exponential arrival intensities,
the paper gives a verifiable curvature certificate that ensures this condition
in the baseline Avellaneda--Stoikov specification. The numerical experiments
confirm the predicted value convergence, policy-performance behavior,
quote-concentration scale, and exact-vs-proxy consistency.

Several extensions are natural. First, one can calibrate the arrival
intensities from limit-order-book data and study how estimation error propagates
through the CE-score Bellman operator. Second, the finite-inventory model can be
extended to include adverse-selection signals, transient market impact, latency,
or multi-asset inventory constraints. Third, the Hamiltonian-Gibbs proxy can be
combined with online learning of unknown intensities, where the exploration
distribution is updated from data rather than fixed ex ante. Finally, the
fresh-sampling relaxed-control formulation provides a rigorous starting point
for continuous-time risk-sensitive reinforcement learning algorithms whose
randomized policies carry direct performance guarantees in point-process
market-making models.

\bibliographystyle{unsrtnat}
\bibliography{references}

\appendix
\section{Auxiliary Estimates for the Market-Making Model}
\label{app:auxiliary}

Throughout the appendices, \(C\) denotes a finite constant depending only on
\[
    T,Q,\gamma,\sigma,\overline\Lambda,
    \underline\delta,\overline\delta,\Phi,\eta,m_-,m_+,
\]
and on the \(C^1\)-bounds of \(\Lambda^a,\Lambda^b\) on
\([\underline\delta,\overline\delta]\). Its value may change from line to line,
but it is independent of \(h\) and \(\lambda\).

\begin{lemma}[Realization of Markov strict controls]\label{lem:markov-realization}
Let \(\vartheta:[t,T]\times\cQ\to A\) be Borel measurable. Then there exists a
strict admissible system with
\[
    \delta_u=\vartheta(u,q_{u-}).
\]
\end{lemma}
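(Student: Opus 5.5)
We construct the system explicitly by predictable thinning of independent Poisson random measures. On some probability space take a Brownian motion \(W\) and two independent Poisson random measures \(\mathcal N^a(du,dz)\), \(\mathcal N^b(du,dz)\) on \([t,T]\times[0,\overline\Lambda]\) with intensity \(du\,dz\), all mutually independent. Let \(\bbF\) be the usual augmentation of the filtration generated by \(W\) and these measures. Note that \(\bbF\) is fixed before the control enters, which is exactly the point of \Cref{def:strict-system}. We then define the inventory pathwise as the solution of
\[
    q_u=q+\int_t^u\!\!\int_0^{\overline\Lambda}\one_{\{z\le \Lambda^b(\vartheta^b(r,q_{r-}))\one_{\{q_{r-}<Q\}}\}}\,\mathcal N^b(dr,dz)
    -\int_t^u\!\!\int_0^{\overline\Lambda}\one_{\{z\le \Lambda^a(\vartheta^a(r,q_{r-}))\one_{\{q_{r-}>-Q\}}\}}\,\mathcal N^a(dr,dz).
\]

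\textbf{Step 1: pathwise existence and uniqueness.} Almost surely the measures have finitely many atoms on \([t,T]\times[0,\overline\Lambda]\), with distinct time coordinates and no common times between \(\mathcal N^a\) and \(\mathcal N^b\). Order the atom times \(\tau_1<\tau_2<\cdots<\tau_K\) and construct \(q\) recursively. The process is constant between atom times. At \(\tau_k\), the state \(q_{\tau_k-}\) is known, and we decide whether the atom \((\tau_k,z_k)\) is accepted by comparing \(z_k\) with the threshold evaluated at \((\tau_k,q_{\tau_k-})\). This gives a unique càdlàg \(\cQ\)-valued adapted process. The boundary indicators ensure that no jump leaves \(\cQ\).

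\textbf{Step 2: predictability and measurability.} Since \(q_{u-}\) is left-continuous and adapted, it is predictable. Because \(\vartheta\) is Borel, \(\delta_u:=\vartheta(u,q_{u-})\) is an \(A\)-valued predictable process. For the same reason, the integrand \((r,z)\mapsto\one_{\{z\le \Lambda^a(\vartheta^a(r,q_{r-}))\one_{\{q_{r-}>-Q\}}\}}\) is \(\mathcal P\otimes\mathcal B\)-measurable, where \(\mathcal P\) is the predictable \(\sigma\)-field.

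\textbf{Step 3: identifying the compensators.} Define \(N^a\) and \(N^b\) as the two thinned integrals above. They are simple counting processes with no common jumps, because the underlying measures share no atom times. The compensator of \(\mathcal N^a\) is \(du\,dz\), so for any nonnegative predictable integrand the compensator of the thinned integral is obtained by integrating the integrand against \(du\,dz\). Here this gives
\[
    \int_t^u\Lambda^a(\delta^a_r)\one_{\{q_{r-}>-Q\}}\,dr,
\]
and similarly for \(N^b\). This uses the standard result that for Poisson random measures, stochastic integrals of predictable bounded integrands minus their compensators are martingales; boundedness by \(\overline\Lambda\) makes everything integrable. Finally set \(S_u=s+\sigma(W_u-W_t)\) and define \(X\) by the Stieltjes integral in \Cref{def:strict-system}. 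The process \(W\) remains an \(\bbF\)-Brownian motion because it is independent of the Poisson measures.

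\textbf{Main obstacle.} The only delicate step is the fixed-point construction in Step 1. The state \(q_{r-}\) enters the acceptance rule, so \(q\) is defined implicitly. This circularity is resolved by the a.s. finiteness and ordering of atoms. The verification of predictable measurability in Step 2 then follows, and Step 3 reduces to a textbook compensator identity.
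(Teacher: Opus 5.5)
Your proposal is correct and follows the paper's proof. Both arguments thin two independent Poisson random measures (independent of \(W\)) with acceptance threshold \(\Lambda^a(\vartheta^a(r,q_{r-}))\one_{\{q_{r-}>-Q\}}\) (and its bid analogue), solve the finite-state jump equation pathwise, and read off the compensators. The only difference is cosmetic: you truncate the mark space to \([0,\overline\Lambda]\) where the paper uses \([0,\infty)\). This gives finitely many atoms directly, and you also write out the recursive construction of \(q\) and the predictability check, which the paper leaves implicit.
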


\begin{proof}
Use a Brownian motion \(W\) and two independent Poisson random measures
\(\mathcal N^a(du,dz)\), \(\mathcal N^b(du,dz)\) on
\([t,T]\times[0,\infty)\), each with intensity \(du\,dz\) and independent of
\(W\), and thin them by
\[
    N_u^a-N_t^a
    :=\int_t^u\int_0^\infty
    \one_{\{z\le\Lambda^a(\vartheta^a(r,q_{r-}))\one_{\{q_{r-}>-Q\}}\}}
    \mathcal N^a(dr,dz),
\]
\[
    N_u^b-N_t^b
    :=\int_t^u\int_0^\infty
    \one_{\{z\le\Lambda^b(\vartheta^b(r,q_{r-}))\one_{\{q_{r-}<Q\}}\}}
    \mathcal N^b(dr,dz).
\]
The accepted intensities are bounded by \(\overline\Lambda\), hence the
finite-state jump equation is pathwise unique and nonexplosive. Independence
and diffuse time intensity give no common ask-bid jumps, and the compensators
are the stated ones.
\end{proof}

\begin{lemma}[Bounded-intensity exponential moments]\label{lem:exp-moments}
Let \(L\) be a counting process with predictable intensity \(\ell_u\le
\bar\ell\). For every stopping time \(\tau\le T\) and every \(a\ge0\),
\[
    \E\left[e^{a(L_\tau-L_t)}\right]
    \le
    \exp\{(e^a-1)\bar\ell(T-t)\}.
\]
If \(\beta\) is predictable and \(|\beta_u|\le\bar\beta\), then for every
\(\theta\in\R\),
\[
    \E\exp\left(\theta\int_t^\tau\beta_u\,dW_u\right)
    \le
    \exp\left(\frac{\theta^2\bar\beta^2(T-t)}2\right),
\]
and consequently, for every \(a\ge0\),
\begin{equation}\label{eq:combined-exp-moment}
    \sup_{\tau\le T}
    \E\exp\left(
        a\left|\int_t^\tau\beta_u\,dW_u\right|+a(L_\tau-L_t)
    \right)<\infty.
\end{equation}
In particular, \eqref{eq:combined-exp-moment} holds for
\(L=N^a+N^b\) in every strict admissible system with \(\bar\ell=2\overline
\Lambda\).
\end{lemma}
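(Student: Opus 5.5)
We prove the three claims of \Cref{lem:exp-moments} in order: a supermartingale argument for the counting process, the same argument for the stochastic integral, and Cauchy--Schwarz to combine them.

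\emph{Counting-process bound.} Fix \(a\ge0\) and set \(c:=e^a-1\ge0\). The process
\[
    Z_u:=\exp\Bigl\{a(L_u-L_t)-c\int_t^u\ell_r\,dr\Bigr\},\qquad u\in[t,T],
\]
is the Dol\'eans-Dade exponential of \(\int_t^\cdot c\,(dL_r-\ell_r\,dr)\). It is therefore a nonnegative local martingale, hence a supermartingale with \(\E[Z_\tau]\le1\) for every stopping time \(\tau\le T\) (by Fatou along a localizing sequence \(\tau\wedge\sigma_k\)). Since \(\ell_r\le\bar\ell\) and \(c\ge0\), we have \(c\int_t^\tau\ell_r\,dr\le c\bar\ell(T-t)\). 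This gives
\[
    \E\bigl[e^{a(L_\tau-L_t)}\bigr]
    \le e^{c\bar\ell(T-t)}\,\E[Z_\tau]
    \le e^{c\bar\ell(T-t)},
\]
which is the first claim.

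\emph{Brownian bound.} The same argument applies to
\[
    \exp\Bigl\{\theta\int_t^u\beta_r\,dW_r-\tfrac{\theta^2}{2}\int_t^u\beta_r^2\,dr\Bigr\}.
\]
Because \(\beta\) is bounded, Novikov's criterion makes this a true martingale, although the supermartingale property is all we need. Since \(\int_t^\tau\beta_r^2\,dr\le\bar\beta^2(T-t)\), optional stopping yields the Gaussian-type bound.

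\emph{Combined bound \eqref{eq:combined-exp-moment}.} Use \(e^{a|y|}\le e^{ay}+e^{-ay}\) and Cauchy--Schwarz:
\[
    \E\Bigl[e^{a|\int_t^\tau\beta\,dW|+a(L_\tau-L_t)}\Bigr]
    \le
    \sum_{\pm}\Bigl(\E\bigl[e^{\pm2a\int_t^\tau\beta\,dW}\bigr]\Bigr)^{1/2}
    \Bigl(\E\bigl[e^{2a(L_\tau-L_t)}\bigr]\Bigr)^{1/2}.
\]
The first two bounds, applied with \(\theta=\pm2a\) and with \(2a\) in place of \(a\), bound each factor by a constant that does not depend on \(\tau\). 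Taking the supremum over \(\tau\le T\) then gives \eqref{eq:combined-exp-moment}.

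\emph{Application to strict systems.} In a strict admissible system, \(L=N^a+N^b\) is a counting process because \(N^a\) and \(N^b\) have no common jumps. Its predictable intensity is \(\Lambda^a(\delta^a_u)\one_{\{q_{u-}>-Q\}}+\Lambda^b(\delta^b_u)\one_{\{q_{u-}<Q\}}\le2\overline\Lambda\), so \(\bar\ell=2\overline\Lambda\).

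The only point needing care is justifying \(\E[Z_\tau]\le1\) without knowing that \(Z\) is a true martingale. The nonnegative-supermartingale and Fatou argument above handles this and avoids any integrability assumption on \(L\).
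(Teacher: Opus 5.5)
Your proof is correct and follows the paper's argument. You use the exponential-compensator (Dol\'eans-Dade) supermartingale for the counting part, the exponential martingale for the Brownian integral, and \(e^{a|x|}\le e^{ax}+e^{-ax}\) with H\"older (your Cauchy--Schwarz) for the combined bound; your explicit check that \(N^a+N^b\) is a counting process with intensity at most \(2\overline\Lambda\) simply spells out the ``in particular'' clause the paper leaves implicit.
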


\begin{proof}
For the counting process, the exponential compensator formula gives that
\[
    \exp\left(
        a(L_{u\wedge\tau}-L_t)
        -(e^a-1)\int_t^{u\wedge\tau}\ell_r\,dr
    \right)
\]
is a nonnegative local martingale, hence a supermartingale after localization.
Since \(\ell_r\le\bar\ell\), this gives the counting estimate. The Brownian
bound follows from the exponential martingale for
\(\int_t^\cdot\beta_u\,dW_u\) and
\(\int_t^\tau\beta_u^2du\le\bar\beta^2(T-t)\). Finally use
\(e^{a|x|}\le e^{ax}+e^{-ax}\) and H\"older's inequality to obtain
\eqref{eq:combined-exp-moment}.
\end{proof}

\begin{lemma}[Non-explosion and inventory constraint]\label{lem:well-posedness}
Every strict admissible system satisfies
\[
    q_u\in\cQ,
    \qquad
    \E\left[(N_T^a-N_t^a)+(N_T^b-N_t^b)\right]
    \le2\overline\Lambda(T-t).
\]
In particular, \((N_T^a-N_t^a)+(N_T^b-N_t^b)<\infty\) a.s.
\end{lemma}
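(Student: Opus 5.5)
The inventory constraint is proved pathwise, jump by jump, and the count bound is proved by localizing the compensator identity at the exit time of \(\cQ\).

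\emph{Inventory stays in \(\cQ\).} Let \(\tau:=\inf\{u\ge t: q_u\notin\cQ\}\wedge T\). Since \(N^a,N^b\) are simple with no common jumps, \(q\) moves in unit steps, so leaving \(\cQ\) requires a jump from \(Q\) to \(Q+1\) or from \(-Q\) to \(-Q-1\). Consider
\[
\int_t^T \one_{\{q_{r-}=Q\}}\,dN^b_r .
\]
Its compensator is
\[
\int_t^T \one_{\{q_{r-}=Q\}}\Lambda^b(\delta^b_r)\one_{\{q_{r-}<Q\}}\,dr = 0 .
\]
The integrand is predictable and bounded, and the counting process is nonnegative and nondecreasing. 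It is therefore a nondecreasing process with zero compensator, so its expectation is zero and it vanishes a.s. No bid fill ever occurs at \(q_{r-}=Q\), and symmetrically no ask fill occurs at \(q_{r-}=-Q\). Hence a.s. \(q\) never leaves \(\cQ\) before explosion. Localization by \(\sigma_k:=\inf\{u:N^a_u+N^b_u-N^a_t-N^b_t\ge k\}\) keeps all quantities finite at this stage.

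\emph{Count bound.} Put \(L:=N^a+N^b\). Its compensator \(\int_t^\cdot(\Lambda^a_{q_{r-}}+\Lambda^b_{q_{r-}})\,dr\) has rate at most \(2\overline\Lambda\). For the localizing times \(\sigma_k\), the compensated process stopped at \(\sigma_k\wedge T\) is a true martingale, because it is bounded by \(k+2\overline\Lambda T\). This gives
\[
\E[L_{\sigma_k\wedge T}-L_t]\le 2\overline\Lambda(T-t).
\]
Monotone convergence as \(k\to\infty\) yields the claimed expectation bound, and finiteness a.s. follows. In particular \(\sigma_k\uparrow\infty\) a.s., so the localization in the first step covers all of \([t,T]\).

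The only delicate point is formal: \(N^a,N^b\) are only assumed to be simple counting processes, so a priori \(L\) could explode. I would handle this by using \(\sigma_k\) throughout and applying the compensator argument for the boundary jumps on each \([t,\sigma_k\wedge T]\). The alternative is to invoke \Cref{lem:exp-moments} with \(a=1\), which already presupposes the intensity bound. That bound holds regardless of the state, since \(\Lambda^{a},\Lambda^b\le\overline\Lambda\), so either route works.
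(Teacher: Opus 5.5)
Your proposal is correct and follows the paper's own argument. The count bound comes from taking expectations against a compensator with rate at most \(2\overline\Lambda\), and the boundary indicators in the compensators exclude jumps that would leave \(\cQ\). You add details the paper leaves implicit, namely that \(\int_t^T\one_{\{q_{r-}=Q\}}\,dN^b_r\) has zero compensator and hence vanishes a.s., and the localization at \(\sigma_k\) followed by monotone convergence; both are sound.
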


\begin{proof}
The compensator of \(N^a+N^b\) is bounded by \(2\overline\Lambda\,du\). Taking
expectations gives the stated first-moment estimate. A finite expectation for
the total count implies non-explosion. The indicators
\(\one_{\{q_{u-}>-Q\}}\) and \(\one_{\{q_{u-}<Q\}}\) remove jumps that would leave
\(\cQ\).
\end{proof}

\begin{lemma}[Well-defined exponential criterion]\label{lem:J-well-defined}
For every strict admissible system \(\mathfrak S\in\mathfrak A_t\),
\[
    0<
    \E^{\mathfrak S}_{t,x,s,q}
    \exp\left(
    -\gamma\left[
        X_T+q_TS_T-\Phi q_T^2
        -\int_t^T\eta q_u^2\,du
    \right]
    \right)
    <\infty .
\]
Consequently \(J(t,x,s,q;\mathfrak S)\) is well-defined.
\end{lemma}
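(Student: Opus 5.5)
Positivity is immediate, since the integrand is a strictly positive random variable, so the work is in finiteness. The plan is to bound the exponent pathwise by quantities whose exponential moments are controlled by \Cref{lem:exp-moments}. By the self-financing identity \eqref{eq:self-financing} with \(u=T\),
\[
    X_T+q_TS_T=x+qs+\int_t^T q_{r-}\,dS_r+\int_t^T\delta_r^a\,dN_r^a+\int_t^T\delta_r^b\,dN_r^b .
\]
Since \(\delta\) takes values in \(A=[\underline\delta,\overline\delta]^2\), the spread terms are bounded below by \(-|\underline\delta|\,(N_T^a-N_t^a+N_T^b-N_t^b)\). By \Cref{lem:well-posedness}, \(q_u\in\cQ\), so \(\Phi q_T^2\le\Phi Q^2\) and \(\int_t^T\eta q_u^2\,du\le\eta Q^2(T-t)\).

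Combining these bounds, the quantity \(-\gamma[\cdots]\) inside the exponential is at most
\[
    -\gamma(x+qs)+\gamma(\Phi+\eta T)Q^2
    +\gamma\sigma\Bigl|\int_t^T q_{r-}\,dW_r\Bigr|
    +\gamma|\underline\delta|\,L_T ,
\]
where \(L_u:=N_u^a+N_u^b-N_t^a-N_t^b\). Here \(\int_t^T q_{r-}\,dS_r=\sigma\int_t^T q_{r-}\,dW_r\) is a genuine stochastic integral with predictable, bounded integrand \(\beta_r=q_{r-}\), \(|\beta|\le Q\).

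Apply \eqref{eq:combined-exp-moment} with \(\tau=T\), \(a=\gamma\max\{\sigma,|\underline\delta|\}\), and \(\bar\ell=2\overline\Lambda\). This is legitimate because the compensator of \(L\) has density at most \(2\overline\Lambda\). The result is that the expectation of the exponential of the bound is finite. Hence the criterion's expectation is finite, and \(J=-\gamma^{-1}\log(\cdot)\) is a well-defined real number.

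The only point requiring care, and the one I expect to be the mild obstacle, is justifying that \(\int q_{r-}\,dS_r\) arising from integration by parts is the Itô integral against \(W\). This needs \(q_{r-}\) to be predictable, and it needs the absence of a covariation term between \(q\) and \(S\): \(S\) is continuous and \(q\) is of finite variation, so \([q,S]=0\). One should also check that no cross term with \(X\) appears; this is built into \eqref{eq:self-financing}, which the paper asserts for every strict system.
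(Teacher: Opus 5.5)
Your proof is correct and follows the paper's argument essentially verbatim. The paper also rewrites the exponent via the self-financing identity \eqref{eq:self-financing}, bounds the penalty terms using $|q_u|\le Q$, bounds the spread terms by a constant times the total fill count, and gets finiteness from \eqref{eq:combined-exp-moment} in \Cref{lem:exp-moments}. The only cosmetic difference is that you use the one-sided bound $-|\underline\delta|$ on the spread terms where the paper uses $D_A=\max\{|\underline\delta|,|\overline\delta|\}$, which changes nothing.
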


\begin{proof}
Let
\[
    R_T:=
    X_T+q_TS_T-\Phi q_T^2-\int_t^T\eta q_u^2\,du .
\]
By \eqref{eq:self-financing},
\[
    R_T
    =
    x+qs+\int_t^Tq_{u-}\,dS_u
    +\int_t^T\delta_u^a\,dN_u^a
    +\int_t^T\delta_u^b\,dN_u^b
    -\Phi q_T^2-\int_t^T\eta q_u^2\,du .
\]
Set \(D_A:=\max\{|\underline\delta|,|\overline\delta|\}\). Since
\(|q_u|\le Q\) and \(|\delta_u^{a,b}|\le D_A\),
\[
    e^{-\gamma R_T}
    \le
    C
    \exp\left(
        \gamma\left|\int_t^Tq_{u-}\,dS_u\right|
        +\gamma D_A\bigl[(N_T^a-N_t^a)+(N_T^b-N_t^b)\bigr]
    \right),
\]
where \(C<\infty\) depends on the fixed initial state and structural
parameters. The right-hand side has finite expectation by
\Cref{lem:exp-moments}. Positivity is immediate because the exponential random
variable is strictly positive.
\end{proof}

\subsection{A priori bounds}

\begin{lemma}[One-period certainty-equivalent bounds]\label{lem:one-step-bound}
Let \(\mathbf 1\) denote the vector of ones in \(\R^{2Q+1}\). There exists
\(K<\infty\) such that, for every \(c\in\R\), every \(q\in\cQ\), every
deterministic quote \(\delta\in A\), and every \(0<h\le1\),
\[
    c-Kh\le C_h^\delta(c\mathbf 1)(q)\le c+Kh.
\]
\end{lemma}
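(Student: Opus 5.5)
With \(\varphi=c\mathbf 1\), the continuation value \(F_\varphi(X_h,S_h,q_h)=X_h+q_hS_h+c\) contributes the additive constant \(c\) directly. Since \(C_h^\delta\) is a certainty equivalent, it is translation-equivariant: \(C_h^\delta(c\mathbf 1)(q)=c+C_h^\delta(0)(q)\). It therefore suffices to show \(|C_h^\delta(0)(q)|\le Kh\), uniformly in \(q\), \(\delta\in A\) and \(0<h\le1\).

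To do this, I would write the terminal quantity through the self-financing identity \eqref{eq:self-financing}, started from \(X_0=S_0=0\):
\[
R_h:=X_h+q_hS_h-\int_0^h\eta q_u^2\,du
=\int_0^h q_{u-}\,dS_u+\delta^aN^a_h+\delta^bN^b_h-\int_0^h\eta q_u^2\,du .
\]
The counting processes are those of the fixed-quote inventory chain. Then \(C_h^\delta(0)(q)=-\gamma^{-1}\log\E[e^{-\gamma R_h}]\).

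\textbf{Upper bound on \(C_h^\delta(0)\)}, i.e.\ a lower bound on \(\E e^{-\gamma R_h}\). By Jensen, \(\E e^{-\gamma R_h}\ge e^{-\gamma\E R_h}\), so \(C_h^\delta(0)(q)\le \E R_h\). The stochastic integral has mean zero because \(q\) is bounded. The fill terms satisfy
\[
\E[\delta^aN^a_h+\delta^bN^b_h]\le 2D_A\overline\Lambda h
\]
by \Cref{lem:well-posedness}, and the running penalty is nonpositive. Hence \(C_h^\delta(0)(q)\le 2D_A\overline\Lambda h\).

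\textbf{Lower bound on \(C_h^\delta(0)\)}, i.e.\ an upper bound on \(\E e^{-\gamma R_h}\). Bound
\[
e^{-\gamma R_h}\le e^{\gamma\eta Q^2h}\,e^{-\gamma\sigma\int_0^h q_{u-}dW_u}\,e^{\gamma D_A(N^a_h+N^b_h)}.
\]
I will avoid Hölder, which would lose the sharp \(O(h)\) constant only mildly but needlessly, and instead use the exponential martingale structure directly. Because \(W\) is independent of the inventory chain, I condition on the chain path. The Brownian factor then has conditional expectation \(\exp(\tfrac12\gamma^2\sigma^2\int_0^hq_u^2du)\le e^{\gamma^2\sigma^2Q^2h/2}\). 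The counting factor is at most \(\exp\{(e^{\gamma D_A}-1)2\overline\Lambda h\}\) by \Cref{lem:exp-moments}. Multiplying gives \(\E e^{-\gamma R_h}\le e^{K'\gamma h}\), so \(C_h^\delta(0)(q)\ge -K'h\).

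Taking \(K=\max\{2D_A\overline\Lambda,K'\}\) finishes the proof. The constant depends only on \(\gamma,\sigma,Q,\eta,\overline\Lambda,\underline\delta,\overline\delta\); the restriction \(h\le1\) is not even needed.

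The main obstacle is mild: keeping the exponential bound linear in \(h\) rather than merely finite. Conditioning on the chain, which uses the independence built into the canonical fixed-quote system, handles the Brownian part exactly. The counting estimate of \Cref{lem:exp-moments} already carries the factor \(h=T-t\).
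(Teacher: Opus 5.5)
Your proof is correct and follows the paper's argument. Translation equivariance reduces the claim to \(c=0\). Jensen's inequality, the mean-zero stochastic integral and \(\E(N_h^a+N_h^b)\le 2\overline\Lambda h\) give the upper bound, and an \(e^{Ch}\) bound on the exponential moment via \Cref{lem:exp-moments} gives the lower bound. Your only deviation is that you handle the Brownian factor by conditioning on the inventory path, as in the proof of \Cref{lem:finite-state-fk}, instead of via H\"older. This is harmless and yields the same linear-in-\(h\) exponent.
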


\begin{proof}
Constants factor out of the certainty equivalent, so
\[
    C_h^\delta(c\mathbf 1)(q)
    =
    c-\frac1\gamma\log\E_{0,0,q}^\delta[e^{-\gamma Y_h}],
\]
where, by \eqref{eq:self-financing} and \(x=s=0\),
\[
    Y_h
    =
    X_h+q_hS_h-\int_0^h\eta q_u^2du
    =
    \int_0^h q_{u-}\,dS_u+\delta^aN_h^a+\delta^bN_h^b
    -\int_0^h\eta q_u^2du .
\]
Set \(D_A:=\max\{\abs{\underline\delta},\abs{\overline\delta}\}\). Since
\(\abs{q_u}\le Q\), \(\abs{\delta^{a,b}}\le D_A\), and the fill intensities are
bounded, \Cref{lem:exp-moments} gives
\(\E e^{-\gamma Y_h}\le e^{Ch}\). Jensen's inequality gives
\[
    \log\E e^{-\gamma Y_h}\ge -\gamma\E[Y_h].
\]
The stochastic integral has mean zero and
\(\E(N_h^a+N_h^b)\le2\overline\Lambda h\), so \(\abs{\E[Y_h]}\le Ch\). Hence
\(\abs{\log\E e^{-\gamma Y_h}}\le Ch\), proving the claim.
\end{proof}

\begin{lemma}[Uniform boundedness]\label{lem:bounds}
There exists \(B<\infty\), independent of \(h\) and \(\lambda\), such that
\[
    \sup_{t\in[0,T],q\in\cQ}\abs{v_q^0(t)}
    +
    \sup_{\substack{t\in[0,T],q\in\cQ\\0<\lambda\le1}}\abs{v_q^\lambda(t)}
    +
    \sup_{\substack{0<h\le1,\ 0<\lambda\le1\\ Nh=T,\ 0\le n\le N,\ q\in\cQ}}
    \abs{v_{n,q}^{h,\lambda}}
    \le B.
\]
\end{lemma}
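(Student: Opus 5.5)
We prove the three bounds separately. In each case the key is a comparison that is uniform in \(h\) and \(\lambda\), using only monotonicity and translation properties of the operators together with the one-period bound of \Cref{lem:one-step-bound}.

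\emph{Discrete values.} We first record two structural properties of \(C_h^\delta\) and \(T_h^\lambda\).

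First, both are monotone. If \(\varphi\le\psi\) componentwise, then \(F_\varphi\le F_\psi\) pointwise. Since \(z\mapsto-\frac1\gamma\log\E e^{-\gamma z}\) is nondecreasing, this gives \(C_h^\delta\varphi\le C_h^\delta\psi\). Log-sum-exp is also nondecreasing, so \(T_h^\lambda\varphi\le T_h^\lambda\psi\).

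Second, both commute with constants: \(C_h^\delta(\varphi+c\mathbf 1)=C_h^\delta\varphi+c\). Hence, with \(M:=\max_q|\varphi_q|\), we have \(-M\mathbf 1\le\varphi\le M\mathbf 1\).

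Combining monotonicity with \Cref{lem:one-step-bound} gives
\[
    -M-Kh\le C_h^\delta\varphi(q)\le M+Kh
\]
for every \(\delta\in A\). Because \(\nu\) is a probability measure, log-sum-exp of a function lying in \([a,b]\) lies in \([a,b]\). Therefore \(\max_q|(T_h^\lambda\varphi)_q|\le M+Kh\), uniformly in \(\lambda>0\).

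Iterating from \(|v_{N,q}^{h,\lambda}|=\Phi q^2\le\Phi Q^2\), we get
\[
    |v_{n,q}^{h,\lambda}|\le\Phi Q^2+K(N-n)h\le\Phi Q^2+KT.
\]

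\emph{Hard and soft ODEs.} Set \(M(t):=\max_q|v_q(t)|\) for either \(v^0\) or \(v^\lambda\) on its maximal interval of existence. Look at the Hamiltonian at a constant vector \(c\mathbf 1\). There \(\Delta^{a,b}_q=\delta^{a,b}\) (or \(0\) at the boundary), so
\[
    |H_q(c\mathbf 1,\delta)|\le(\eta+\tfrac{\gamma\sigma^2}2)Q^2+\tfrac{2\overline\Lambda}{\gamma}e^{\gamma D_A}=:K',
\]
and hence \(|H^0_q(c\mathbf 1)|\le K'\) and \(|H^\lambda_q(c\mathbf 1)|\le K'\).

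Next, \(H_q(\cdot,\delta)\) is nondecreasing in \(y_{q\pm1}\), since \(1-e^{-\gamma(\cdot)}\) is increasing. It is also translation invariant: \(H_q(y+c\mathbf1,\delta)=H_q(y,\delta)\). Both properties pass to the sup and to log-sum-exp, so \(H^0\) and \(H^\lambda\) are quasi-monotone (Kamke type).

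We then run a comparison argument in reversed time \(s=T-t\). The functions \(\pm(\Phi Q^2+K'(T-t))\mathbf 1\) are a super- and a subsolution. To be careful, we compare with the solutions of \(\dot y=\mp(K'+\epsilon)\) and send \(\epsilon\to0\). Quasi-monotonicity together with local Lipschitz continuity, as in the standard Kamke/M\"uller comparison theorem for cooperative systems, yields
\[
    |v_q(t)|\le\Phi Q^2+K'T.
\]
This bound depends on nothing that involves \(\lambda\). It also rules out blow-up, so the solutions extend to \([0,T]\).

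\emph{Main obstacle.} The only delicate point is this comparison for the ODEs. The Hamiltonian is only locally Lipschitz, and \(y_q\) enters through \(-e^{\gamma y_q}\) with a sign that is not monotone in the off-diagonal sense. Kamke comparison, however, only requires monotonicity in the off-diagonal coordinates, and \(H_q\) has it. The one-sided growth is handled by localizing to the a priori box and a continuation argument. If this route fails, the alternative is to evaluate \(\frac{d}{dt}\max_q v_q\) at a maximizing index \(q^*\). There \(y_{q^*\pm1}-y_{q^*}\le0\), so \(H_{q^*}(v,\delta)\le K'\), and the Dini derivative of \(\max_q v_q\) in reversed time is at most \(K'\). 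The symmetric argument at the minimizing index gives the lower bound. We then take \(B:=3(\Phi Q^2+\max\{K,K'\}T)\).
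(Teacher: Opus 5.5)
Your proof is correct. The discrete part is the paper's argument: monotonicity of \(C_h^\delta\) and of log-sum-exp, the constant-vector bound of \Cref{lem:one-step-bound}, and iteration over \(N=T/h\) steps. The paper tracks \(\max_q\) and \(\min_q\) separately rather than \(\max_q|\cdot|\), which makes no difference.

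For the ODEs your main route is packaged differently. You use translation invariance and off-diagonal monotonicity of \(H^0,H^\lambda\), bound them on constant vectors by \(K'\), and apply a strict (\(\epsilon\)-perturbed) Kamke--M\"uller comparison against the barriers \(\pm(\Phi Q^2+K'(T-t))\mathbf 1\). The paper instead bounds the Dini derivatives of the running maximum and minimum in reversed time, which is your fallback. On the upper side it needs no extremal index at all: \(H_q(y,\delta)\le 2\overline\Lambda/\gamma\) for every \(y\), because the running terms are nonpositive and \(1-e^{-\gamma x}\le1\). On the lower side it uses \(\Delta_q^{a,b}\ge\underline\delta\) at a minimizing index.

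Both arguments rest on the same discrete maximum principle; the first-touching-time step in your strict comparison is the max/min argument in another form. Your formulation names the structural properties actually used (cooperativity, translation invariance, boundedness on \(\R\mathbf 1\)). It therefore applies verbatim to any Hamiltonian obtained by maximizing or averaging \(H_q(\cdot,\delta)\), such as \(\bar H^\pi\) in \Cref{lem:relaxed-verification}. The paper's version is self-contained and avoids citing an external comparison theorem.
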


\begin{proof}
Write the continuous equations forward in \(\tau=T-t\):
\[
    u_q^0(\tau)=v_q^0(T-\tau),
    \qquad
    u_q^\lambda(\tau)=v_q^\lambda(T-\tau).
\]
Then \(\dot u^0=H^0(u^0)\) and \(\dot u^\lambda=H^\lambda(u^\lambda)\). For every
\(y\) and \(q\),
\[
    H_q^0(y)\le\frac{2\overline\Lambda}{\gamma},
    \qquad
    H_q^\lambda(y)\le H_q^0(y)\le\frac{2\overline\Lambda}{\gamma},
\]
because \(1-e^{-\gamma x}\le1\) and \(\nu\) is a probability measure. Thus the
upper Dini derivative of the running maximum is bounded above by
\(2\overline\Lambda/\gamma\).

If \(q\) is a minimizer of \(y\), then every active neighbor increment satisfies
\(y_{q\pm1}-y_q\ge0\). Hence, on active sides,
\[
    \Delta_q^a(y,\delta)\ge\underline\delta,
    \qquad
    \Delta_q^b(y,\delta)\ge\underline\delta.
\]
Therefore \(H_q(y,\delta)\ge-K_-\) for a structural constant \(K_-\), whenever
\(q\) is a minimizer of \(y\). Consequently
\[
    H_q^0(y)\ge-K_-,
    \qquad
    H_q^\lambda(y)\ge\inf_{\delta\in A}H_q(y,\delta)\ge-K_-.
\]
The lower Dini derivative of the running minimum is bounded below by \(-K_-\).
Since \(-\Phi Q^2\le v_q^0(T)=v_q^\lambda(T)\le0\), the continuous solutions are
uniformly bounded and cannot explode before time zero.

For the discrete recursion, let
\[
    M_n:=\max_qv_{n,q}^{h,\lambda},
    \qquad
    m_n:=\min_qv_{n,q}^{h,\lambda}.
\]
Monotonicity of the certainty equivalent and of the log-sum-exp operator gives
\[
    T_h^\lambda(m_{n+1}\mathbf 1)
    \le T_h^\lambda v_{n+1}^{h,\lambda}
    \le T_h^\lambda(M_{n+1}\mathbf 1).
\]
By \Cref{lem:one-step-bound},
\[
    m_{n+1}-Kh\le m_n\le M_n\le M_{n+1}+Kh.
\]
Since \(-\Phi Q^2\le v_{N,q}^{h,\lambda}\le0\), iteration over \(N=T/h\)
steps gives the uniform discrete bound.
\end{proof}

\begin{lemma}[Local Lipschitz estimates]\label{lem:lip}
Let \(B\) be as in \Cref{lem:bounds}. There exists \(L<\infty\) such that, for
all \(y,z\) with \(\norm{y}_\infty,\norm{z}_\infty\le B\),
\[
    \abs{H_q(y,\delta)-H_q(z,\delta)}
    \le L\norm{y-z}_\infty
\]
uniformly over \(q\in\cQ\) and \(\delta\in A\). Consequently,
\[
    \abs{H_q^0(y)-H_q^0(z)}\le L\norm{y-z}_\infty,
    \qquad
    \abs{H_q^\lambda(y)-H_q^\lambda(z)}\le L\norm{y-z}_\infty.
\]
Moreover, \(H^\lambda\) is uniformly bounded on \(\norm{y}_\infty\le B\),
uniformly in \(0<\lambda\le1\).
\end{lemma}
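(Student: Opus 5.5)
The argument works pointwise in \(\delta\) first and then passes the estimate through the supremum and through the log-sum-exp. Fix \(q\in\cQ\), \(\delta\in A\), and \(y,z\) with \(\norm{y}_\infty,\norm{z}_\infty\le B\). In \eqref{eq:H}, the terms \(-\eta q^2-\frac{\gamma\sigma^2}{2}q^2\) do not depend on \(y\), so they cancel in the difference. Only the two fill terms remain. For the ask term (when \(q>-Q\)), the map \(y\mapsto\Delta_q^a(y,\delta)=\delta^a+y_{q-1}-y_q\) is affine with
\[
\abs{\Delta_q^a(y,\delta)-\Delta_q^a(z,\delta)}\le2\norm{y-z}_\infty ,
\]
and on the ball its values lie in \([\underline\delta-2B,\overline\delta+2B]\). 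The function \(x\mapsto\frac1\gamma(1-e^{-\gamma x})\) has derivative \(e^{-\gamma x}\le e^{\gamma(2B-\underline\delta)}\) on that interval. Since \(0\le\Lambda_q^a\le\overline\Lambda\), the mean value theorem bounds the ask difference by \(2\overline\Lambda e^{\gamma(2B+\abs{\underline\delta})}\norm{y-z}_\infty\). The bid term is handled the same way, and at boundary inventories the corresponding term is identically zero. Adding the two gives the first claim with \(L:=4\overline\Lambda e^{\gamma(2B+\abs{\underline\delta})}\). This constant is independent of \(q\), \(\delta\) and \(\lambda\).

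For \(H^0\), I use \(\abs{\sup_\delta f-\sup_\delta g}\le\sup_\delta\abs{f-g}\); the suprema are finite because \(A\) is compact and \(H_q(y,\cdot)\) is continuous. For \(H^\lambda\), I use monotonicity and translation equivariance of \(F_\lambda(f):=\lambda\log\int_A e^{f/\lambda}d\nu\). If \(f\le g+c\) pointwise with \(c\) a constant, then \(F_\lambda(f)\le F_\lambda(g)+c\), because \(\nu\) is a probability measure. Applying this with \(c=L\norm{y-z}_\infty\) in both directions gives \(\abs{H_q^\lambda(y)-H_q^\lambda(z)}\le L\norm{y-z}_\infty\). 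Measurability of \(\delta\mapsto H_q(y,\delta)\) is immediate from the continuity of \(\Lambda^{a,b}\).

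For the uniform bound, \Cref{lem:bounds} already gives \(H_q^\lambda(y)\le H_q^0(y)\le2\overline\Lambda/\gamma\). In the other direction, the same sandwich argument gives \(H_q^\lambda(y)\ge\inf_{\delta\in A}H_q(y,\delta)\). On the ball,
\[
H_q(y,\delta)\ge-(\eta+\tfrac{\gamma\sigma^2}{2})Q^2-\frac{2\overline\Lambda}{\gamma}\,e^{\gamma(2B+\abs{\underline\delta})}
\]
by the range bound on \(\Delta\). This lower bound does not depend on \(\lambda\in(0,1]\), which finishes the proof.

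There is no real obstacle. The one point needing care is that the exponential factor \(e^{-\gamma\Delta}\) is unbounded in general, which is why the restriction to the ball \(\norm{y}_\infty\le B\), supplied by \Cref{lem:bounds}, is essential for the constant \(L\).
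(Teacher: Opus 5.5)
Your proof is correct and follows essentially the same route as the paper: you obtain uniform Lipschitz continuity of \(H_q(\cdot,\delta)\) from the affine increments \(\Delta_q^{a,b}\) and the bounded exponential factors on the ball, then pass the estimate through the supremum for \(H^0\) and through the one-Lipschitz log-sum-exp for \(H^\lambda\). You also make explicit what the paper leaves implicit when it appeals to compactness of \(A\): the constant \(L\), and the \(\lambda\)-uniform two-sided bound \(\inf_\delta H_q(y,\delta)\le H_q^\lambda(y)\le H_q^0(y)\), each side of which you bound concretely.
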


\begin{proof}
On \(\norm{y}_\infty,\norm{z}_\infty\le B\), all active exponential factors in
\(H_q\) are uniformly bounded, and each \(\Delta_q^{a,b}\) is affine in \(y\).
Hence \(H_q(\cdot,\delta)\) is uniformly Lipschitz. Taking suprema gives the
bound for \(H^0\). The log-sum-exp map is one-Lipschitz under uniform
perturbations of the integrand, giving the bound for \(H^\lambda\). Uniform
boundedness on the bounded region follows from compactness of \(A\).
\end{proof}

\begin{lemma}[Time regularity of the reduced values]\label{lem:time-lip}
There exists \(L_t<\infty\), independent of \(0<\lambda\le1\), such that
\[
    \abs{v_q^0(t)-v_q^0(s)}
    +
    \abs{v_q^\lambda(t)-v_q^\lambda(s)}
    \le
    L_t\abs{t-s},
    \qquad s,t\in[0,T],\ q\in\cQ .
\]
\end{lemma}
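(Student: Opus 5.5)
The estimate follows directly from the two ODEs by bounding the time derivative uniformly on $[0,T]$, using only \Cref{lem:bounds} and the structure of $H^0$ and $H^\lambda$.

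\emph{Step 1: uniform boundedness of the right-hand sides.} By \Cref{lem:bounds}, $v^0(t)$ and $v^\lambda(t)$ lie in the cube $\{\norm{y}_\infty\le B\}$ for all $t\in[0,T]$ and $0<\lambda\le1$. On this cube each $\Delta_q^{a,b}(y,\delta)$ is bounded in absolute value by $\max\{\abs{\underline\delta},\abs{\overline\delta}\}+2B$. Hence every exponential factor $e^{-\gamma\Delta}$ is bounded, and $\abs{H_q(y,\delta)}\le K_H$ with
\[
    K_H:=(\eta+\gamma\sigma^2/2)Q^2+\frac{2\overline\Lambda}{\gamma}\bigl(1+e^{\gamma(\max\{\abs{\underline\delta},\abs{\overline\delta}\}+2B)}\bigr),
\]
uniformly in $q\in\cQ$ and $\delta\in A$. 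Taking the supremum over $\delta$ gives $\abs{H_q^0(y)}\le K_H$.

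For the soft Hamiltonian, $\nu$ is a probability measure, so the log-sum-exp average lies between the infimum and supremum of the integrand:
\[
    \inf_{\delta}H_q(y,\delta)\le H_q^\lambda(y)\le \sup_\delta H_q(y,\delta).
\]
Therefore $\abs{H_q^\lambda(y)}\le K_H$ as well, independently of $\lambda$. This is also the uniform-boundedness statement in \Cref{lem:lip}, which I would invoke directly.

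\emph{Step 2: integrate the ODEs.} The solutions of \eqref{eq:hard-ode} and \eqref{eq:soft-ode} are $C^1$ on $[0,T]$. The Hamiltonians are continuous (Lipschitz by \Cref{lem:lip}) and the solutions are continuous, so the right-hand sides are continuous in $t$. The fundamental theorem of calculus then gives
\[
    \abs{v_q^0(t)-v_q^0(s)}=\abs{\int_s^t H_q^0(v^0(r))\,dr}\le K_H\abs{t-s},
\]
and the same bound holds for $v^\lambda$. Summing the two estimates yields the claim with $L_t:=2K_H$, which is independent of $\lambda\in(0,1]$.

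\emph{Main obstacle.} The only delicate point is that $L_t$ must not depend on $\lambda$. This follows from the sandwich bound for $H^\lambda$ combined with the $\lambda$-uniform a priori bound $B$ of \Cref{lem:bounds}. Without that uniform bound, the exponential factors in $H_q$ could blow up. No smoothness of $H^0$ in $y$ beyond continuity is needed.
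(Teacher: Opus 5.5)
Your proposal is correct and follows essentially the same route as the paper. Both use \Cref{lem:bounds} to place $v^0$ and $v^\lambda$ in a $\lambda$-independent cube, use the uniform boundedness of $H^0$ and $H^\lambda$ on that cube (the paper cites \Cref{lem:lip}, which you re-derive explicitly via $K_H$ and the sandwich $\inf_\delta H_q\le H_q^\lambda\le\sup_\delta H_q$), and then integrate the ODEs to get the Lipschitz bound.
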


\begin{proof}
By \Cref{lem:bounds,lem:lip}, \(H^0(v^0(t))\) and
\(H^\lambda(v^\lambda(t))\) are uniformly bounded on \([0,T]\). The ODEs
\(-\dot v^0=H^0(v^0)\) and
\(-\dot v^\lambda=H^\lambda(v^\lambda)\) therefore give the claim by
integration.
\end{proof}

\section{Hard Verification and Soft-to-Hard Error}
\label{app:soft-hard}

\subsection{Hard verification}

\begin{lemma}[Hard verification]\label{lem:verification}
The hard value function satisfies
\[
    V^*(t,x,s,q)=x+qs+v_q^0(t).
\]
\end{lemma}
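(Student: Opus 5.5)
The plan is a standard verification argument with the candidate $\Psi(u,X,S,q):=X+qS+v_q^0(u)$. We show that $M_u:=\exp\bigl(-\gamma[\Psi(u,X_u,S_u,q_u)-\int_t^u\eta q_r^2\,dr]\bigr)$ is a submartingale under every strict admissible system and a martingale under a Markov feedback optimizer. Since $M_T$ is exactly the exponential in $J$ (recall $v_q^0(T)=-\Phi q^2$) and $M_t=e^{-\gamma(x+qs+v_q^0(t))}$, this gives $J\le x+qs+v_q^0(t)$ for every system, with equality for the optimizer.

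\emph{Step 1 (It\^o computation).} We apply It\^o's formula for semimartingales with jumps to $M$, using the self-financing identity \eqref{eq:self-financing}. On an ask fill at time $u$, $\Psi$ jumps by $\delta_u^a+v^0_{q_{u-}-1}(u)-v^0_{q_{u-}}(u)=\Delta^a_{q_{u-}}(v^0(u),\delta_u)$, and similarly for bid fills. The Brownian part gives $dM=-\gamma M q_{u-}\sigma\,dW+\tfrac{\gamma^2\sigma^2}{2}q^2M\,du$. Compensating the jumps with the intensities from \Cref{def:strict-system}, the drift of $M$ becomes
\[
\gamma M_{u-}\Bigl[-\dot v^0_{q_{u-}}(u)-H_{q_{u-}}\bigl(v^0(u),\delta_u\bigr)\Bigr]du
=\gamma M_{u-}\Bigl[H^0_{q_{u-}}(v^0(u))-H_{q_{u-}}\bigl(v^0(u),\delta_u\bigr)\Bigr]du\ \ge0,
\]
by \eqref{eq:hard-ode} and the definition \eqref{eq:H}. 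The boundary indicators match the convention $\Delta^a_{-Q}=\Delta^b_Q=0$. The map $u\mapsto v^0(u)$ is $C^1$ by \Cref{lem:time-lip} and the ODE, so It\^o applies.

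\emph{Step 2 (true martingale parts).} This is the main obstacle. The local-martingale part is $\int(-\gamma\sigma q_{u-}M_{u-})\,dW+\int M_{u-}(e^{-\gamma\Delta^a}-1)(dN^a-\text{compensator})+(\text{bid analog})$. We must show it is a true martingale on $[t,T]$, or at least that $M$ is of class (D). Since $v^0$ is bounded (\Cref{lem:bounds}), $\abs{q}\le Q$, and $\delta$ is bounded, we have $\sup_{u\le T}M_u\le C\exp\bigl(\gamma\sup_u\abs{\int_t^u q_{r-}\,dS_r}+\gamma D_A(N_T^a+N_T^b-N_t^a-N_t^b)\bigr)$. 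To control the supremum of the stochastic integral, we use Doob's inequality on the exponential martingales $\exp(\pm\gamma\int q\,dS)$ together with \Cref{lem:exp-moments} and H\"older's inequality. This gives $\E\sup_u M_u^p<\infty$, which yields square integrability of the integrands, so localization can be removed. Taking expectations then gives $\E M_T\ge M_t$, i.e.\ $J(t,x,s,q;\mathfrak S)\le x+qs+v_q^0(t)$ for every $\mathfrak S\in\mathfrak A_t$.

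\emph{Step 3 (attainment).} Since $(u,\delta)\mapsto H_q(v^0(u),\delta)$ is continuous on $[t,T]\times A$ with $A$ compact, a measurable-selection theorem (e.g.\ Kuratowski--Ryll-Nardzewski applied to the argmax correspondence) provides a Borel maximizer $\vartheta(u,q)\in\argmax_\delta H_q(v^0(u),\delta)$. \Cref{lem:markov-realization} then gives a strict admissible system with $\delta_u=\vartheta(u,q_{u-})$. For this system the drift in Step 1 vanishes, so $M$ is a true martingale by Step 2, and equality holds. Combining both bounds proves $V^*(t,x,s,q)=x+qs+v_q^0(t)$.
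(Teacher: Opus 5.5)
Your proposal is correct and follows the paper's proof essentially step for step. You use the same exponential process $M_u$ and the same drift identity $\gamma M_{u-}[H^0_{q_{u-}}(v^0(u))-H_{q_{u-}}(v^0(u),\delta_u)]\ge0$, and you prove attainment the same way, with a measurable selector of the argmax realized through \Cref{lem:markov-realization}. The only minor variation is the integrability step: you bound $\E\sup_u M_u^p$ via Doob's maximal inequality, whereas the paper bounds $\sup_{\tau\le T}\E[M_\tau^p]$ over stopping times via \Cref{lem:exp-moments} to get class (D), and either suffices to remove the localization.
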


\begin{proof}
Let \(V(t,x,s,q):=x+qs+v_q^0(t)\). Fix an arbitrary strict admissible system
\(\mathfrak S\in\mathfrak A_t\). Define
\[
    M_u:=
    \exp\left(
        -\gamma\left[
            V(u,X_u,S_u,q_u)-\int_t^u\eta q_r^2\,dr
        \right]
    \right).
\]
The exponential It\^o formula for jump semimartingales, using the no-common-jump
convention, gives, for a local martingale \(\mathcal M\),
\[
    dM_u
    =
    -\gamma M_{u-}
    \left[
        \dot v_{q_{u-}}^0(u)+H_{q_{u-}}(v^0(u),\delta_u)
    \right]du
    +d\mathcal M_u .
\]
The Brownian correction contributes \(-\gamma\sigma^2q^2/2\) inside \(H\). At an
ask fill, the jump in \(V\) equals \(\delta^a+v_{q-1}^0(u)-v_q^0(u)\), and at a
bid fill it equals
\(\delta^b+v_{q+1}^0(u)-v_q^0(u)\). Boundary indicators remove nonexistent
neighbors.

Since \(-\dot v_q^0=H_q^0(v^0)=\sup_{\zeta\in A}H_q(v^0,\zeta)\),
\[
    \dot v_q^0(u)+H_q(v^0(u),\delta_u)\le0.
\]
Thus \(M\) is a positive local submartingale.

By \eqref{eq:self-financing} and the boundedness of \(v^0\), quotes, and
inventory, the same estimate holds for every stopping time \(\tau\le T\):
\[
    M_{\tau}
    \le
    C\exp\left(
        C\left|\int_t^{\tau}q_{u-}\,dS_u\right|
        +C\bigl[(N_{\tau}^a-N_t^a)+(N_{\tau}^b-N_t^b)\bigr]
    \right).
\]
By \Cref{lem:exp-moments}, there exists \(p>1\) such that
\[
    \sup_{\tau\le T}\E[M_\tau^p]<\infty .
\]
Hence \(M\) is of class \(D\). Since \(M\) is a positive local submartingale,
it is a true submartingale. Therefore \(\E[M_T]\ge M_t\). Since
\[
    V(T,X_T,S_T,q_T)=X_T+q_TS_T-\Phi q_T^2,
\]
we obtain
\[
    J(t,x,s,q;\mathfrak S)\le x+qs+v_q^0(t).
\]
Taking the supremum over \(\mathfrak A_t\) gives the upper bound.

For the reverse inequality, \(t\mapsto v^0(t)\) is continuous and
\(\delta\mapsto H_q(v^0(t),\delta)\) is continuous on compact \(A\). By the
measurable maximum theorem, for each \(q\) there is a Borel selector
\[
    \delta_q^*(t)\in\argmax_{\delta\in A}H_q(v^0(t),\delta).
\]
By \Cref{lem:markov-realization}, the feedback \(\delta_u^*=\delta_{q_{u-}}^*(u)\)
is admissible. For this system,
\[
    \dot v_{q_{u-}}^0(u)+H_{q_{u-}}(v^0(u),\delta_u^*)=0,
\]
so the same exponential process has zero drift and is a local martingale.
The class \(D\) estimate above applies to this process as well. Hence it is a
true martingale, and \(\E[M_T]=M_t\). Equality is attained, proving the identity.
\end{proof}

\subsection{Entropy bias and soft-to-hard convergence}

\begin{lemma}[Entropy bias]\label{lem:entropy-bias}
For \(0<\lambda\le1\) and \(\norm{y}_\infty\le B\),
\[
    0\le H_q^0(y)-H_q^\lambda(y)
    \le C\lambda(1+\abs{\log\lambda}),
    \qquad q\in\cQ.
\]
\end{lemma}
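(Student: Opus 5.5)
The lower bound \(H_q^\lambda(y)\le H_q^0(y)\) is immediate. Since \(\nu\) is a probability measure and \(H_q(y,\delta)\le H_q^0(y)\) for every \(\delta\), we get \(\int_A e^{H_q(y,\delta)/\lambda}\nu(d\delta)\le e^{H_q^0(y)/\lambda}\). The substance is the upper bound. We obtain it by bounding the \(\nu\)-mass of a small neighbourhood of a maximizer from below, using a Lipschitz estimate for \(\delta\mapsto H_q(y,\delta)\) that is uniform over \(\norm{y}_\infty\le B\).

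\emph{Step 1 (uniform Lipschitz continuity in the action).} The map \(\delta\mapsto H_q(y,\delta)\) is continuous on the compact set \(A\), so fix a maximizer \(\delta^*\in A\) with \(H_q(y,\delta^*)=H_q^0(y)\). We claim there is \(L_A<\infty\), depending only on the structural constants and \(B\), such that
\[
    \abs{H_q(y,\delta)-H_q(y,\delta')}\le L_A\abs{\delta-\delta'},
    \qquad \delta,\delta'\in A,\ \norm{y}_\infty\le B .
\]
Each active term has the form \(\gamma^{-1}\Lambda^a(\delta^a)\bigl(1-e^{-\gamma(\delta^a+y_{q-1}-y_q)}\bigr)\), and similarly on the bid side. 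Here \(\Lambda^a\in C^1\) with bounded derivative, and \(\abs{\Lambda^a}\le\overline\Lambda\). The exponent \(\delta^a+y_{q-1}-y_q\) ranges over a bounded set, since \(\abs{\delta^a}\le D_A\) and \(\abs{y_{q-1}-y_q}\le 2B\). The factor \(1-e^{-\gamma(\cdot)}\) and its derivative are therefore bounded. The product rule then gives the bound.

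\emph{Step 2 (mass of a neighbourhood).} For \(r\in(0,r_0]\), where \(r_0:=\overline\delta-\underline\delta\), let \(U_r:=\{\delta\in A:\abs{\delta-\delta^*}_\infty\le r\}\). Because \(A\) is a square of side \(r_0\), \(U_r\) contains a square of side \(r\) that has \(\delta^*\) as a corner and lies inside \(A\). Hence \(\mathrm{Leb}(U_r)\ge r^2\), and \(\nu(U_r)\ge m_-r^2\). This is the point where the density lower bound of Assumption~\ref{ass:model} and the two-dimensionality of \(A\) enter. On \(U_r\), Step 1 gives \(H_q(y,\delta)\ge H_q^0(y)-\sqrt2L_Ar\). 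Therefore
\[
    H_q^\lambda(y)\ge\lambda\log\bigl(\nu(U_r)e^{(H_q^0(y)-\sqrt2L_Ar)/\lambda}\bigr)
    \ge H_q^0(y)-\sqrt2L_Ar+\lambda\log m_-+2\lambda\log r .
\]

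\emph{Step 3 (choice of \(r\)).} Take \(r=\min\{\lambda,r_0\}\). If \(\lambda\le r_0\), the right-hand side above gives
\[
    H_q^0-H_q^\lambda\le \sqrt2L_A\lambda+\lambda\abs{\log m_-}+2\lambda\abs{\log\lambda}
    \le C\lambda(1+\abs{\log\lambda}).
\]
If instead \(r_0<\lambda\le1\), then \(r=r_0\) and the deficit is at most \(\sqrt2L_Ar_0+\lambda\abs{\log m_-}+2\lambda\abs{\log r_0}\). Since \(\lambda>r_0\), this is bounded by a constant multiple of \(\lambda\). In both cases the deficit is \(\le C\lambda(1+\abs{\log\lambda})\). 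The constant is uniform in \(q\) and in \(\norm{y}_\infty\le B\), because \(L_A\), \(m_-\), and \(r_0\) are.

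The main obstacle is Step 1, specifically keeping the Lipschitz constant uniform over the bounded set of \(y\). This requires tracking the range of the exponent \(\Delta_q^{a,b}\), and it is exactly why the statement is restricted to \(\norm{y}_\infty\le B\). The rest is the standard Laplace-type lower bound; the logarithm comes from the factor \(\log\nu(U_r)\sim 2\log\lambda\).
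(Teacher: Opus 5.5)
Your proof is correct and follows the paper's argument essentially step for step. The steps are: uniform Lipschitz continuity of $\delta\mapsto H_q(y,\delta)$ on $\norm{y}_\infty\le B$, a lower bound of order $r^2$ on the $\nu$-mass of an $r$-neighbourhood of a maximizer (from the rectangle geometry and $m_-$), and the choice $r\approx\lambda$, with the remaining range of $\lambda$ absorbed into $C$.

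One minor slip: for $r\in(r_0/2,r_0]$, the set $U_r$ need not contain a square of side $r$ having $\delta^*$ as a corner (take $\delta^*$ at the centre of $A$). However, $U_r$ is a product of two intervals, each of length at least $\min\{r,r_0\}=r$. So $\mathrm{Leb}(U_r)\ge r^2$ still holds, and the rest of your argument is unaffected.
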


\begin{proof}
Since \(\nu\) is a probability measure, \(H_q^\lambda(y)\le H_q^0(y)\). Let
\(\delta^*\in\argmax_{\delta\in A}H_q(y,\delta)\). On \(\norm{y}_\infty\le B\),
\(\delta\mapsto H_q(y,\delta)\) is uniformly Lipschitz, so
\[
    H_q(y,\delta)\ge H_q^0(y)-L_\delta\abs{\delta-\delta^*}.
\]
For \(0<r<r_0\), the rectangle geometry and the lower density bound imply
\[
    \nu(B_r(\delta^*)\cap A)\ge c_A r^2
\]
uniformly over \(\delta^*\in A\), including edge and corner points. Therefore
\[
    \int_A e^{H_q(y,\delta)/\lambda}\nu(d\delta)
    \ge
    e^{(H_q^0(y)-L_\delta r)/\lambda}c_A r^2.
\]
Taking \(\lambda\log\) gives
\[
    H_q^0(y)-H_q^\lambda(y)
    \le L_\delta r-\lambda\log c_A-2\lambda\log r.
\]
Choose \(r=\lambda\) for small \(\lambda\); enlarge \(C\) for the remaining
bounded range.
\end{proof}

\begin{lemma}[Soft-to-hard error]\label{lem:soft-hard}
For \(0<\lambda\le1\),
\[
    \sup_{t\in[0,T]}\norm{v^\lambda(t)-v^0(t)}_\infty
    \le C\lambda(1+\abs{\log\lambda}).
\]
\end{lemma}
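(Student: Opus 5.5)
The plan is a backward Gronwall comparison of the two ODE systems \eqref{eq:hard-ode} and \eqref{eq:soft-ode}, with the entropy bias from \Cref{lem:entropy-bias} as the forcing term. Set \(e(t):=v^\lambda(t)-v^0(t)\). Since both solutions have the same terminal data, \(e(T)=0\). Integrating the two equations backward from \(T\) gives
\[
    e_q(t)=\int_t^T\bigl[H_q^\lambda(v^\lambda(s))-H_q^0(v^0(s))\bigr]\,ds .
\]
By \Cref{lem:bounds}, both \(v^0(s)\) and \(v^\lambda(s)\) lie in the ball \(\norm{y}_\infty\le B\) for all \(s\in[0,T]\) and \(0<\lambda\le1\). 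This is what makes the constants in \Cref{lem:lip} and \Cref{lem:entropy-bias} available uniformly in \(\lambda\).

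Next I split the integrand by inserting \(H^0(v^\lambda(s))\):
\[
    H_q^\lambda(v^\lambda)-H_q^0(v^0)
    =\bigl[H_q^\lambda(v^\lambda)-H_q^0(v^\lambda)\bigr]
    +\bigl[H_q^0(v^\lambda)-H_q^0(v^0)\bigr].
\]
\Cref{lem:entropy-bias} is applied at \(y=v^\lambda(s)\). It bounds the first bracket in absolute value by \(C\lambda(1+\abs{\log\lambda})\), and this bound is uniform in \(q\) and \(s\). \Cref{lem:lip} bounds the second bracket by \(L\norm{e(s)}_\infty\). Taking the maximum over \(q\) yields
\[
    \norm{e(t)}_\infty\le C T\lambda(1+\abs{\log\lambda})+L\int_t^T\norm{e(s)}_\infty\,ds .
\]
Applying Gronwall's inequality in backward time, i.e.\ in the variable \(\tau=T-t\), then gives
\[
    \sup_{t\in[0,T]}\norm{e(t)}_\infty\le CTe^{LT}\lambda(1+\abs{\log\lambda}),
\]
which is the claim.

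One subtlety needs checking. The function \(t\mapsto\norm{e(t)}_\infty\) is only continuous, not \(C^1\), so I work with the integral form of Gronwall rather than a differential one. That form needs only measurability and boundedness of \(\norm{e(\cdot)}_\infty\), and both follow from \Cref{lem:time-lip} and \Cref{lem:bounds}. The step I expect to require the most care is keeping the Lipschitz constant uniform in \(\lambda\). This depends on the a priori bound of \Cref{lem:bounds} holding for all \(\lambda\in(0,1]\) with a single radius \(B\), and that lemma states exactly this, so the argument closes. In fact, since \(H^\lambda\le H^0\) pointwise, one could also record the one-sided sign \(v^\lambda\le v^0\) by comparison, but the two-sided Gronwall bound above suffices.
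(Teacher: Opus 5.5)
Your proposal is correct and follows essentially the same argument as the paper: the integral form of the two ODEs on the common ball from \Cref{lem:bounds}, the entropy-bias forcing from \Cref{lem:entropy-bias}, the Lipschitz term from \Cref{lem:lip}, and backward Gronwall. You split the integrand explicitly by inserting \(H^0(v^\lambda(s))\); the paper performs the same split implicitly.
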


\begin{proof}
The integral equations give
\[
    v^\lambda(t)-v^0(t)
    =
    \int_t^T\{H^\lambda(v^\lambda(r))-H^0(v^0(r))\}\,dr.
\]
Using \Cref{lem:bounds,lem:lip,lem:entropy-bias},
\[
    \norm{v^\lambda(t)-v^0(t)}_\infty
    \le
    \int_t^T
    \left[
        L\norm{v^\lambda(r)-v^0(r)}_\infty
        +C\lambda(1+\abs{\log\lambda})
    \right]dr.
\]
Backward Gronwall proves the estimate.
\end{proof}

\section{Finite-State Feynman--Kac Formula and One-Step Consistency}
\label{app:fk-consistency}

\subsection{Finite-state Feynman--Kac and one-step consistency}

\begin{lemma}[Finite-state Feynman--Kac for a fixed quote]\label{lem:finite-state-fk}
Fix \(\delta\in A\) and \(\varphi\in\R^{2Q+1}\). Set
\[
    u_q:=e^{-\gamma\varphi_q}.
\]
Define the finite-dimensional matrix \(K^\delta\) by
\begin{equation}\label{eq:Kdelta}
\begin{aligned}
    (K^\delta z)_q
    :={}&
    \left(
        \frac{\gamma^2\sigma^2}{2}q^2+\gamma\eta q^2
        -\Lambda_q^a(\delta^a)-\Lambda_q^b(\delta^b)
    \right)z_q \\
    &+
    \Lambda_q^a(\delta^a)e^{-\gamma\delta^a}z_{q-1}
    +
    \Lambda_q^b(\delta^b)e^{-\gamma\delta^b}z_{q+1},
\end{aligned}
\end{equation}
where boundary terms with nonexistent neighbors are omitted. Then
\begin{equation}\label{eq:finite-state-FK}
    \E_{0,0,q}^{\delta}
    \left[
        \exp\left(
        -\gamma\left[
            F_\varphi(X_h,S_h,q_h)-\int_0^h\eta q_u^2\,du
        \right]
        \right)
    \right]
    =
    (e^{hK^\delta}u)_q.
\end{equation}
\end{lemma}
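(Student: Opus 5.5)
We will prove \eqref{eq:finite-state-FK} by showing that both sides, viewed as functions of the horizon and the initial inventory, solve the same finite-dimensional linear ODE. First we strip out the Brownian part by conditioning on the inventory path; then we identify the resulting quantity as the solution of a Kolmogorov backward system.

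\emph{Step 1: integrate out the Brownian motion.} By \eqref{eq:self-financing} with \(x=s=0\),
\[
    F_\varphi(X_h,S_h,q_h)-\int_0^h\eta q_u^2du
    =\sigma\int_0^hq_{u-}dW_u+\delta^aN_h^a+\delta^bN_h^b+\varphi_{q_h}-\int_0^h\eta q_u^2du .
\]
In the canonical fixed-quote system, \(W\) is independent of the inventory chain, so we condition on the chain's path \(\mathcal G\). Given \(\mathcal G\), the integrand \(q_{u-}\) is deterministic, and \(\sigma\int q_{u-}dW_u\) is centered Gaussian with variance \(\sigma^2\int_0^hq_u^2du\). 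This gives
\[
    \E_{0,0,q}^{\delta}[\,\cdot\,]
    =\E\Big[u_{q_h}\,e^{-\gamma\delta^aN_h^a-\gamma\delta^bN_h^b}
    \exp\Big(\int_0^h\big(\tfrac{\gamma^2\sigma^2}{2}+\gamma\eta\big)q_u^2du\Big)\Big].
\]
The right-hand side involves only the finite-state jump chain.

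\emph{Step 2: backward ODE for the functional.} Define
\[
    w_q(\tau):=\E_q\Big[u_{q_\tau}\prod_{\text{jumps}}e^{-\gamma\delta^{a/b}}\exp\Big(\int_0^\tau c(q_r)dr\Big)\Big],
    \qquad c(r):=\big(\tfrac{\gamma^2\sigma^2}{2}+\gamma\eta\big)r^2 .
\]
We show \(w(\tau)=e^{\tau K^\delta}u\), and the cleanest route is a martingale argument. Fix \(\tau\) and set \(g(s):=e^{(\tau-s)K^\delta}u\) for \(s\in[0,\tau]\). Consider
\[
    Z_s:=\exp\Big(\int_0^s c(q_r)dr\Big)e^{-\gamma\delta^aN_s^a-\gamma\delta^bN_s^b}g_{q_s}(s).
\]
Applying Itô's formula for pure-jump processes, the drift of \(Z\) equals \(Z_{s-}/g_{q_{s-}}\) times
\[
    c(q)g_q+\dot g_q+\Lambda_q^a(e^{-\gamma\delta^a}g_{q-1}-g_q)+\Lambda_q^b(e^{-\gamma\delta^b}g_{q+1}-g_q)
    =\dot g_q+(K^\delta g)_q=0 .
\]
The last equality is exactly the definition \eqref{eq:Kdelta}, with the boundary indicators in \(\Lambda_q^{a,b}\) removing the nonexistent neighbors. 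So \(Z\) is a local martingale. It is dominated on \([0,\tau]\) by \(C e^{C\tau}e^{\gamma D_A(N^a_\tau+N^b_\tau)}\), which is integrable by \Cref{lem:exp-moments}, so \(Z\) is a true martingale. Hence \(\E Z_\tau=Z_0\). Since \(Z_\tau\) is the integrand defining \(w_q(\tau)\) and \(Z_0=(e^{\tau K^\delta}u)_q\), this gives \(w_q(\tau)=(e^{\tau K^\delta}u)_q\). Taking \(\tau=h\) yields \eqref{eq:finite-state-FK}.

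The main obstacle is bookkeeping rather than analysis. In Step 1 we must justify that conditioning on the chain leaves \(W\) a Brownian motion, which follows from the independence built into the canonical system. The rest is checking signs: the \(+\gamma\eta q^2\) term comes from the minus sign in front of the running penalty, and the \(e^{-\gamma\delta}\) factors are attached to the off-diagonal transitions. Integrability is routine because the state space is finite and the intensities are bounded.
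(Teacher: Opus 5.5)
Your proof is correct, and Step 1 (conditioning on the inventory path and integrating out the Brownian integral) is exactly what the paper does. Step 2 takes a different route.

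\textbf{The paper's route.} It works with \(G_h(q)\) directly. It applies the Markov property after a short interval \([0,\varepsilon]\), using two facts: two or more jumps have probability \(o(\varepsilon)\), and the running factor is \(1+r_q\varepsilon+o(\varepsilon)\). This gives the difference-quotient limit \(\partial_hG_h=K^\delta G_h\) with \(G_0=u\). It then concludes by uniqueness for the linear ODE.

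\textbf{Your route.} You instead verify an explicit candidate. With \(g(s)=e^{(\tau-s)K^\delta}u\), the process \(Z\) has zero drift by the very definition of \(K^\delta\). The monotone bound \(\sup_{s\le\tau}\abs{Z_s}\le C\,e^{\gamma D_A(N_\tau^a+N_\tau^b)}\), together with \Cref{lem:exp-moments}, upgrades the local martingale to a true martingale.

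\textbf{Comparison.} Your argument needs only the compensators of \(N^a,N^b\) and the exponential-moment bound. It does not need the Markov property at intermediate times or uniform \(o(\varepsilon)\) expansions. It also avoids the passage from a one-sided difference quotient to differentiability of \(h\mapsto G_h\), which the paper treats briefly. It also mirrors the exponential-martingale arguments already used in \Cref{lem:verification,lem:relaxed-verification}. The paper's first-step computation is more elementary, and it derives the generator \(K^\delta\) rather than merely checking it.

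\textbf{A cosmetic point.} Writing the drift as \(Z_{s-}\) divided by \(g_{q_{s-}}(s)\), times a bracket, presumes \(g>0\). This holds, since \(K^\delta\) has nonnegative off-diagonal entries and \(u>0\). It is cleaner, though, to use the prefactor \(\exp\bigl(\int_0^s c(q_r)\,dr\bigr)e^{-\gamma\delta^aN_{s-}^a-\gamma\delta^bN_{s-}^b}\) directly.
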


\begin{proof}
Under a frozen deterministic quote, \(q\) is a continuous-time Markov chain on
\(\cQ\) with transition rates
\[
    q\to q-1 \text{ at rate } \Lambda_q^a(\delta^a),
    \qquad
    q\to q+1 \text{ at rate } \Lambda_q^b(\delta^b).
\]
By \eqref{eq:self-financing}, starting from \(x=s=0\),
\[
    X_h+q_hS_h
    =
    \int_0^h q_{u-}\,dS_u+\delta^aN_h^a+\delta^bN_h^b.
\]
Condition on the inventory path. Since \(S_u=\sigma W_u\) and the Brownian
motion is independent of the fixed-quote jump construction,
\[
    \E\left[
        \exp\left(-\gamma\int_0^h q_{u-}\,dS_u\right)
        \middle|\,q_\cdot
    \right]
    =
    \exp\left(\frac{\gamma^2\sigma^2}{2}\int_0^h q_u^2\,du\right).
\]
Writing \(\E_q\) for expectation over this finite-state chain started from
state \(q\), the left side of \eqref{eq:finite-state-FK} equals
\[
    \E_q\left[
        \exp\left(\int_0^h
        \left[\frac{\gamma^2\sigma^2}{2}q_u^2+\gamma\eta q_u^2\right]du
        \right)
        e^{-\gamma\delta^aN_h^a-
          \gamma\delta^bN_h^b}
        u_{q_h}
    \right],
\]
where the remaining expectation is over the finite-state chain.

Let \(G_h(q)\) denote this last expectation. We show that
\(\partial_hG_h=K^\delta G_h\) and \(G_0=u\). Fix state \(q\) and write
\[
    r_q:=\frac{\gamma^2\sigma^2}{2}q^2+\gamma\eta q^2,
    \qquad
    \ell_q^a:=\Lambda_q^a(\delta^a),
    \qquad
    \ell_q^b:=\Lambda_q^b(\delta^b).
\]
For \(\varepsilon\downarrow0\), the Markov property after the first
\(\varepsilon\)-interval gives, uniformly in \(q\) and \(h\) on compact
horizons,
\[
\begin{aligned}
    G_{h+\varepsilon}(q)
    ={}&
    \{1+r_q\varepsilon+o(\varepsilon)\}
    \Big[
        \{1-(\ell_q^a+\ell_q^b)\varepsilon\}G_h(q)  \\
    &\qquad
        +\ell_q^a\varepsilon e^{-\gamma\delta^a}G_h(q-1)
        +\ell_q^b\varepsilon e^{-\gamma\delta^b}G_h(q+1)
    \Big]
    +o(\varepsilon),
\end{aligned}
\]
where missing boundary terms are omitted. Indeed, the probability of two or
more jumps in \([0,\varepsilon]\) is \(o(\varepsilon)\), and the running factor
is \(1+r_q\varepsilon+o(\varepsilon)\). Hence
\[
    \frac{G_{h+\varepsilon}(q)-G_h(q)}{\varepsilon}
    \to
    (K^\delta G_h)_q.
\]
The right-hand side is continuous in \(h\), since the state space is finite.
Thus \(G_h\) solves the finite-dimensional linear ODE
\(\partial_hG_h=K^\delta G_h\) with \(G_0=u\). Uniqueness of this ODE yields
\(G_h=e^{hK^\delta}u\).
\end{proof}

\begin{lemma}[One-step risk-sensitive consistency]\label{lem:consistency}
There exist \(h_0>0\) and \(C<\infty\) such that, for every \(0<h\le h_0\),
\[
    \sup_{\substack{\norm{\varphi}_\infty\le B\\ q\in\cQ,\ \delta\in A}}
    \left|
        C_h^\delta\varphi(q)
        -\varphi_q
        -hH_q(\varphi,\delta)
    \right|
    \le Ch^2.
\]
Consequently,
\[
    \norm{T_h^\lambda\varphi-[\varphi+hH^\lambda(\varphi)]}_\infty
    \le Ch^2.
\]
\end{lemma}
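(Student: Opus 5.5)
We prove \Cref{lem:consistency} by using the Feynman--Kac representation of \Cref{lem:finite-state-fk}, \(C_h^\delta\varphi(q)=-\frac1\gamma\log(e^{hK^\delta}u)_q\) with \(u_q=e^{-\gamma\varphi_q}\), and Taylor expanding the matrix exponential and then the logarithm.

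\textbf{Matrix exponential.} For \(\norm{\varphi}_\infty\le B\) the vector \(u\) lies in the compact box \([e^{-\gamma B},e^{\gamma B}]^{2Q+1}\). The entries of \(K^\delta\) are bounded uniformly in \(\delta\in A\) and \(q\in\cQ\), with a bound depending only on \(\gamma,\sigma,\eta,Q,\overline\Lambda\) and \(D_A\). So there is \(\kappa<\infty\) with \(\norm{K^\delta}\le\kappa\) for every \(\delta\in A\). The remainder of the exponential series then gives
\[
    \norm{e^{hK^\delta}u-u-hK^\delta u}_\infty\le \tfrac12\kappa^2h^2e^{\kappa h}\norm{u}_\infty\le Ch^2 .
\]

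\textbf{Logarithm.} Write \((e^{hK^\delta}u)_q=u_q(1+h\,a_q+\rho_q)\), where \(a_q:=(K^\delta u)_q/u_q\) and \(|\rho_q|\le Ch^2\); here we used \(u_q\ge e^{-\gamma B}\). The ratio \(a_q\) is bounded uniformly, so for \(h\le h_0\) small we have \(|ha_q+\rho_q|\le 1/2\). The inequality \(|\log(1+x)-x|\le x^2\) for \(|x|\le1/2\) then yields
\[
    C_h^\delta\varphi(q)=\varphi_q-\frac h\gamma a_q+O(h^2),
\]
uniformly in \(q\), \(\delta\) and \(\norm{\varphi}_\infty\le B\).

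\textbf{Identifying the generator.} It remains to check that \(-a_q/\gamma=H_q(\varphi,\delta)\). Take an active ask side. The neighbor ratio is
\[
    e^{-\gamma\delta^a}u_{q-1}/u_q=e^{-\gamma(\delta^a+\varphi_{q-1}-\varphi_q)}=e^{-\gamma\Delta_q^a(\varphi,\delta)}.
\]
So the ask terms of \(-a_q/\gamma\) combine to \(\frac{\Lambda_q^a}{\gamma}(1-e^{-\gamma\Delta_q^a})\). The bid terms work the same way. The diagonal terms contribute \(-\frac1\gamma(\frac{\gamma^2\sigma^2}{2}q^2+\gamma\eta q^2)=-\frac{\gamma\sigma^2}{2}q^2-\eta q^2\). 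On an inactive boundary side the intensity \(\Lambda_q^{a}\) or \(\Lambda_q^b\) vanishes, so nothing is contributed there, and this matches \(\Delta=0\) in \eqref{eq:H}. This proves the first estimate. The only real work in the argument is keeping the constants uniform over \(\delta\) and \(\varphi\), and the compactness of \(A\) together with the box bound on \(u\) handles that.

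\textbf{The soft operator.} Write \(C_h^\delta\varphi(q)=\varphi_q+hH_q(\varphi,\delta)+h^2e_q(\delta)\) with \(\sup_\delta|e_q(\delta)|\le C\). Substituting into \eqref{eq:Th}, the constant \(\varphi_q\) factors out, and the factor \(h\) cancels against the normalization \(h\lambda\). This leaves
\[
    (T_h^\lambda\varphi)_q=\varphi_q+h\lambda\log\int_A\exp\Bigl(\frac{H_q(\varphi,\delta)+he_q(\delta)}{\lambda}\Bigr)\nu(d\delta).
\]
The map \(g\mapsto\lambda\log\int e^{g/\lambda}d\nu\) is monotone and commutes with constants, so it is \(1\)-Lipschitz in sup norm. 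Hence this expression differs from \(\varphi_q+hH_q^\lambda(\varphi)\) by at most \(h\cdot h\sup|e_q|\le Ch^2\), uniformly in \(\lambda\). This gives the second estimate.
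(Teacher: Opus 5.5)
Your proposal is correct and follows the paper's proof essentially step for step: the Feynman--Kac representation $e^{-\gamma C_h^\delta\varphi(q)}=(e^{hK^\delta}u)_q$, a second-order expansion of the matrix exponential using uniform bounds on $K^\delta$ and on $u$, the identity $(K^\delta u)_q/u_q=-\gamma H_q(\varphi,\delta)$, a first-order expansion of the logarithm, and finally the one-Lipschitz property of log-sum-exp once the factor $h$ cancels against the $h\lambda$ normalization. The only difference is that you make the remainder constants explicit (the exponential-series tail bound and $|\log(1+x)-x|\le x^2$ for $|x|\le 1/2$), where the paper writes $O(h^2)$.
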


\begin{proof}
Let \(u_q=e^{-\gamma\varphi_q}\). By \Cref{lem:finite-state-fk},
\[
    \exp\{-\gamma C_h^\delta\varphi(q)\}=(e^{hK^\delta}u)_q.
\]
Since \(K^\delta\) has uniformly bounded entries and
\(e^{-\gamma B}\le u_q\le e^{\gamma B}\),
\[
    (e^{hK^\delta}u)_q
    =u_q+h(K^\delta u)_q+O(h^2)
\]
uniformly. A direct computation from \eqref{eq:Kdelta} gives
\[
    \frac{(K^\delta u)_q}{u_q}
    =
    \frac{\gamma^2\sigma^2}{2}q^2+
    \gamma\eta q^2
    +\Lambda_q^a(\delta^a)(e^{-\gamma\Delta_q^a(\varphi,\delta)}-1)
    +\Lambda_q^b(\delta^b)(e^{-\gamma\Delta_q^b(\varphi,\delta)}-1)
    =-\gamma H_q(\varphi,\delta).
\]
Hence
\[
    (e^{hK^\delta}u)_q
    =u_q\left[1-\gamma hH_q(\varphi,\delta)+O(h^2)\right].
\]
For \(h\le h_0\), the bracket is uniformly positive. Taking
\(-\gamma^{-1}\log\) and using \(\log(1+x)=x+O(x^2)\) uniformly for small
\(x\) proves the displayed bound for \(C_h^\delta\varphi(q)\).

Write
\[
    C_h^\delta\varphi(q)
    =\varphi_q+h\{H_q(\varphi,\delta)+\varepsilon_h(\delta)\},
    \qquad
    \sup_{\delta\in A}\abs{\varepsilon_h(\delta)}\le Ch.
\]
Then
\[
    (T_h^\lambda\varphi)_q
    =\varphi_q+h\lambda\log\int_A
    \exp\left(\frac{H_q(\varphi,\delta)+\varepsilon_h(\delta)}{\lambda}\right)
    \nu(d\delta).
\]
The log-sum-exp map is one-Lipschitz under uniform perturbations of the
integrand. Thus the difference from \(\varphi_q+hH_q^\lambda(\varphi)\) is at
most \(Ch^2\), uniformly in \(\lambda\).
\end{proof}

\begin{lemma}[Measurability of Gibbs kernels]\label{lem:gibbs-measurable}
For every \(\varphi\in\R^{2Q+1}\), \(q\in\cQ\), and \(h>0\), the map
\[
    \delta\mapsto C_h^\delta\varphi(q)
\]
is continuous on \(A\). Consequently, the exact Bellman kernels
\eqref{eq:gibbs-exact} and the Hamiltonian-Gibbs kernels \eqref{eq:gibbs-ham},
extended piecewise constantly in time, are Borel stochastic kernels from
\([0,T]\times\cQ\) to \(A\).
\end{lemma}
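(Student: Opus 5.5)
The plan is to prove continuity of \(\delta\mapsto C_h^\delta\varphi(q)\) through the finite-state Feynman--Kac representation of \Cref{lem:finite-state-fk}, and then read off Borel measurability of both kernels from the explicit density formulas.

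\textbf{Continuity of the score.} By \Cref{lem:finite-state-fk},
\[
    C_h^\delta\varphi(q)=-\frac1\gamma\log\bigl(e^{hK^\delta}u\bigr)_q,
    \qquad u_q=e^{-\gamma\varphi_q}.
\]
The entries of \(K^\delta\) in \eqref{eq:Kdelta} are built from \(\Lambda^a(\delta^a)\), \(\Lambda^b(\delta^b)\), \(e^{-\gamma\delta^a}\), \(e^{-\gamma\delta^b}\) and constants. These are continuous on \(A\), since \(\Lambda^{a,b}\in C^1\) by Assumption~\ref{ass:model}. The matrix exponential is continuous in the matrix entries, being a locally uniformly convergent power series. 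Hence \(\delta\mapsto(e^{hK^\delta}u)_q\) is continuous.

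To take the logarithm we need this quantity to be strictly positive. That follows from \Cref{lem:finite-state-fk} itself, which identifies it as the expectation of a strictly positive random variable. It is also finite, by \Cref{lem:exp-moments}. Alternatively, \(K^\delta+cI\) has nonnegative entries for large \(c\), so \(e^{hK^\delta}\) has nonnegative entries with strictly positive diagonal, and \(u>0\). Composing with \(-\gamma^{-1}\log\) then gives continuity of the score.

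\textbf{Measurability of the kernels.} For fixed \(n,q\), the exact Gibbs density
\[
    \delta\mapsto\exp\bigl(C_h^\delta v_{n+1}^{h,\lambda}(q)/(h\lambda)\bigr)
\]
is continuous and bounded above and below by positive constants on the compact set \(A\). So the normalizing integral is finite and positive, and \(\pi^{\rm ex}_{n,q}\) is a probability measure absolutely continuous with respect to \(\nu\). For every Borel \(E\subset A\), the map \((t,q)\mapsto\pi_{t,q}(E)\) is piecewise constant in \(t\) on the Borel intervals \([t_n,t_{n+1})\), and \(\cQ\) is finite and discrete. It is therefore a finite sum of indicators of Borel sets times constants, hence Borel; at \(t=T\) we assign any fixed law.

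The Hamiltonian-Gibbs kernel is handled the same way. Here \(\delta\mapsto H_q(\widehat v_{n+1}^{h,\lambda},\delta)\) is continuous directly from \eqref{eq:H}, because \(\Lambda^{a,b}\) are continuous and \(\Delta_q^{a,b}\) is affine in \(\delta\).

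\textbf{Main obstacle.} The only delicate point is that \(C_h^\delta\) is defined as an expectation under a \(\delta\)-dependent probability system. Continuity in \(\delta\) is therefore not immediate from the definition. Routing the argument through the explicit matrix-exponential formula of \Cref{lem:finite-state-fk} removes this difficulty, since it turns the question into continuity of a finite-dimensional matrix function.
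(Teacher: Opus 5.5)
Your proposal is correct and follows essentially the same route as the paper's proof. Both get continuity from the matrix-exponential representation in \Cref{lem:finite-state-fk} and positivity from the expectation representation, then obtain Borel measurability of both kernels from continuous densities, strictly positive normalizers, and the piecewise-constant time extension; your Metzler-matrix positivity check and the explicit convention at \(t=T\) are valid extra details.
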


\begin{proof}
By \Cref{lem:finite-state-fk},
\[
    C_h^\delta\varphi(q)
    =
    -\frac1\gamma\log (e^{hK^\delta}u)_q,
    \qquad u_r=e^{-\gamma\varphi_r}.
\]
The entries of \(K^\delta\) are continuous in \(\delta\), and the matrix
exponential is continuous in finite dimension. The expectation representation
in \eqref{eq:finite-state-FK} gives \((e^{hK^\delta}u)_q>0\). Hence
\(\delta\mapsto C_h^\delta\varphi(q)\) is continuous. The Hamiltonian-Gibbs
kernel is Borel because \(\delta\mapsto H_q(y,\delta)\) is continuous. The
denominators in both Gibbs densities are strictly positive, and the
piecewise-constant time extension over the grid intervals is Borel.
\end{proof}

\begin{lemma}[Monotonicity and nonexpansiveness]\label{lem:nonexpansive}
The operator \(T_h^\lambda\) is monotone and translation invariant. Hence
\[
    \norm{T_h^\lambda\varphi-T_h^\lambda\psi}_\infty
    \le
    \norm{\varphi-\psi}_\infty.
\]
\end{lemma}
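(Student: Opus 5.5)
The plan is to establish monotonicity and translation invariance of the two building blocks separately, the one-step score \(\varphi\mapsto C_h^\delta\varphi(q)\) and the log-sum-exp aggregation, and then to derive nonexpansiveness by the standard sandwich argument.

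\emph{Monotonicity.} Suppose \(\varphi\le\psi\) componentwise. Then \(F_\varphi(X_h,S_h,q_h)\le F_\psi(X_h,S_h,q_h)\) pathwise, because \(F_\varphi-F_\psi=\varphi_{q_h}-\psi_{q_h}\le0\). The map \(y\mapsto e^{-\gamma y}\) is decreasing and \(-\gamma^{-1}\log\) is decreasing, so \eqref{eq:Ch} gives \(C_h^\delta\varphi(q)\le C_h^\delta\psi(q)\) for every \(\delta\in A\) and \(q\in\cQ\). Both expectations are finite and strictly positive, exactly as in \Cref{lem:J-well-defined}. One could equally read this off \Cref{lem:finite-state-fk}: \(e^{hK^\delta}\) has nonnegative entries, because its off-diagonal entries are nonnegative, so \(u\mapsto e^{hK^\delta}u\) is order preserving on positive vectors. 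In \eqref{eq:Th}, \(x\mapsto h\lambda\log\int_A e^{x(\delta)/(h\lambda)}\nu(d\delta)\) is monotone in the integrand. Hence \(T_h^\lambda\varphi\le T_h^\lambda\psi\).

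\emph{Translation invariance.} For \(c\in\R\), \(F_{\varphi+c\mathbf 1}=F_\varphi+c\). The factor \(e^{-\gamma c}\) pulls out of the expectation, which gives \(C_h^\delta(\varphi+c\mathbf 1)(q)=C_h^\delta\varphi(q)+c\). This is the same observation used in \Cref{lem:one-step-bound}. Substituting into \eqref{eq:Th}, the factor \(e^{c/(h\lambda)}\) pulls out of the integral, so \(T_h^\lambda(\varphi+c\mathbf 1)=T_h^\lambda\varphi+c\).

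\emph{Nonexpansiveness.} Let \(c:=\norm{\varphi-\psi}_\infty\). Then \(\psi-c\mathbf 1\le\varphi\le\psi+c\mathbf 1\). Monotonicity and translation invariance give
\[
    T_h^\lambda\psi-c\mathbf 1\le T_h^\lambda\varphi\le T_h^\lambda\psi+c\mathbf 1,
\]
which is the claim. There is no real obstacle here. The only points needing care are the finiteness and positivity of the exponential expectations for arbitrary \(\varphi\), so that the logarithms are defined, and this follows from \Cref{lem:exp-moments} or directly from the matrix-exponential representation.
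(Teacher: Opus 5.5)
Your proof is correct and follows the paper's argument: monotonicity and translation equivariance of the one-step certainty equivalent \(C_h^\delta\), both preserved by the log-sum-exp aggregation, then the sandwich \(\psi-c\mathbf 1\le\varphi\le\psi+c\mathbf 1\). You give more detail than the paper: the pathwise comparison of \(F_\varphi\), the alternative argument via the nonnegative matrix \(e^{hK^\delta}\), and well-definedness of the inner logarithm; for the outer logarithm, \Cref{lem:one-step-bound} combined with monotonicity bounds \(\delta\mapsto C_h^\delta\varphi(q)\), so the integral in \eqref{eq:Th} is finite.
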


\begin{proof}
The certainty equivalent is increasing and translation equivariant, and the
log-sum-exp operation preserves both properties. Hence \(T_h^\lambda\) is
monotone and translation invariant. If \(\norm{\varphi-\psi}_\infty\le r\), then
\(\psi-r\mathbf 1\le\varphi\le\psi+r\mathbf 1\). Applying monotonicity and
translation invariance gives
\[
    T_h^\lambda\psi-r\mathbf 1
    \le T_h^\lambda\varphi
    \le T_h^\lambda\psi+r\mathbf 1,
\]
which is the contraction estimate.
\end{proof}

\subsection{Completion of the main proof}

\begin{proof}[Proof of \Cref{thm:main}]
Let
\[
    E_n:=\norm{v_n^{h,\lambda}-v^\lambda(t_n)}_\infty.
\]
By \Cref{lem:nonexpansive},
\[
\begin{aligned}
    E_n
    &\le
    E_{n+1}
    +\norm{T_h^\lambda v^\lambda(t_{n+1})-v^\lambda(t_n)}_\infty.
\end{aligned}
\]
By \Cref{lem:bounds,lem:consistency},
\[
    T_h^\lambda v^\lambda(t_{n+1})
    =v^\lambda(t_{n+1})+hH^\lambda(v^\lambda(t_{n+1}))+O(h^2).
\]
Since \(v^\lambda\) solves the soft ODE and \(H^\lambda\) is bounded and
Lipschitz on the bounded region,
\[
    v^\lambda(t_n)
    =v^\lambda(t_{n+1})+hH^\lambda(v^\lambda(t_{n+1}))+O(h^2).
\]
Therefore \(E_n\le E_{n+1}+Ch^2\). Since \(E_N=0\),
\[
    \max_{0\le n\le N}E_n\le Ch.
\]
Combining this with \Cref{lem:soft-hard} proves \eqref{eq:main-rate}. The
value estimate \eqref{eq:value-rate} follows from \Cref{lem:verification}.
\end{proof}

\section{Proofs for Fresh-Sampling Policy Performance}
\label{app:policy-proofs}

\begin{lemma}[Realization of fresh-sampling relaxed policies]
\label{lem:fresh-realization}
Let \((u,q)\mapsto\pi_{u,q}\) be a Borel stochastic kernel from
\([t,T]\times\cQ\) to \(A\). Then there exists a filtered probability space
carrying a Brownian motion \(W\) and marked point measures \(M^a,M^b\) such that
the compensators are given by \eqref{eq:fresh-compensators}, the above cash and
inventory dynamics hold, and \(M^a\) and \(M^b\) have no common jumps.
\end{lemma}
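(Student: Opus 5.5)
We realize the relaxed system by predictable thinning of marked Poisson random measures, mirroring the proof of \Cref{lem:markov-realization}; the one new ingredient is the mark, drawn through a Borel parametrization of the kernel \(\pi\). By the standard randomization lemma for stochastic kernels into the Polish space \(A\), there is a Borel map
\[
    G:[t,T]\times\cQ\times[0,1]\to A
\]
such that \(\pi_{u,q}\) is the image of Lebesgue measure on \([0,1]\) under \(G(u,q,\cdot)\). For a two-dimensional target one can build \(G\) from the quantile function of the first marginal followed by the conditional quantile of the second coordinate, and Borel measurability in \((u,q)\) follows from \Cref{lem:gibbs-measurable}-type joint measurability.

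Next, take a Brownian motion \(W\) and two Poisson random measures \(\mathcal N^a,\mathcal N^b\) on \([t,T]\times[0,\infty)\times[0,1]\) with intensity \(du\,dz\,dw\). The three objects are mutually independent, and the filtration is the usual augmentation of the one they generate. The inventory is defined pathwise as the solution of
\[
    q_u=q+\int_{(t,u]}\int\!\!\int \one_{\{z\le \Lambda^b(G^b(r,q_{r-},w))\one_{\{q_{r-}<Q\}}\}}\mathcal N^b(dr,dz,dw)
    -\int_{(t,u]}\int\!\!\int \one_{\{z\le \Lambda^a(G^a(r,q_{r-},w))\one_{\{q_{r-}>-Q\}}\}}\mathcal N^a(dr,dz,dw).
\]
The marks are set as \(M^a(dr,d\delta):=\int\!\!\int\one_{\{z\le\cdots\}}\,\varepsilon_{G(r,q_{r-},w)}(d\delta)\,\mathcal N^a(dr,dz,dw)\), and similarly for \(M^b\). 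Since the accepted \(z\)-levels are bounded by \(\overline\Lambda\), only the atoms of \(\mathcal N^{a,b}\) restricted to \([t,T]\times[0,\overline\Lambda]\times[0,1]\) matter. These are a.s. finitely many with distinct times, so the equation is solved recursively from atom to atom and is pathwise unique and nonexplosive. The boundary indicators keep \(q_u\in\cQ\). Cash \(X\) is then given by the displayed integral formula, with \(S_u=s+\sigma(W_u-W_t)\).

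The main step, and the only real obstacle, is identifying the compensator. The integrand \((r,z,w)\mapsto \one_{\{z\le\Lambda^a(G^a(r,q_{r-},w))\one_{\{q_{r-}>-Q\}}\}}f(G(r,q_{r-},w))\) is \(\mathcal P\otimes\mathcal B\otimes\mathcal B\)-measurable, since \(q_{r-}\) is predictable and \(G\) and \(\Lambda^a\) are Borel. Hence for bounded Borel \(f\) the compensator of \(\int f\,dM^a\) is obtained by integrating against \(dr\,dz\,dw\). Integrating out \(z\) gives \(\Lambda^a(G^a(r,q_{r-},w))\one_{\{q_{r-}>-Q\}}\), and integrating out \(w\) by the pushforward property gives \(\one_{\{q_{r-}>-Q\}}\int_A f(\delta)\Lambda^a(\delta^a)\pi_{r,q_{r-}}(d\delta)\). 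This is exactly \eqref{eq:fresh-compensators}. The bid side is identical.

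Finally, \(M^a\) and \(M^b\) have no common jumps because \(\mathcal N^a\) and \(\mathcal N^b\) are independent Poisson measures with diffuse time marginal, so a.s. they share no atom times. The total accepted intensity is bounded by \(2\overline\Lambda\), which supplies the integrability needed to pass from local to true martingales in the compensator identification, as in \Cref{lem:exp-moments,lem:well-posedness}.
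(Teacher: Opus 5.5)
Your proposal is correct and follows essentially the same route as the paper's proof: a Borel randomization map \(G\) for the kernel \(\pi\), predictable thinning of two independent Poisson random measures carrying an extra uniform coordinate for the mark, recursive pathwise construction of the inventory, identification of the compensator by integrating out first the thinning level and then the uniform mark, and absence of common jumps from independence and the diffuse time intensity. The only step to change is the appeal to \Cref{lem:gibbs-measurable} for joint measurability of \(G\), which is misplaced because that lemma concerns only the specific Gibbs kernels; the standard randomization lemma for kernels into the compact metric space \(A\) already gives a jointly Borel \(G\) for an arbitrary Borel kernel (and with a single uniform \(w\), your two-step quantile construction implicitly requires splitting \(w\) into two independent uniforms).
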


\begin{proof}
Since \(A\) is compact metric, there exists a Borel map
\(G:[t,T]\times\cQ\times[0,1]\to A\) such that, if \(U\) is uniform on
\([0,1]\), then \(G(u,q,U)\sim\pi_{u,q}\). Let \(G=(G^a,G^b)\). Take a
Brownian motion \(W\) and two independent Poisson random measures
\(\mathcal N^a,\mathcal N^b\) on \([t,T]\times[0,1]\times[0,\infty)\), each
with intensity \(du\,d\xi\,dz\), independent of \(W\).

For every Borel set \(B\subset A\), define recursively
\[
\begin{aligned}
    M^a([t,u]\times B)
    &:=
    \int_t^u\int_0^1\int_0^\infty
    \one_{\{G(r,q_{r-},\xi)\in B\}}
    \one_{\{z\le
        \Lambda^a(G^a(r,q_{r-},\xi))\one_{\{q_{r-}>-Q\}}\}}
    \mathcal N^a(dr,d\xi,dz),\\
    M^b([t,u]\times B)
    &:=
    \int_t^u\int_0^1\int_0^\infty
    \one_{\{G(r,q_{r-},\xi)\in B\}}
    \one_{\{z\le
        \Lambda^b(G^b(r,q_{r-},\xi))\one_{\{q_{r-}<Q\}}\}}
    \mathcal N^b(dr,d\xi,dz).
\end{aligned}
\]
The accepted intensities are bounded by \(\overline\Lambda\), so the recursive
construction is nonexplosive on \([t,T]\). Hence the finite-state jump equation
has a pathwise unique nonexplosive solution driven by the accepted proposals.
For example, the compensator of \(M^a(\cdot,B)\) is
\[
    \int_t^u\int_0^1
    \one_{\{G(r,q_{r-},\xi)\in B\}}
    \Lambda^a(G^a(r,q_{r-},\xi))
    \one_{\{q_{r-}>-Q\}}
    d\xi\,dr
    =
    \int_t^u\int_B
    \one_{\{q_{r-}>-Q\}}\Lambda^a(\delta^a)
    \pi_{r,q_{r-}}(d\delta)\,dr .
\]
The bid compensator is identical. Independence and diffuse time intensity of
\(\mathcal N^a,\mathcal N^b\) imply that ask and bid accepted jumps do not
occur at the same time almost surely.
\end{proof}

\begin{lemma}[Evaluation of a fresh-sampling relaxed policy]\label{lem:relaxed-verification}
Fix a fresh-sampling relaxed Markov policy \(\pi\). Let \(u^\pi\) solve
\begin{equation}\label{eq:relaxed-policy-ode}
    -\dot u_q^\pi(t)
    =
    \int_AH_q(u^\pi(t),\delta)\pi_{t,q}(d\delta),
    \qquad
    u_q^\pi(T)=-\Phi q^2.
\end{equation}
Then
\[
    J(t,x,s,q;\pi)=x+qs+u_q^\pi(t).
\]
Moreover, \(u^\pi\) is uniformly bounded over all such policies.
\end{lemma}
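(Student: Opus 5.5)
The argument has three parts: solve the relaxed ODE and bound its solution, run the same exponential-martingale verification used in \Cref{lem:verification} but with the fresh-sampling compensator, and upgrade the local martingale to a true martingale.

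\emph{Existence and uniform bounds for \eqref{eq:relaxed-policy-ode}.} The right-hand side $\bar H_q(t,y):=\int_A H_q(y,\delta)\pi_{t,q}(d\delta)$ is Borel in $t$. It is locally Lipschitz in $y$ uniformly in $(t,\delta)$, because $H_q(\cdot,\delta)$ is. Carathéodory theory therefore gives a unique local absolutely continuous solution. To extend it to $[0,T]$ with a bound independent of $\pi$, I would repeat the Dini-derivative argument of \Cref{lem:bounds} in reversed time $\tau=T-t$:
\begin{itemize}
\item Above, $\bar H_q\le 2\overline\Lambda/\gamma$ for every $y$.
\item Below, at a minimizing index $q$ of $y$, all active increments satisfy $\Delta^{a,b}_q\ge\underline\delta$. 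This gives $H_q(y,\delta)\ge -K_-$ for every $\delta$, hence $\bar H_q\ge -K_-$.
\end{itemize}
Consequently $\|u^\pi(t)\|_\infty\le B$, with $B$ depending only on the structural constants and not on $\pi$. This is the "moreover" claim.

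\emph{Verification.} Set $V(u,x,s,q):=x+qs+u^\pi_q(u)$ and
\[
M_u:=\exp\Bigl(-\gamma\Bigl[V(u,X_u,S_u,q_u)-\int_t^u\eta q_r^2dr\Bigr]\Bigr).
\]
By \eqref{eq:self-financing} (which also holds for the marked dynamics), $X_u+q_uS_u=x+qs+\int q_{r-}dS_r+\int\!\!\int\delta^aM^a+\int\!\!\int\delta^bM^b$. Apply Itô's formula for jump semimartingales; this is valid since $u^\pi$ is absolutely continuous, so the $du$-term is handled a.e. The contributions are:
\begin{itemize}
\item The Brownian part gives the factor $\tfrac{\gamma^2\sigma^2}{2}q^2$.
\item An ask jump with mark $\delta$ multiplies $M$ by $e^{-\gamma\Delta^a_q(u^\pi,\delta)}$, and similarly for a bid jump.
\end{itemize}
Compensating the jumps with \eqref{eq:fresh-compensators} turns the jump term into $\int_A\Lambda^a_q(\delta^a)(e^{-\gamma\Delta^a_q}-1)\pi_{u,q}(d\delta)\,du$, and likewise for bids. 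Hence
\[
dM_u=-\gamma M_{u-}\Bigl[\dot u^\pi_{q_{u-}}(u)+\int_AH_{q_{u-}}(u^\pi(u),\delta)\pi_{u,q_{u-}}(d\delta)\Bigr]du+d\mathcal M_u .
\]
The bracket vanishes a.e. by \eqref{eq:relaxed-policy-ode}. This is exactly where fresh sampling matters: the compensator averages the mark over $\pi$, which matches the averaged Hamiltonian.

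\emph{True martingale property.} This is the main technical point. I would bound $M_\tau$, uniformly over stopping times $\tau\le T$, by
\[
C\exp\Bigl(C\Bigl|\int_t^\tau q_{u-}dS_u\Bigr|+C\,(\text{total marked count on }[t,\tau])\Bigr).
\]
Here I use $|u^\pi|\le B$, $|q|\le Q$ and the fact that marks lie in the compact set $A$. \Cref{lem:exp-moments} applies with total intensity bounded by $2\overline\Lambda$ and gives $\sup_\tau\E M_\tau^p<\infty$ for some $p>1$. So $M$ is of class D and is a true martingale. Therefore $\E[M_T]=M_t$, and since $u^\pi_q(T)=-\Phi q^2$, taking $-\gamma^{-1}\log$ yields $J(t,x,s,q;\pi)=x+qs+u^\pi_q(t)$.

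Two fine points need checking: that the marked stochastic integrals with respect to $M^a-\mu^a$ are genuine local martingales (the integrands are bounded and predictable), and that the Carathéodory solution is enough for Itô's formula. Both follow from boundedness and the finite state space.
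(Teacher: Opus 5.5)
Your proposal is correct and follows essentially the same route as the paper. That route is the max--min (Dini-derivative) a priori bound from \Cref{lem:bounds} applied to the averaged Hamiltonian, then the exponential It\^o computation in which the fresh-sampling compensator turns the jump drift into the $\pi$-average of $H_q$, and finally the class-D upgrade via the uniform $L^p$ bound from \Cref{lem:exp-moments}. Your Carath\'eodory treatment of kernels that are only Borel in time is a harmless refinement the paper leaves implicit; note also that the jump integrand $M_{u-}\bigl(e^{-\gamma\Delta}-1\bigr)$ is only locally bounded, not bounded, but that suffices for the local-martingale claim.
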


\begin{proof}
The averaged Hamiltonian
\[
    \bar H_q^\pi(y,t):=\int_AH_q(y,\delta)\pi_{t,q}(d\delta)
\]
is locally Lipschitz in \(y\) on bounded sets. The max-min argument in
\Cref{lem:bounds} applies to \(\bar H^\pi\), so the ODE has a global uniformly
bounded solution.

Set \(U(t,x,s,q):=x+qs+u_q^\pi(t)\) and
\[
    M_u:=
    \exp\left(
        -\gamma\left[
            U(u,X_u,S_u,q_u)-\int_t^u\eta q_r^2\,dr
        \right]
    \right).
\]
Let \(q=q_{u-}\). At a marked ask fill with mark \(\delta\), the jump of
\(U\) is \(\Delta_q^a(u^\pi(u),\delta)\), and hence
\[
    \frac{M_u}{M_{u-}}
    =
    \exp\left(-\gamma\Delta_q^a(u^\pi(u),\delta)\right).
\]
At a marked bid fill, the corresponding ratio is
\(\exp(-\gamma\Delta_q^b(u^\pi(u),\delta))\). Therefore the predictable jump
drift under \eqref{eq:fresh-compensators} is the \(\pi_{u,q}\)-average of the
deterministic quote jump drifts:
\[
\begin{aligned}
    &M_{u-}\int_A
    \left(e^{-\gamma\Delta_q^a(u^\pi(u),\delta)}-1\right)
    \Lambda_q^a(\delta^a)\pi_{u,q}(d\delta)\,du \\
    &\quad
    +M_{u-}\int_A
    \left(e^{-\gamma\Delta_q^b(u^\pi(u),\delta)}-1\right)
    \Lambda_q^b(\delta^b)\pi_{u,q}(d\delta)\,du .
\end{aligned}
\]
Combining this jump drift with the running penalty and the Brownian correction
gives the averaged Hamiltonian term. The exponential It\^o calculation yields,
for a local martingale \(\mathcal M\),
\[
    dM_u
    =
    -\gamma M_{u-}
    \left[
        \dot u_{q_{u-}}^\pi(u)+
        \int_AH_{q_{u-}}(u^\pi(u),\delta)\pi_{u,q_{u-}}(d\delta)
    \right]du
    +d\mathcal M_u.
\]
The drift vanishes by \eqref{eq:relaxed-policy-ode}. As in
\Cref{lem:verification}, for every stopping time \(\tau\le T\),
\[
    M_{\tau}
    \le
    C\exp\left(
        C\left|\int_t^{\tau}q_{u-}\,dS_u\right|
        +C\bigl[M^a([t,\tau]\times A)+M^b([t,\tau]\times A)\bigr]
    \right).
\]
The total marked count has bounded intensity, so \Cref{lem:exp-moments} gives
\[
    \sup_{\tau\le T}\E[M_\tau^p]<\infty
\]
for some \(p>1\). Hence the local martingale is of class \(D\), and therefore
is a true martingale. Evaluating at \(t\) and \(T\) gives the stated identity.
\end{proof}

\subsection{Exact Bellman policy estimates}

\begin{lemma}[Local regret of the exact Bellman Gibbs policy]\label{lem:local-regret-exact}
There exist constants \(C>0\), \(h_0>0\), and \(\lambda_0>0\) such that, for
every \(h=T/N\le h_0\), \(0<\lambda\le\lambda_0\), \(0\le n\le N-1\),
\(t\in[t_n,t_{n+1})\), and \(q\in\cQ\),
\[
    H_q^0(v^0(t))
    -\int_AH_q(v^0(t),\delta)\pi_{n,q}^{\rm ex}(d\delta)
    \le
    C\left[h+\lambda(1+\abs{\log\lambda})\right].
\]
\end{lemma}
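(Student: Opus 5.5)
The plan is to reduce the local regret at \((t,v^0(t))\) to a statement about the Gibbs measure at the grid point \((t_{n+1},v_{n+1}^{h,\lambda})\), where the Gibbs variational identity is exact. We then pay perturbation costs that are each of size \(h\) or \(\lambda(1+\abs{\log\lambda})\).

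\textbf{Step 1 (Gibbs variational inequality).} Write \(\varphi:=v_{n+1}^{h,\lambda}\) and \(g(\delta):=C_h^\delta\varphi(q)\). By \eqref{eq:Th-var}, the supremum is attained at \(\pi^{\rm ex}_{n,q}\), so
\[
    \int_A g\,d\pi^{\rm ex}_{n,q}
    =(T_h^\lambda\varphi)_q+h\lambda\KL(\pi^{\rm ex}_{n,q}\|\nu)
    \ge (T_h^\lambda\varphi)_q ,
\]
since \(\KL\ge0\). Both \(g(\delta)\) and \(T_h^\lambda\varphi\) contain the common term \(\varphi_q\); subtracting it and dividing by \(h\) gives a bound purely at the level of Hamiltonians.

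\textbf{Step 2 (one-step consistency).} By \Cref{lem:bounds}, \(\norm{\varphi}_\infty\le B\). \Cref{lem:consistency} then gives \(g(\delta)=\varphi_q+hH_q(\varphi,\delta)+O(h^2)\) uniformly in \(\delta\), and \((T_h^\lambda\varphi)_q=\varphi_q+hH_q^\lambda(\varphi)+O(h^2)\). Subtracting \(\varphi_q\) in Step 1 and dividing by \(h\), we obtain
\[
    \int_A H_q(\varphi,\delta)\,\pi^{\rm ex}_{n,q}(d\delta)\ge H_q^\lambda(\varphi)-Ch .
\]
\Cref{lem:entropy-bias} replaces \(H_q^\lambda(\varphi)\) by \(H_q^0(\varphi)-C\lambda(1+\abs{\log\lambda})\). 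Hence the regret evaluated at \(\varphi\) is at most \(C[h+\lambda(1+\abs{\log\lambda})]\).

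\textbf{Step 3 (moving from \(\varphi\) to \(v^0(t)\)).} We have
\[
    \norm{\varphi-v^0(t)}_\infty
    \le \norm{v^{h,\lambda}_{n+1}-v^0(t_{n+1})}_\infty+\norm{v^0(t_{n+1})-v^0(t)}_\infty .
\]
The first term is bounded by \Cref{thm:main}. The second is at most \(L_th\) by \Cref{lem:time-lip}, since \(\abs{t_{n+1}-t}\le h\). Both \(\varphi\) and \(v^0(t)\) lie in the ball of radius \(B\), so \Cref{lem:lip} gives \(\abs{H_q^0(\varphi)-H_q^0(v^0(t))}\le L\norm{\varphi-v^0(t)}_\infty\). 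It also gives \(\abs{H_q(\varphi,\delta)-H_q(v^0(t),\delta)}\le L\norm{\varphi-v^0(t)}_\infty\) uniformly in \(\delta\), so the same bound holds after integrating against the probability measure \(\pi^{\rm ex}_{n,q}\). Adding these two errors to the bound of Step 2 yields the claim, with \(h_0,\lambda_0\) taken from \Cref{thm:main} and \Cref{lem:consistency}.

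\textbf{Main obstacle.} The only delicate point is the division by \(h\) in Step 2. It requires the consistency errors to be genuinely \(O(h^2)\) uniformly in \(\delta\in A\), in \(\lambda\), and over the bounded ball, rather than merely \(o(h)\). \Cref{lem:consistency} supplies exactly this uniformity, since the log-sum-exp is one-Lipschitz independently of \(\lambda\), so the step goes through.
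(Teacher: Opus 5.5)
Your proof is correct and follows essentially the same route as the paper's: the Gibbs variational inequality for \(\pi^{\rm ex}_{n,q}\), one-step consistency to obtain \(\int_A H_q(\varphi,\delta)\,\pi^{\rm ex}_{n,q}(d\delta)\ge H_q^\lambda(\varphi)-Ch\), the entropy-bias lemma, and the transfer from \(\varphi=v^{h,\lambda}_{n+1}\) to \(v^0(t)\) via \Cref{thm:main}, \Cref{lem:time-lip}, and \Cref{lem:lip}. The only cosmetic difference is the order of two steps. The paper first cancels \(\varphi_q\) and applies the variational inequality to the perturbed Hamiltonian \(H_q(\varphi,\cdot)+\varepsilon_h\) at temperature \(\lambda\), whereas you apply it to the raw CE scores at temperature \(h\lambda\) and then use both consistency estimates; after dividing by \(h\) these are the same computation.
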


\begin{proof}
Let \(\varphi:=v_{n+1}^{h,\lambda}\). By \Cref{lem:consistency},
\[
    C_h^\delta\varphi(q)
    =\varphi_q+h\{H_q(\varphi,\delta)+\varepsilon_h(\delta)\},
    \qquad
    \sup_{\delta\in A}\abs{\varepsilon_h(\delta)}\le Ch.
\]
The constant \(\varphi_q/(h\lambda)\) cancels from the Gibbs density, so
\(\pi_{n,q}^{\rm ex}\) is the Gibbs measure for
\(\widetilde H_q(\delta):=H_q(\varphi,\delta)+\varepsilon_h(\delta)\). The Gibbs
variational formula implies
\[
    \int_A\widetilde H_q(\delta)\pi_{n,q}^{\rm ex}(d\delta)
    \ge
    \lambda\log\int_Ae^{\widetilde H_q(\delta)/\lambda}\nu(d\delta).
\]
Since log-sum-exp is one-Lipschitz under uniform perturbations,
\[
    \int_AH_q(\varphi,\delta)\pi_{n,q}^{\rm ex}(d\delta)
    \ge H_q^\lambda(\varphi)-Ch.
\]
Using \Cref{lem:entropy-bias},
\[
    H_q^0(\varphi)-\int_AH_q(\varphi,\delta)\pi_{n,q}^{\rm ex}(d\delta)
    \le C\lambda(1+\abs{\log\lambda})+Ch.
\]
Finally, \Cref{thm:main,lem:time-lip} gives
\[
    \norm{\varphi-v^0(t)}_\infty
    \le C\left[h+\lambda(1+\abs{\log\lambda})\right].
\]
Transfer the estimate by \Cref{lem:lip}.
\end{proof}

\begin{proof}[Proof of \Cref{cor:performance-exact}]
Let \(u^{\rm ex}\) solve the averaged policy ODE for \(\pi^{\rm ex}\). By
\Cref{lem:relaxed-verification},
\[
    J(t,x,s,q;\pi^{\rm ex})=x+qs+u_q^{\rm ex}(t).
\]
Nonnegativity follows by applying the verification
submartingale argument to \(v^0\) under the relaxed policy, since the averaged
Hamiltonian is bounded above by \(H^0\).

For the upper bound, let \(E(t):=\norm{v^0(t)-u^{\rm ex}(t)}_\infty\). The
integral forms and \Cref{lem:local-regret-exact,lem:lip} give
\[
    E(t)
    \le
    \int_t^T
    \left[
        C\{h+\lambda(1+\abs{\log\lambda})\}+L E(r)
    \right]dr.
\]
Backward Gronwall proves the claim.
\end{proof}

\subsection{Hamiltonian-Gibbs proxy estimates}

\begin{lemma}[Soft-HJB Euler error]\label{lem:soft-euler-error}
There exist constants \(C>0\) and \(h_0>0\) such that, for every
\(h=T/N\le h_0\) and \(0<\lambda\le1\),
\[
    \max_{0\le n\le N}\norm{\widehat v_n^{h,\lambda}-v^0(t_n)}_\infty
    \le
    C\left[h+\lambda(1+\abs{\log\lambda})\right].
\]
\end{lemma}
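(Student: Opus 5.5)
The estimate splits through the continuous soft value \(v^\lambda\):
\[
    \norm{\widehat v_n^{h,\lambda}-v^0(t_n)}_\infty
    \le
    \norm{\widehat v_n^{h,\lambda}-v^\lambda(t_n)}_\infty
    +\norm{v^\lambda(t_n)-v^0(t_n)}_\infty .
\]
By \Cref{lem:soft-hard}, the second term is at most \(C\lambda(1+\abs{\log\lambda})\). So it suffices to show that the explicit Euler scheme \eqref{eq:soft-euler} tracks \(v^\lambda\) with an error \(Ch\) that is uniform in \(0<\lambda\le1\).

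\textbf{Step 1: a priori bound on the scheme.} Lipschitz constants are only available on bounded sets, so I first show that \(\widehat v_n^{h,\lambda}\) stays in a fixed ball of radius \(B'\) independent of \(h,\lambda\). Fix \(y\) with \(\norm{y}_\infty\le R\). The Euler map \(y\mapsto y+hH^\lambda(y)\) is then monotone once \(h\le h_0(R)\). To see this, note that \(\partial H_q(y,\delta)/\partial y_{q\pm1}\ge0\), and the diagonal derivative is bounded by \(2\overline\Lambda e^{\gamma(2R+D_A)}\), so \(1+h\,\partial_{y_q}H_q\ge0\) for small \(h\). These sign properties pass to \(H^\lambda\), since it is a Gibbs average of \(\partial_y H_q\). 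Translation gives \(H^\lambda_q(y+c\mathbf 1)=H^\lambda_q(y)\) only up to the \(\delta\)-dependent nothing: in fact \(H_q\) depends on \(y\) only through differences, so translation invariance is exact. Hence the Euler map commutes with constants, and comparing with constant vectors (as in \Cref{lem:bounds}) gives
\[
    m_{n+1}-hK_-\le m_n\le M_n\le M_{n+1}+2h\overline\Lambda/\gamma .
\]
This is the main obstacle, because monotonicity requires a bound on \(R\) before the bound itself is known. I would close the loop by induction. Set \(R:=\Phi Q^2+(K_-+2\overline\Lambda/\gamma)T\) and choose \(h_0=h_0(R)\). Each step then keeps the iterate inside the ball of radius \(R\). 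Alternatively, one can avoid monotonicity altogether with a direct max/min argument: at the maximizing index \(q\), the increment \(hH_q^\lambda(y)\le 2h\overline\Lambda/\gamma\), and at neighbours the Lipschitz bound controls the drift. The same \(\lambda\)-free constants result.

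\textbf{Step 2: local truncation error.} By \Cref{lem:time-lip} and \Cref{lem:lip}, the map \(t\mapsto H^\lambda(v^\lambda(t))\) is Lipschitz with a \(\lambda\)-independent constant. Therefore
\[
    v^\lambda(t_n)=v^\lambda(t_{n+1})+hH^\lambda(v^\lambda(t_{n+1}))+\rho_n,
    \qquad
    \norm{\rho_n}_\infty\le Ch^2 .
\]

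\textbf{Step 3: error propagation.} Let \(e_n:=\norm{\widehat v_n^{h,\lambda}-v^\lambda(t_n)}_\infty\). Using \Cref{lem:lip} on the ball of radius \(\max(B,B')\) gives
\[
    e_n\le(1+Lh)e_{n+1}+Ch^2,
    \qquad e_N=0,
\]
so the discrete Gronwall inequality yields \(e_n\le Ce^{LT}h\). Combining this with the split above proves the lemma.
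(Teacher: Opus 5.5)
Your proof is correct and follows the paper's route: you split through \(v^\lambda\), use \Cref{lem:soft-hard} for the entropy bias, and run the standard global Euler estimate with \(\lambda\)-independent constants (an \(O(h^2)\) local truncation error from \Cref{lem:time-lip,lem:lip}, then discrete Gronwall). The only difference is your Step~1. There you bound the iterates by a discrete maximum principle, using monotonicity and translation invariance of \(y\mapsto y+hH^\lambda(y)\) on a ball fixed in advance. The paper instead just notes that for \(h\le h_0\) the iterates stay in a slightly larger ball, and your Step~3 induction already gives this, since \(e_{n+1}\le Ch\le 1\) keeps \(\widehat v_{n+1}^{h,\lambda}\) within distance one of \(v^\lambda(t_{n+1})\).
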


\begin{proof}
By \Cref{lem:bounds,lem:lip}, \(H^\lambda\) is uniformly bounded and uniformly
Lipschitz on the bounded region reached by \(v^\lambda\), with constants
independent of \(0<\lambda\le1\). Since \(H_q(y,\delta)\) is \(C^1\) in \(y\)
with uniformly bounded derivatives on bounded sets, the standard global Euler
estimate for Lipschitz ODEs is uniform in \(\lambda\). For \(h\le h_0\), the
Euler iterates remain in a slightly larger bounded region and
\[
    \max_{0\le n\le N}
    \norm{\widehat v_n^{h,\lambda}-v^\lambda(t_n)}_\infty
    \le Ch .
\]
Combining this with \Cref{lem:soft-hard} gives the claimed estimate against
\(v^0\).
\end{proof}

\begin{lemma}[Local regret of the Hamiltonian-Gibbs proxy]\label{lem:local-regret-ham}
There exist constants \(C>0\), \(h_0>0\), and \(\lambda_0>0\) such that, for
every \(h=T/N\le h_0\), \(0<\lambda\le\lambda_0\), \(0\le n\le N-1\),
\(t\in[t_n,t_{n+1})\), and \(q\in\cQ\),
\[
    H_q^0(v^0(t))
    -\int_AH_q(v^0(t),\delta)\pi_{n,q}^{\rm Ham}(d\delta)
    \le
    C\left[h+\lambda(1+\abs{\log\lambda})\right].
\]
\end{lemma}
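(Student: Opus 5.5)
The argument follows the proof of \Cref{lem:local-regret-exact}, but it is simpler because no one-step consistency perturbation $\varepsilon_h$ appears. Set $\varphi:=\widehat v_{n+1}^{h,\lambda}$. By \Cref{lem:soft-euler-error} and \Cref{lem:bounds}, $\varphi$ lies in a bounded region, say $\norm{\varphi}_\infty\le B'$, for $h\le h_0$ and $\lambda\le1$. The constants $L$ from \Cref{lem:lip} and $C$ from \Cref{lem:entropy-bias} are then taken on this slightly larger ball. Their proofs only use boundedness of $y$, so enlarging $B$ to $B'$ changes nothing else.

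\emph{Step 1 (Gibbs variational inequality).} By definition, $\pi_{n,q}^{\rm Ham}$ is exactly the Gibbs measure for $\delta\mapsto H_q(\varphi,\delta)/\lambda$ relative to $\nu$. The Gibbs variational identity gives
\[
    \lambda\log\int_Ae^{H_q(\varphi,\delta)/\lambda}\nu(d\delta)
    =\int_AH_q(\varphi,\delta)\pi^{\rm Ham}_{n,q}(d\delta)-\lambda\KL(\pi^{\rm Ham}_{n,q}\|\nu).
\]
Since $\KL\ge0$, it follows that $\int_AH_q(\varphi,\delta)\pi_{n,q}^{\rm Ham}(d\delta)\ge H_q^\lambda(\varphi)$.

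\emph{Step 2 (entropy bias).} By \Cref{lem:entropy-bias} applied at $y=\varphi$,
\[
    H_q^0(\varphi)-\int_AH_q(\varphi,\delta)\pi^{\rm Ham}_{n,q}(d\delta)\le C\lambda(1+\abs{\log\lambda}).
\]

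\emph{Step 3 (transfer to $v^0(t)$).} For $t\in[t_n,t_{n+1})$, the triangle inequality gives
\[
    \norm{\varphi-v^0(t)}_\infty\le\norm{\widehat v_{n+1}^{h,\lambda}-v^0(t_{n+1})}_\infty+\norm{v^0(t_{n+1})-v^0(t)}_\infty.
\]
The first term is at most $C[h+\lambda(1+\abs{\log\lambda})]$ by \Cref{lem:soft-euler-error}. The second is at most $L_th$ by \Cref{lem:time-lip}.

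Next, \Cref{lem:lip} bounds $\abs{H_q^0(\varphi)-H_q^0(v^0(t))}$ by $L\norm{\varphi-v^0(t)}_\infty$. It also bounds $\abs{H_q(\varphi,\delta)-H_q(v^0(t),\delta)}$ by the same quantity, uniformly in $\delta$. Integrating the pointwise bound against the probability measure $\pi_{n,q}^{\rm Ham}$ moves both terms of the regret from $\varphi$ to $v^0(t)$, at a total cost of $2L\norm{\varphi-v^0(t)}_\infty$. Combining this with Step 2 gives the claimed bound.

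The only point requiring care is that the Euler iterates $\widehat v^{h,\lambda}$ stay in a fixed bounded ball, uniformly in $h$ and $\lambda$, so that the Lipschitz and entropy-bias constants do not depend on $h$ or $\lambda$. This is already asserted in the proof of \Cref{lem:soft-euler-error}, and it also follows directly from the uniform estimate there together with \Cref{lem:bounds}. Beyond that, the argument is routine.
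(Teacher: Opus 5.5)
Your proposal is correct and follows essentially the same route as the paper's proof. Both arguments use the Gibbs variational inequality $\int_A H_q(\varphi,\delta)\pi^{\rm Ham}_{n,q}(d\delta)\ge H_q^\lambda(\varphi)$, the entropy-bias bound at $\varphi=\widehat v_{n+1}^{h,\lambda}$, closeness of $\varphi$ to $v^0(t)$ via \Cref{lem:soft-euler-error,lem:time-lip}, and a Lipschitz transfer. Your remark about enlarging $B$ so that the Euler iterates stay in the ball where the Lipschitz and entropy-bias constants apply is a detail the paper leaves implicit.
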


\begin{proof}
Let \(\varphi:=\widehat v_{n+1}^{h,\lambda}\). By the Gibbs variational formula
applied directly to \(H_q(\varphi,\cdot)\),
\[
    \int_AH_q(\varphi,\delta)\pi_{n,q}^{\rm Ham}(d\delta)
    \ge
    H_q^\lambda(\varphi).
\]
Therefore
\[
    H_q^0(\varphi)-\int_AH_q(\varphi,\delta)\pi_{n,q}^{\rm Ham}(d\delta)
    \le C\lambda(1+\abs{\log\lambda})
\]
by \Cref{lem:entropy-bias}. By \Cref{lem:soft-euler-error,lem:time-lip},
\[
    \norm{\varphi-v^0(t)}_\infty
    \le C\left[h+\lambda(1+\abs{\log\lambda})\right].
\]
The Lipschitz estimate transfers the regret bound from \(\varphi\) to
\(v^0(t)\).
\end{proof}

\begin{proof}[Proof of \Cref{prop:ham-performance}]
Let \(u^{\rm Ham}\) solve the averaged policy ODE for \(\pi^{\rm Ham}\). By
\Cref{lem:relaxed-verification},
\[
    J(t,x,s,q;\pi^{\rm Ham})=x+qs+u_q^{\rm Ham}(t).
\]
As in \Cref{cor:performance-exact}, the verification submartingale for \(v^0\)
under the relaxed policy gives nonnegativity of the performance gap.

For the upper bound, let \(E(t):=\norm{v^0(t)-u^{\rm Ham}(t)}_\infty\). Using
the integral forms of the hard ODE and the policy-evaluation ODE,
\Cref{lem:local-regret-ham}, and the Lipschitz estimate in \Cref{lem:lip}, we
obtain
\[
    E(t)
    \le
    \int_t^T
    \left[
        C\{h+\lambda(1+\abs{\log\lambda})\}+L E(r)
    \right]dr.
\]
Backward Gronwall proves the claimed bound.
\end{proof}

\section{Proofs for Quote Concentration}
\label{app:quote-proofs}

\begin{proof}[Proof of \Cref{lem:exp-curvature}]
For the ask side, write
\[
    d_q^a(t):=v_{q-1}^0(t)-v_q^0(t),
\]
and define the one-dimensional contribution
\[
    f_a(x;d):=\frac{\alpha_a}{\gamma}e^{-k_ax}
    \left(1-e^{-\gamma(x+d)}\right).
\]
Then
\[
    \frac{\partial^2 f_a}{\partial x^2}(x;d)
    =
    \frac{\alpha_a}{\gamma}e^{-k_ax}
    \left[k_a^2-(k_a+\gamma)^2e^{-\gamma(x+d)}\right].
\]
For every \(x\in[\underline\delta,\overline\delta]\) and every active ask state,
\[
    x+d_q^a(t)
    \le
    \overline\delta+
    \sup_{t,q>-Q}(v_{q-1}^0(t)-v_q^0(t))
    =D_a.
\]
Since \(D_a<\Theta_a\),
\[
    (k_a+\gamma)^2e^{-\gamma D_a}-k_a^2>0.
\]
Thus
\[
    \frac{\partial^2 f_a}{\partial x^2}(x;d_q^a(t))\le-\mu_a
\]
uniformly on the quote interval. The bid side is identical, with
\(d_q^b(t):=v_{q+1}^0(t)-v_q^0(t)\), and yields
\[\partial_{xx}f_b\le-\mu_b.\]

For a \(\mu_i\)-strongly concave function on a compact interval, any constrained
maximizer \(x^*\) satisfies
\[
    f_i(x^*)-f_i(x)\ge\frac{\mu_i}{2}\abs{x-x^*}^2.
\]
Indeed, strong concavity gives
\(f_i(x)\le f_i(x^*)+f_i'(x^*)(x-x^*)-\mu_i\abs{x-x^*}^2/2\), and the
first-order condition for a constrained maximum gives
\(f_i'(x^*)(x-x^*)\le0\).

The Hamiltonian is the sum of a constant, the active ask contribution, and the
active bid contribution. Inactive coordinates do not enter \(H_q\) and are
removed by \(P_q\). Summing the active one-dimensional inequalities gives
Assumption~\ref{ass:qg} with \(\mu=\min\{\mu_a,\mu_b\}\).
\end{proof}

\begin{proof}[Proof of \Cref{cor:quote-concentration}]
For \(\pi^{\rm ex}\), use \Cref{lem:local-regret-exact}; for
\(\pi^{\rm Ham}\), use \Cref{lem:local-regret-ham}. In either case,
\[
    H_q^0(v^0(t))-
    \int_AH_q(v^0(t),\delta)\pi_{n,q}(d\delta)
    \le
    C\left[h+\lambda(1+\abs{\log\lambda})\right].
\]
By Assumption~\ref{ass:qg},
\[
    \frac\mu2
    \int_A\dist_q^2(\delta,\cA_q^*(t))\pi_{n,q}(d\delta)
    \le
    H_q^0(v^0(t))-
    \int_AH_q(v^0(t),\delta)\pi_{n,q}(d\delta).
\]
This proves concentration. If the active optimizer is unique, Jensen's
inequality gives
\[
    \abs{P_q(\bar\delta_{n,q}^{\pi}-\delta_q^*(t))}^2
    \le
    \int_A\abs{P_q(\delta-\delta_q^*(t))}^2\pi_{n,q}(d\delta),
\]
and the remaining statements follow by integrating and summing.
\end{proof}

\section{Numerical Implementation and Reproducibility Details}
\label{app:numerical-details}

\subsection{Additional numerical tables and checks}

\begin{table}[t]
\centering
\caption{Financial diagnostics under fresh-sampling marked-Poisson simulation
with 5000 paths and common random numbers.}
\label{tab:financial-diagnostics}
\resizebox{\textwidth}{!}{%
\begin{tabular}{lrrrrrr}
\toprule
Strategy & CE & Mean reward & Std reward & Sharpe-like
& Mean terminal PnL & Time-avg. \(q^2\) \\
\midrule
Hard optimal & 0.681759 & 0.705764 & 0.701899 & 1.005507
& 0.729234 & 0.525851 \\
Gibbs proxy & 0.680443 & 0.703603 & 0.689147 & 1.020977
& 0.728519 & 0.560121 \\
Constant spread & 0.564824 & 0.576441 & 0.485755 & 1.186690
& 0.616601 & 0.888772 \\
Inventory-linear & 0.566948 & 0.578514 & 0.484631 & 1.193720
& 0.615227 & 0.837837 \\
\bottomrule
\end{tabular}%
}
\end{table}

\begin{table}[t]
\centering
\caption{Active-coordinate quote convergence and Hamiltonian regret.}
\label{tab:quote-results}
\begin{tabular}{ccccc}
\toprule
\(h\) & \(\lambda\) & theoretical scale
& squared quote error & Hamiltonian regret \\
\midrule
0.02000 & 0.050 & 0.219787 & 0.238605 & 0.418369 \\
0.01000 & 0.020 & 0.108240 & 0.088551 & 0.188497 \\
0.00500 & 0.010 & 0.061052 & 0.039221 & 0.102124 \\
0.00250 & 0.005 & 0.033992 & 0.016583 & 0.054804 \\
0.00125 & 0.002 & 0.015679 & 0.004985 & 0.023593 \\
\bottomrule
\end{tabular}
\end{table}

\begin{table}[t]
\centering
\caption{Exact Bellman one-step consistency.}
\label{tab:exact-consistency}
\begin{tabular}{cccc}
\toprule
\(h\) & consistency error & error/\(h^2\) & fitted slope \\
\midrule
0.05000 & \(2.22\cdot10^{-4}\) & 0.088670 & 1.9873 \\
0.02500 & \(5.63\cdot10^{-5}\) & 0.090029 & 1.9873 \\
0.01250 & \(1.42\cdot10^{-5}\) & 0.090721 & 1.9873 \\
0.00625 & \(3.56\cdot10^{-6}\) & 0.091071 & 1.9873 \\
\bottomrule
\end{tabular}
\end{table}

\begin{table}[t]
\centering
\caption{Exact Bellman Gibbs policy versus Hamiltonian-Gibbs proxy.}
\label{tab:exact-vs-proxy}
\begin{tabular}{cccccc}
\toprule
\(h\) & value gap & quote gap sq. & exact CE gap
& proxy CE gap & exact-proxy CE gap \\
\midrule
0.05000 & \(7.13\cdot10^{-4}\) & \(1.71\cdot10^{-6}\)
& 0.003801 & 0.003812 & \(1.07\cdot10^{-5}\) \\
0.02500 & \(3.50\cdot10^{-4}\) & \(3.71\cdot10^{-7}\)
& 0.003804 & 0.003809 & \(5.46\cdot10^{-6}\) \\
0.01250 & \(1.73\cdot10^{-4}\) & \(8.62\cdot10^{-8}\)
& 0.003805 & 0.003808 & \(2.76\cdot10^{-6}\) \\
0.00625 & \(8.60\cdot10^{-5}\) & \(2.08\cdot10^{-8}\)
& 0.003806 & 0.003807 & \(1.39\cdot10^{-6}\) \\
\bottomrule
\end{tabular}
\end{table}

\subsection{Robustness and numerical accuracy checks}
\label{subsec:robustness-checks}

We performed two additional checks.

First, we checked the Gauss--Legendre quadrature error in the action integral.
Here \(n_{\rm quad}\) denotes the number of one-dimensional Gauss--Legendre
nodes. Using 321 quadrature nodes as the reference, the maximum error over
\[
    n_{\rm quad}\in\{21,41,81,161\}
\]
is
\[
    8.20\cdot 10^{-13}.
\]
Thus action quadrature error is negligible relative to the main convergence
errors.

Second, we tested sensitivity to the risk-aversion parameter. Fixing
\[
    h=0.0025,
    \qquad
    \lambda=0.005,
\]
we swept
\[
    \gamma\in\{0.01,0.02,0.05,0.10,0.20,0.50,1,2,5,10\}.
\]
For each \(\gamma\), we computed
\[
    C_{\mathrm{emp}}^{\mathrm{value}}(\gamma)
    :=
    \frac{\|\widehat v^{h,\lambda}-v^0\|_\infty}
    {h+\lambda(1+|\log\lambda|)}
\]
and
\[
    C_{\mathrm{emp}}^{\mathrm{policy}}(\gamma)
    :=
    \frac{V^*-CE(\pi^{\rm Ham})}
    {h+\lambda(1+|\log\lambda|)}.
\]
The empirical constants remain stable for moderate risk aversion and rise in
parts of the high-risk-aversion regime. The maximum observed value constant is
\[
    \max_\gamma C_{\mathrm{emp}}^{\mathrm{value}}(\gamma)=0.8726.
\]
We use this only as a robustness check and do not claim monotonicity of the
hidden constant in \(\gamma\).

\subsection{Implementation details}

All reported value errors are computed on the grid
\[
    \{t_n=nh:0\le n\le N\}\times\cQ
\]
using the sup norm
\[
    \|e\|_\infty:=\max_{0\le n\le N}\max_{q\in\cQ}|e_{n,q}|.
\]
If \(e_q(t_n)\) denotes the corresponding quote or regret error at grid point
\((t_n,q)\), time-integrated quote and regret errors are computed by the left
Riemann rule
\[
    \sum_{q\in\cQ}\sum_{n=0}^{N-1}h\,e_q(t_n).
\]

For the continuous hard and soft ODEs, we solve the forward equations in
\(\tau=T-t\). Thus \(u(\tau)=v(T-\tau)\) satisfies
\[
    \dot u(\tau)=H(u(\tau)),\qquad u(0)=(-\Phi q^2)_{q\in\cQ},
\]
with \(H=H^0\) or \(H=H^\lambda\). The reference solution is computed by the
classical fourth-order Runge--Kutta method with step size
\(h_{\rm ref}=10^{-3}\).

For the uniform reference law on
\(A=[\underline\delta,\overline\delta]^2\), let
\(\{(\xi_i,\omega_i)\}_{i=1}^{n_{\rm quad}}\) be Gauss--Legendre nodes and
weights on \([-1,1]\), and set
\[
    d_i=\frac{\overline\delta+\underline\delta}{2}
        +\frac{\overline\delta-\underline\delta}{2}\xi_i .
\]
Then
\[
    \int_A f(\delta)\nu(d\delta)
    \approx
    \frac14\sum_{i=1}^{n_{\rm quad}}\sum_{j=1}^{n_{\rm quad}}
    \omega_i\omega_j f(d_i,d_j).
\]
For each \(q\in\cQ\), set
\[
    m_q:=\max_{i,j}H_q(y,(d_i,d_j)).
\]
The soft Hamiltonian is evaluated by the stabilized log-sum-exp formula
\[
    H_q^\lambda(y)
    =
    m_q+\lambda\log\left[
    \frac14\sum_{i,j}\omega_i\omega_j
    \exp\left(
        \frac{H_q(y,(d_i,d_j))-m_q}{\lambda}
    \right)\right].
\]
Unless otherwise stated, \(n_{\rm quad}=61\). The quadrature check uses
\(n_{\rm quad}=321\) as the reference. Exact Bellman validation uses
\(n_{\rm exact}=17\) Gauss--Legendre nodes per coordinate for the action
integral in \(T_h^\lambda\).

For exact Bellman validation, \(C_h^\delta\varphi(q)\) is computed from
\[
    C_h^\delta\varphi(q)
    =
    -\frac1\gamma
    \log\left[
        \left(e^{hK^\delta}u\right)_q
    \right],
    \qquad
    u_q=e^{-\gamma\varphi_q},
\]
with \(K^\delta\) defined in \eqref{eq:Kdelta}. The dense matrix exponential is
computed directly on the \((2Q+1)\)-dimensional inventory state space.

Monte Carlo diagnostics use \(N_{\rm MC}=5000\) paths and common random numbers
across strategies. All Monte Carlo tables are generated with the pseudo-random
seed \(s_{\rm MC}=12345\). For each path, all strategies use the same
Brownian shocks and the same dominating Poisson proposal streams. Ask and bid
proposals are generated independently with dominating rate
\(\overline\Lambda\). Conditional on a proposal at time \(u\) and inventory
\(q_{u-}\), a quote mark is sampled from the strategy kernel
\(\pi_{u,q_{u-}}\), and the proposal is accepted with probability
\[
    \frac{\Lambda^a(\delta^a)}{\overline\Lambda}\one_{\{q_{u-}>-Q\}}
    \quad\text{or}\quad
    \frac{\Lambda^b(\delta^b)}{\overline\Lambda}\one_{\{q_{u-}<Q\}},
\]
respectively. This implements the fresh-sampling compensators in
\eqref{eq:fresh-compensators}.

\end{document}